\documentclass[a4paper,11pt]{article}
\usepackage[bbgreekl]{mathbbol}
\usepackage[small]{titlesec}
\usepackage[utf8]{inputenc}
\usepackage[T1]{fontenc}  
\usepackage[backend=biber,style=numeric-comp,natbib=false,giveninits=true,doi=true,isbn=false,url=false,date=year,sorting=nyvt]{biblatex}
\DeclareNameAlias{default}{family-given}

\DeclareFieldFormat[article]{title}{#1}
\renewbibmacro{in:}{}
\DeclareFieldFormat[article]{pages}{#1}
\DeclareFieldFormat{journal}{#1}
\renewcommand\bf\bfseries

\renewbibmacro*{volume+number+eid}{%
	\printfield{volume}%
	\setunit*{\addnbspace}
	\printfield{number}%
	\setunit{\addcomma\space}%
	\printfield{eid}}
\DeclareFieldFormat[article]{number}{\mkbibparens{#1}}
\DeclareFieldFormat[article]{volume}{\textbf{#1}}
\DeclareFieldFormat{year}{\mkbibparens{#1}}
\DeclareBibliographyDriver{article}{%
	\printnames{author}:%
	\newunit\newblock
	\printfield{title}%
	\newunit\newblock
	\printfield{journaltitle}
	\newunit
	\iffieldundef{number}{\printfield{volume}}{\printfield{volume}\addspace\printfield{number}}
	\addcomma\addspace\printfield{pages}\addspace
	\printfield{year}
}
\addbibresource{mag_double_well.bib}
\AtEveryBibitem{\clearfield{month}}
\AtEveryBibitem{\clearfield{day}}
\usepackage[english]{babel}
\usepackage[T1]{fontenc}
\usepackage{csquotes}
\usepackage{bbm}
\usepackage{amsmath,amsfonts,amsthm,amsbsy,amssymb,dsfont,stmaryrd}
\usepackage[dvipsnames]{xcolor}
\usepackage{braket}

\usepackage{mathtools}


\numberwithin{equation}{section}

\newcommand\myshade{85}
\colorlet{mylinkcolor}{violet}
\colorlet{mycitecolor}{YellowOrange}
\colorlet{myurlcolor}{Aquamarine}

\usepackage[left=2.5cm,right=2.5cm,top=2.5cm,bottom=2.5cm]{geometry}
\usepackage[unicode=true,pdfusetitle,
bookmarks=true,bookmarksnumbered=false,bookmarksopen=false,
breaklinks=false,pdfborder={0 0 1},backref=false,
linkcolor  = mylinkcolor!\myshade!black,
citecolor  = mycitecolor!\myshade!black,
urlcolor   = myurlcolor!\myshade!black,
colorlinks = true,
]
{hyperref}
\usepackage[nameinlink]{cleveref}



\usepackage{tikz}
\usetikzlibrary{arrows}
\usetikzlibrary{intersections}

\usepackage{graphicx}
\usepackage{caption}

\definecolor{ct_black}{HTML}{000000}
\definecolor{ct_orange}{HTML}{ED872D}
\definecolor{ct_purple}{HTML}{7A68A6}
\definecolor{ct_blue}{HTML}{348ABD}
\definecolor{ct_turquoise}{HTML}{188487}
\definecolor{ct_red}{HTML}{E32636}
\definecolor{ct_pink}{HTML}{CF4457}
\definecolor{ct_green}{HTML}{467821}

\definecolor{ct2_green}{HTML}{9FF781}
\definecolor{ct2_green_dark}{HTML}{088A08}

\theoremstyle{plain}
\newtheorem{thm}{\protect\theoremname}[section]
\theoremstyle{plain}
\newtheorem{lem}[thm]{\protect\lemmaname}
\theoremstyle{plain}

\theoremstyle{plain}
\newtheorem{prop}[thm]{\protect\propositionname}
\theoremstyle{remark}

\newtheorem{assumption}[thm]{\protect\assumptionname}

\theoremstyle{remark}
\newtheorem{rem}[thm]{\protect\remarkname}
\theoremstyle{definition}

\theoremstyle{plain}

\providecommand{\assumptionname}{Assumption}
\providecommand{\claimname}{Claim}
\providecommand{\corollaryname}{Corollary}
\providecommand{\definitionname}{Definition}
\providecommand{\lemmaname}{Lemma}
\providecommand{\propositionname}{Proposition}
\providecommand{\remarkname}{Remark}
\providecommand{\theoremname}{Theorem}
\providecommand{\examplename}{Example}

\crefname{section}{Section}{Sections}
\crefname{appendix}{Appendix}{Appendices}
\crefname{figure}{Figure}{Figures}
\crefname{assumption}{Assumption}{Assumptions}
\crefname{thm}{Theorem}{Theorems}
\crefname{lem}{Lemma}{Lemmas}
\crefformat{equation}{(#2#1#3)}

\crefrangelabelformat{equation}{(#3#1#4--#5#2#6)}

\crefmultiformat{equation}{(#2#1#3}{, #2#1#3)}{#2#1#3}{#2#1#3}
\Crefmultiformat{equation}{(#2#1#3}{, #2#1#3)}{#2#1#3}{#2#1#3}

\newtheorem*{lem*}{\protect\lemmaname}


\newcommand{\ee}{\operatorname{e}}
\newcommand{\ii}{\operatorname{i}}

\newcommand{\RR}{\mathbb{R}}
\newcommand{\CC}{\mathbb{C}}

\newcommand{\EE}{\mathbb{E}}

\newcommand{\norm}[1]{\left\|#1\right\|}
\newcommand{\ip}[2]{\langle #1, #2 \rangle}

\newcommand{\dif}{\operatorname{d}}
\newcommand{\tr}{\operatorname{tr}}

\renewcommand{\Re}{\operatorname{\mathbb{R}\mathbbm{e}}}

\newcommand{\vf}{\varphi}
\newcommand{\Id}{\mathds{1}}
\newcommand{\HH}{\mathcal{H}}





\newcommand{\supp}{\operatorname{supp}}



\newcommand{\VV}{\mathcal{V}}

\newcommand{\vm}{|v_{min}|}
\newcommand{\vmh}{\nu}

\usepackage{environ}

\NewEnviron{malign}{%
	\begin{align}\begin{split}
			\BODY
	\end{split}\end{align}
}



\title{Lower Bound on Quantum Tunneling for Strong Magnetic Fields}
\author{Charles L. Fefferman\\
	{\footnotesize Department of Mathematics, Princeton University}
	\\Jacob Shapiro\\
	{\footnotesize Department of Physics, Princeton University}\\
	 Michael I. Weinstein\\
		\footnotesize{Department of Applied Physics and Applied Mathematics, and Department of Mathematics, Columbia University}
}


\begin{document}
	
\maketitle

\begin{abstract}
We consider a particle bound to a two-dimensional plane and a double well potential, subject to a perpendicular uniform magnetic field . The energy difference between the lowest two eigenvalues--the eigenvalue splitting--is related to the tunneling probability between the two wells. We obtain upper and lower bounds on this splitting  in the regime where both the magnetic field strength and the depth of the wells are large. The main step is a lower bound on the hopping amplitude between the wells, a key parameter in tight binding models of solid state physics, given by an oscillatory integral, whose phase has no critical point, and which is exponentially small.
\end{abstract}
 
\section{Introduction}
The quantum double-well exhibits the basic purely quantum phenomenon of tunneling. The double-well Hamiltonian  is 
\begin{equation} P^2+\lambda^2 V(x),\quad 
\label{HnoB}\end{equation}
 where $P\equiv\frac{1}i\ \nabla$ denotes the momentum operator, and 
 \begin{equation} V(x)\ =\ v_0(x)\ + v_d(x)\ \equiv\ v_0(x-0)\ +\ v_0(x-d).\label{Vdw}
 \end{equation}
 Here, $v_0(x)\le0$ is taken to be radially symmetric function on $\RR^2$, with a strictly negative minimum about $x=0$,
and decaying to zero as $|x|$ tends to infinity.  
 Thus, $V(x)$ is a double-well with two distinct minima displaced from one another by a vector $d\in\RR^2$ and $\lambda>0$ is the coupling constant. In the classical mechanics system with potential $V$, the ground (least energy) state is doubly degenerate, with contributions coming from each of the two wells. In contrast, for any fixed $d$ and $\lambda$ sufficiently large, the quantum system has a nearly degenerate pair of eigenvalues $\mu^\lambda_0<\mu^\lambda_1$.
The ground state eigenfunction with energy $\mu^\lambda_0$ is approximately given by an \emph{even} linear superposition of the ground state for a Hamiltonian with an isolated well $v_0(x)$, centered at $x=0$, and that of the Hamiltonian with an isolated well $v_d(x)$, centered at $d$. The excited state with energy $\mu^\lambda_1$ is approximately given by an \emph{odd} superposition of these states. The reciprocal of the {\it eigenvalue splitting} $\mu^\lambda_1-\mu^\lambda_0$ is a measure of the time it takes for a particle to tunnel from one well to the other. If $v_0$ has compact support of radius smaller than $|d|/2$, the splitting satisfies
  \begin{equation}
  \exp(-c_1 \lambda |d|)\ <\ \mu^\lambda_1-\mu^\lambda_0\ <\ \exp(-c_2 \lambda |d|)\ \ \textrm{for $\lambda$ sufficiently large}\,,
  \label{splitting}\end{equation}
  where $c_1>c_2$ are constants which depend on $v_0$ and $d$ but not on $\lambda$; see also \cref{eq:non-magnetic-splitting} below for other cases. 
  The calculation of the splitting is a well-known consequence of the WKB approximation \cite[Ch. 50, problem 3]{Landau_Lifshitz_vol_3}, and can also be obtained via the dilute instanton gas approximation \cite{Coleman_aspects_of_symmetry}. Rigorous calculations of the splitting (in arbitrary spatial dimension), assuming $v_0$ has a non-degenerate minimum, appeared in \cite{Simon_1984_10.2307/2007072} using either Agmon estimates or by a Laplace type asymptotics applied to the Feynman-Kac representation of the heat kernel \cite{Schilder_66_10.2307/1994588}, as well the semiclassical techniques of  \cite{Helffer_Sjostrand_1984}; see also references cited therein. The main technical challenge in these works is a \emph{lower bound} on the tunneling probability, where in all techniques
the positivity of the single well ground state plays a crucial role.
\\

 Consider now the question of a quantum particle in the two-dimensional plane confined by a  double well interacting with a perpendicular magnetic field. 
 \medskip
 
\noindent {\it Question: What is the nature of the splitting in the simultaneous limit of deep potential wells and strong magnetic field?}
\medskip

In particular, we consider the case of deep potential wells and a strong \emph{constant} magnetic field 
described by the  magnetic double-well Hamiltonian
\begin{align} H^{\lambda,b} := (P-b A x)^2+\lambda^2V(x),\quad   \lambda,b\gg1.
\label{eq:two-well Hamiltonian}\end{align} 
Here,  $V$ denotes a double well potential of the form given in \eqref{Vdw} and  $Ax$ is the vector potential in the symmetric gauge for a constant magnetic field:
 \[ Ax\ =\  \frac{1}{2} \ e_3 \wedge x=\frac12\ (x_2,-x_1) ,\quad B(x)\ =\ {\rm curl}\ b Ax\ =\ b\ e_3.\] 
 
Specifically, in this paper we assume $b$ scales with $\lambda$ and that $\lambda$ is large; in particular we make the following:
\begin{assumption}
	\label{ass:magnetic field strength scaling like potential well}
	The magnetic parameter $b$ scales like the coupling $\lambda$,\ $b=\lambda$.
\end{assumption}
\begin{rem}[Semiclassical regime] Set $\hbar=\lambda^{-1}$. Then, since
\[ H^{\lambda,\lambda} = \hbar^{-2}\Big[\ \left( \frac{\hbar}{i}\nabla - Ax \right)^2+ V(x)  \Big] ,\] 
the limit $\lambda\to\infty$ is directly related to the semiclassical limit $\hbar\to0$.
\end{rem}
The choice of scaling in \cref{ass:magnetic field strength scaling like potential well} is motivated by our application to continuum models of the integer quantum Hall effect (IQHE) in a crystal which inherit topologically non-trivial sub-band structures of a tight binding limiting Hamiltonian. See  \cite{Shapiro_Weinstein_2020} where, as an example, the Harper model \cite{PhysRevB.14.2239} is realized in this limit, and \cref{tb-red} below. Another topologically non-trivial regime arises for $\lambda$ fixed and $b$ tending to infinity; in this limit,  the Landau Hamiltonian dominates and the effect of the crystal potential washes out \cite{Bellissard:1987dy}. In contrast, the regime of fixed magnetic field strength, $b$, and  strong binding ($\lambda$ large) was shown in \cite{nakamura1990} to be topologically trivial in the limit.

Further, our results are of independent mathematical interest. Previous semi-classical approaches to the lower bound on the magnetic eigenvalue splitting and magnetic hopping were obtained only in the perturbative regime of $b$ small \cite{Helffer_Sjostrand_1987_magnetic_ASNSP_1987_4_14_4_625_0}. Moreover, it is not clear how to apply WKB asymptotics (which were successfully applied in the non-magnetic case \cite{Helffer_Sjostrand_1984}), to treat
 the current problem since the associated eikonal equation here is complex valued.

Finally, the particular $\lambda=b$ scaling, one obtains a natural comparison with the scaled magnetic harmonic oscillator, if the potential has a unique minimum, as in \cite{MATSUMOTO1995168}. 

\begin{rem}
	In this paper, $\lambda$ is a dimensionless asymptotic parameter, and we use the natural units $\hbar = c = 2m_e = q_e = 1$ ($m_e,q_e$ being the mass and respectively the charge of the electron, $c$ the speed of light). Hence, the equation $b= \lambda$, valid only in natural units, is equivalent to the universal $b = 4\frac{m_e^2 c^2}{q_e \hbar} \lambda$ and similarly throughout. We note that the objects $P^2,X^{-2}$ and $b$ all have length scaling $L^{-2}$.
\end{rem}

Let  $(\vf^{\lambda,b}_0(x), e_0^{\lambda,b})$ denote the $L^2-$ normalized ground state eigenpair
of the \emph{single} well magnetic  Hamiltonian:
\begin{equation}
 h^{\lambda,b} := (P -\ b A x)^2+\lambda^2v_0(x) \label{eq:the one-well Hamiltonian}\,.
 \end{equation}

We make the following assumption on the atomic potential well, $v_0 \leq 0$:
\begin{assumption}\label{ass:radial monotone cpt supp for v}
\begin{enumerate}
	\item[(v1)] $v_0(x)=v_0(r)$, $r=|x|$, is bounded (we denote its minimum by $v_\mathrm{min} < 0$), radial, and compactly supported: $\supp (v_0) \subset B_a(0)$ for some $a>0$.
	\item[(v2)] $\vf^{\lambda,b}_0(x)$, the ground state derived from $v_0$ via \cref{eq:the one-well Hamiltonian}, is a radial function.
	\item[(v3)] For $b=\lambda$, $h^{\lambda,\lambda}$ has a spectral gap of order at least $1$ in $\lambda$ above $e_0^{\lambda,\lambda}$, i.e., if $e_1^{\lambda,\lambda}$ is the energy for the first excited state then \begin{align} \liminf_{\lambda\to\infty}\left(e_1^{\lambda,\lambda}-e_0^{\lambda,\lambda}\right)>0\,. \label{eq:spectral gap for atomic Hamiltonian}\end{align}
\end{enumerate}
\end{assumption}

We believe that (v2) and (v3) hold for a large class of radial potentials, $v_0(r)$ and $\lambda$ sufficiently large, {\it e.g.} $v_0(r)$ is sufficiently smooth with a non-degenerate minimum at $r=0$. We plan to address this in detail in a future article. One could also come up with examples of radial potentials for which this condition is violated, e.g. a potential shaped like a Mexican hat. The problem of obtaining analogs of our results for nonradial $v_0$ remains open.

For $V\equiv0$, $H^{\lambda,b}= H^{0,b}$ is the classical Landau Hamiltonian whose spectrum consists of an infinite sequence
of discrete eigenvalues tending to infinity in arithmetic progression. Each eigenvalue, a Landau level, is of infinite multiplicity and belongs to the essential spectrum. The essential spectra of $ h^{\lambda,b}$ and $H^{\lambda,b}$ for non-zero potentials are unaffected as the potential is a relatively compact perturbation, but the discrete spectrum includes now sequences of finite multiplicity eigenvalues which converge to the Landau levels. The rate of convergence of these eigenvalues, as it depends on the decay properties of $v_0$, was studied in \cite{Raikov_Warzel_2002}.
  
Since the double well potential $V$ is non-positive, the sequence of eigenvalues of $H^{\lambda,b}$ which approaches the first Landau level converges to it from below.

{\it Our goal is to prove upper and lower bounds on  the splitting of the lowest two eigenvalues in this sequence,  $E^{\lambda,\lambda}_1-E^{\lambda,\lambda}_0$.}

A notational convention: henceforth, instead of writing $H^{\lambda,\lambda}$ and $E^{\lambda,\lambda}$ we simply write $H^\lambda$ and $E^{\lambda}$. Sometimes, to emphasize the dependence on the magnetic field,  we restore $b$ in the notation and revert to  $H^{\lambda,b}$ and $E^{\lambda,b}$. 

\begin{thm}[Bounds on magnetic eigenvalue splitting]\label{thm-split} Consider the magnetic double well Hamiltonian $H^{\lambda,b}$, \cref{eq:two-well Hamiltonian}, where the double well potential, $V$,  consists of identical compactly supported atomic wells satisfying \cref{ass:radial monotone cpt supp for v}, {\it i.e.} each atomic well is supported in a disc of radius $a>0$, with well-separation vector $d\in\RR^2$; see \eqref{Vdw}. We further assume that 

\begin{align}  
 |d|>4(\sqrt{\vm}+a)\,. \label{eq:spacing}
\end{align}
	Then, there exist positive constants: $\lambda_\star,C$ depending on $v_0,d$ and $\gamma_0$ depending on $v_0$ such that for all $\lambda\ge\lambda_\star$ the
	 separation  between the two lowest eigenvalues of $H^\lambda$ satisfies the upper and lower bounds:
	 \begin{align} \exp\left(-\frac{1}{4}\lambda\left(|d|^{2}+4\sqrt{\vm}|d|+\gamma_0\right)\right)\leq  E^{\lambda}_1-E^{\lambda}_0 \leq 
	  C\lambda^{5/2}\exp\left(-\frac{1}{4}\lambda\left((|d|-a)^2-a^2\right)\right)\,. \label{eq:main result} \end{align}
\end{thm}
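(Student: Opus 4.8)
\emph{Overview.} The plan is to reduce the two-sided estimate on the splitting to matching two-sided estimates on a single \emph{hopping amplitude} $\rho$ via a tight-binding reduction, and then to bound $\rho$, which is an oscillatory integral. Let $\vf_0\equiv\vf^\lambda_0$ denote the radial, \emph{positive} ground state of $h^\lambda=(P-\lambda Ax)^2+\lambda^2v_0$ (positivity because on radial functions $(P-\lambda Ax)^2$ acts as the ordinary Schrödinger operator $-\Delta+\tfrac{\lambda^2}{4}r^2$; here \cref{ass:radial monotone cpt supp for v}(v2) enters), and let $\psi_d:=T_d\vf_0$ be its magnetic translate by $d$, so that $\psi_d(x)=\ee^{\ii\lambda\varphi_d(x)}\vf_0(x-d)$ with $\varphi_d$ a \emph{linear} function coming from the symmetric gauge, and $\psi_d$ is the ground state, at the same energy $e^\lambda_0$, of $(P-\lambda Ax)^2+\lambda^2v_0(\cdot-d)$. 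Since $(H^\lambda-e^\lambda_0)\psi_d=\lambda^2 v_0\,\psi_d$ and $(H^\lambda-e^\lambda_0)\vf_0=\lambda^2 v_0(\cdot-d)\,\vf_0$ are each supported in the disc around the \emph{other} well, where the corresponding state is exponentially small, the single-well gap \cref{ass:radial monotone cpt supp for v}(v3), together with an IMS localisation estimate, shows that $H^\lambda$ has exactly two eigenvalues near $e^\lambda_0$, isolated from the rest of its spectrum by a gap bounded below uniformly in $\lambda$, and that $\{E^\lambda_0,E^\lambda_1\}$ agrees, up to errors of smaller exponential order than $|\rho|$, with the spectrum of $H^\lambda$ compressed to $\operatorname{span}\{\vf_0,\psi_d\}$ (orthonormalised; the overlap is exponentially small). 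The reflection $x\mapsto d-x$ followed by a gauge transformation swaps the two wells and equalises the diagonal entries of this $2\times2$ matrix, so its splitting is exactly $2|\rho|$ with
\begin{equation*}
\rho=\langle\vf_0,(H^\lambda-e^\lambda_0)\psi_d\rangle=\lambda^2\int_{B_a(0)}\vf_0(x)\,v_0(x)\,\ee^{\ii\lambda\varphi_d(x)}\,\vf_0(x-d)\,\dif x .
\end{equation*}
Thus it suffices to prove, with polynomial losses absorbed into $\gamma_0$ at the cost of $\lambda\ge\lambda_\star$, the bounds $\ee^{-\frac14\lambda(|d|^2+4\sqrt{\vm}|d|+\gamma_0)}\lesssim|\rho|\lesssim\lambda^{5/2}\ee^{-\frac14\lambda((|d|-a)^2-a^2)}$; the spacing hypothesis \eqref{eq:spacing} is what makes the reduction and the lower bound consistent by keeping the two effectively $\sqrt{\vm}$-broadened wells disjoint.

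\emph{Pointwise bounds on the atomic ground state.} For $|x|\ge a$ the potential vanishes and the radial eigenfunction solves a one-dimensional ODE with the very negative spectral parameter $e^\lambda_0=\lambda^2 v_{\min}+O(\lambda)$, so in particular $e^\lambda_0<\lambda$. Comparison on $\{|x|>a\}$ with the explicit supersolution $\ee^{-\lambda|x|^2/4}$ of $-\Delta+\tfrac{\lambda^2}{4}r^2-e^\lambda_0$ yields $\vf_0(x)\le C_\lambda\,\ee^{-\frac14\lambda(|x|^2-a^2)}$; a matching WKB subsolution $\ee^{-\lambda h(|x|)}$ with $h'(r)=\tfrac12\sqrt{r^2+4\vm+o(1)}$, combined with the elementary inequality $\sqrt{r^2+4\vm}\le r+2\sqrt{\vm}$ and a Harnack-type lower bound for $\vf_0$ on $\partial B_a(0)$, yields $\vf_0(x)\ge C_\lambda^{-1}\,\ee^{-\frac14\lambda(|x|^2+4\sqrt{\vm}\,|x|+\gamma_0)}$ for $a\le|x|\le|d|$. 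Here $C_\lambda=\operatorname{poly}(\lambda)$, and $\norm{\vf_0}_{L^\infty}=O(\lambda^{1/2})$, $\norm{\vf_0}_{L^1}=O(\lambda^{-1/2})$ by the $\lambda^{-1/2}$ concentration scale.

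\emph{Upper bound on $|\rho|$.} This direction needs no cancellation: keeping the two Gaussian factors together and integrating,
\begin{equation*}
|\rho|\le\lambda^2\vm\int_{B_a(0)}\vf_0(x)\,\vf_0(x-d)\,\dif x\le\lambda^2\vm\,C_\lambda^2\,\ee^{\lambda a^2/2}\int_{B_a(0)}\ee^{-\frac14\lambda(|x|^2+|x-d|^2)}\,\dif x .
\end{equation*}
Completing the square, $|x|^2+|x-d|^2=2|x-d/2|^2+|d|^2/2$, and estimating the Gaussian tail over $B_a(0)$, which lies at distance $|d|/2-a$ from $d/2$, one gets precisely the exponent $\tfrac14\lambda((|d|-a)^2-a^2)$, and, after collecting powers of $\lambda$, the stated $C\lambda^{5/2}\ee^{-\frac14\lambda((|d|-a)^2-a^2)}$.

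\emph{Lower bound on $|\rho|$: the crux.} Because $\varphi_d$ is linear, $\nabla\varphi_d$ never vanishes on $B_a(0)$: there is no critical point, the integrand genuinely oscillates on the scale $\lambda^{-1/2}$ of its own concentration, and the positivity of $\vf_0$—which drives the lower bound in the non-magnetic theory—no longer prevents cancellation. The plan is: (i) replace $\vf_0(x)$ near $x=0$ by its explicit Gaussian leading part $N_\lambda\ee^{-\lambda|x|^2/4}$ and $\vf_0(x-d)$ by its WKB form $\ee^{-\lambda h(|x-d|)}\times(\text{amplitude})$, keeping track of the relative remainders, which are small on $B_a(0)$; (ii) for the explicit leading part, remove the entire oscillation by completing the square, i.e.\ by a purely imaginary shift of the integration contour (the Gaussian extends to an entire function), after which the resulting incomplete Gaussian integral over $B_a(0)$ with an imaginary linear term is an explicit special function whose modulus is bounded below by $\operatorname{poly}(\lambda)^{-1}\ee^{-\lambda m}$ with $m\le\tfrac14(|d|^2+4\sqrt{\vm}\,|d|+\gamma_0)$ by the same elementary estimate as above—reality and positivity of $\vf_0$ (hence \cref{ass:radial monotone cpt supp for v}(v2)) and the spacing \eqref{eq:spacing} being used precisely here; (iii) show that the contribution of the remainders, still oscillatory, is of strictly smaller exponential order than this main term. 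Step (iii)—making the error control survive against an exponentially small main term, which is exactly what replaces the positivity argument available in the non-magnetic case—is the principal obstacle. Once it is in place, the two-sided bound on $|\rho|$, and hence \eqref{eq:main result}, follows.
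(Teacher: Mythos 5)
Your global strategy --- reduce to a $2\times2$ problem on $\mathrm{span}\{\vf_0,\vf_d\}$, identify the splitting as $2|\rho^\lambda(d)|(1+o(1))$, and prove matching two-sided bounds on the hopping coefficient --- is the same as the paper's, and your upper bound on $|\rho^\lambda(d)|$ is essentially the paper's argument. The gap is in the lower bound on $|\rho^\lambda(d)|$, which you yourself flag as ``the principal obstacle'' and do not carry out; this is exactly the step that constitutes the main content of the theorem, and your proposed route (i)--(iii) would not close it as formulated. Two concrete problems. First, the contour shift in (ii) requires the integrand to extend analytically in $x_2$ to a strip of width $\sim|d|$, but the integrand contains $v_0(x)\vf_0(x)$ with $v_0$ merely bounded and compactly supported, and with $\vf_0$ inside the well determined by a non-explicit ODE; neither is analytic. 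Moreover your Gaussian model $N_\lambda e^{-\lambda|x|^2/4}$ for $\vf_0$ near $x=0$ is not justified under \cref{ass:radial monotone cpt supp for v} (the well need not have a non-degenerate minimum); the paper deliberately uses only the crude input $\int_0^a\vf_0\,|v_0|\,r\,\dif r\gtrsim1$ (\cref{intvfv}) on $B_a(0)$. Second, and more fatally, step (iii) cannot be done with standard WKB error control: the oscillation suppresses the main term by a factor $e^{-\frac14\lambda a^2}$ relative to the integral of the absolute value of the model integrand (this is quantified at the end of Section~1 of the paper). Hence a remainder that is only $O(\lambda^{-1})$ pointwise relative to the model --- which is all a WKB expansion gives --- produces, once analyticity is lost and it can only be bounded in absolute value, an error term exponentially \emph{larger} than the main term. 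You would need the approximation of $\vf_0(x-d)$ to have relative error $o(e^{-\lambda a^2/4})$, which no expansion to finitely many orders provides.

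The paper avoids approximation entirely at this point: outside the well, $\vf_0$ equals a Gaussian times the Tricomi function $U_\alpha(\tfrac12\lambda r^2)$, which has an exact integral representation \cref{vf-out}; substituting this into the angular integral and applying the identity $\int_0^{2\pi}e^{\ii\xi\sin\theta+\beta\cos\theta}\,\dif\theta=2\pi I_0(\sqrt{\beta^2-\xi^2})$ for $\beta>\xi$ (\cref{lem:Bessel id}) performs the oscillatory integration \emph{exactly} and yields the manifestly positive kernel $L_{|d|}(r)$ of \cref{prop:ld-pos}. Positivity of $\vf_0$, of $-v_0$ and of $L_{|d|}$ then gives the lower bound by Laplace's method with no competing error terms to beat. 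To salvage your approach you would need a structural reason (not an order-of-magnitude one) why the remainder integrals cannot cancel the main term; the Bessel identity is precisely such a structural input, and some substitute for it is indispensable.
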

\begin{rem}\label{rem:wells separated}
	We note in particular that \cref{eq:spacing} implies that \begin{align}|d|>2a\,,\label{eq:minimal spacing between wells}\end{align} i.e., in our setting the supports of the two wells are disjoint.
\end{rem}

An  \emph{upper} bound on the splitting was obtained in \cite{Nakamura1999}. An asymptotic expansion 
 of the splitting for the case of weak magnetic fields was studied in \cite{Helffer_Sjostrand_1987_magnetic_ASNSP_1987_4_14_4_625_0}.
  In \cite{Bruning2013} a class of  separable potentials, facilitating a  reduction to the non-magnetic case,
 was treated. The class of potentials we treat is not covered by this work. It appears that our result is the first for a physically interesting class of potentials.
 
  At the heart of the proof of \cref{thm-split} is a lower bound on the magnitude of the \emph{magnetic hopping coefficient}, whose absolute value squared is the probability of an electron hopping, in the presence of a magnetic field, from the well centered at $x=0$ to the well centered at $x=d$: 
   \begin{align}
	\rho^{\lambda,b}\left(d\right) \ &\equiv\ \left\langle \vf_0^{\lambda,b},\ \lambda^2 v_0\ \vf_{d}^{\lambda,b}\right\rangle_{L^2(\RR^2)}\nonumber\\
	& =  \lambda^{2}\int_{\RR^2}\overline{\vf_0^{\lambda,b}\left(x\right)}\ v_0\left(x\right)\ \exp\left(\ii x\cdot b Ad\right)\vf_0^{\lambda,b}\left(x-d\right)\dif{x}\label{m-hop}\end{align}
	The magnetic hopping coefficient, $\rho^{\lambda,b}\left(d\right) $, is  an overlap integral involving $\vf_0^{\lambda,b}$, the ground state centered at $x=0$, 
	$v_0$, the atomic potential centered at $x=0$, and $\vf_{d}^{\lambda,b}(x) =  e^{\ii x\cdot\lambda Ad}\vf_0^{\lambda,b}\left(x-d\right)$, the {\it magnetic translate} of the ground state $\vf_0$ 
	 to the site $x=d$. The notion of magnetic translation is discussed in detail in \cref{sec:m-trans}. Clearly $\rho^{\lambda,b}\left(d\right) $ is a-priori a complex-valued quantity  and so lower bounds are necessarily quite subtle. Indeed the proofs of the corresponding lower bound for the non-magnetic case ($b=0$), e.g., \cite[Prop. 4.1]{FLW17_doi:10.1002/cpa.21735}, make use of the positivity of the integrand. In contrast, obtaining a lower bound for the magnetic case is much more subtle, and required a new approach; indeed $\rho^\lambda(d)$, given by $\cref{m-hop}$, is an  oscillatory integral; the integrand involves the complex valued magnetic-translate of $\vf_0^\lambda$ and the phase function of the integral has no critical points; precision asymptotic expansions of $\vf_0^\lambda$ do not address the problem at hand.
	 
	 We state the lower bound for $b=\lambda$ large:
	 
\begin{thm}[Bounds on hopping coefficient]\label{lb-hop}
Under \cref{ass:radial monotone cpt supp for v,ass:magnetic field strength scaling like potential well} as well as assuming \cref{eq:minimal spacing between wells}, there exist positive constants: $\lambda_\star,C$ depending on $v_0,d$ and $\gamma_0$ depending on $v_0$ such that for all $\lambda\ge\lambda_\star$,
\begin{equation}  \exp\left(-\frac{1}{4}\lambda\left(|d|^{2}+4\sqrt{\vm}|d|+\gamma_0\right)\right)\leq\left|\rho^{\lambda}(d)\right|\leq C\lambda^{5/2}\exp\left(-\frac{1}{4}\lambda((|d|-a)^2-a^2)\right)\,.\label{rho-bd}
\end{equation}
\end{thm}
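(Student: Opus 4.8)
\emph{Proof proposal for \cref{lb-hop}.}

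Write $H_0:=(P-\lambda Ax)^2$ for the Landau Hamiltonian and $h^\lambda:=H_0+\lambda^2v_0$, with ground pair $(\vf_0^\lambda,e_0^\lambda)$ and magnetic translate $\vf_d^\lambda(x)=\ee^{\ii\lambda x\cdot Ad}\vf_0^\lambda(x-d)$. The \emph{upper} bound in \eqref{rho-bd} is direct: the integrand in \eqref{m-hop} is supported in $\supp v_0\subset B_a(0)$, where $|x-d|\ge|d|-a\ge a$ by \eqref{eq:minimal spacing between wells}; a super-solution comparison for the radial equation below gives $\vf_0^\lambda(r)\le C\lambda^{N}\ee^{-\frac\lambda4(r^2-a^2)}$ for $r\ge a$, and $\int_{B_a(0)}|\vf_0^\lambda|\le C\lambda^{-1/2}$ by $L^2$-normalisation and $\lambda^{-1/2}$-concentration, so \eqref{m-hop} yields $|\rho^\lambda(d)|\le C\lambda^{5/2}\ee^{-\frac\lambda4((|d|-a)^2-a^2)}$.

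The crux is the \emph{lower} bound, whose difficulty is the absence of critical points of the phase in \eqref{m-hop}. The plan is to make the oscillation disappear by exploiting radial symmetry. By \cref{ass:radial monotone cpt supp for v}(v2), $\vf_0^\lambda$ lies in the zero–angular–momentum sector, on which $H_0$ acts as the \emph{non-magnetic} operator $-\partial_r^2-\tfrac1r\partial_r+\tfrac{\lambda^2}4r^2$; hence $\vf_0^\lambda=\vf_0^\lambda(r)$ is real and, by Sturm--Liouville theory, strictly positive, and for $r>a$ it is the $L^2$-decaying solution of $L_Eu:=-u''-\tfrac1ru'+(\tfrac{\lambda^2}4r^2-E)u=0$ with $E=e_0^\lambda$. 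Now use disjointness of the wells: applying Green's identity for $H_0-e_0^\lambda$ on $B_a(0)$ to $(\vf_0^\lambda,\vf_d^\lambda)$ — legitimate since $(H_0-e_0^\lambda)\vf_d^\lambda=0$ there and $\vf_0^\lambda$, together with its magnetic normal derivative, is \emph{constant} on $\partial B_a(0)$ (because $\vf_0^\lambda$ is radial and $n\cdot Ax\equiv0$ on spheres about $0$) — gives
\[
\rho^\lambda(d)=\vf_0^\lambda{}'(a)\oint_{\partial B_a(0)}\vf_d^\lambda\,\dif S-\vf_0^\lambda(a)\oint_{\partial B_a(0)}\partial_r\vf_d^\lambda\,\dif S .
\]
Since $\vf_d^\lambda$ solves $(H_0-e_0^\lambda)\vf_d^\lambda=0$ near $\overline{B_a(0)}$, its angular average $\bar v(r)$ solves $L_{e_0^\lambda}\bar v=0$ (the angular terms average out) and is regular at $0$, so $\bar v(r)=\vf_d^\lambda(0)\,c_{e_0^\lambda}(r)$, where $c_E$ is the solution of $L_Ec=0$ regular at $0$ with $c_E(0)=1$; and $\vf_d^\lambda(0)=\vf_0^\lambda(|d|)$ since the magnetic phase is $1$ at $x=0$. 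Hence $\oint_{\partial B_a(0)}\vf_d^\lambda\,\dif S=2\pi a\,c_{e_0^\lambda}(a)\vf_0^\lambda(|d|)$, $\oint_{\partial B_a(0)}\partial_r\vf_d^\lambda\,\dif S=2\pi a\,c_{e_0^\lambda}'(a)\vf_0^\lambda(|d|)$, and
\[
|\rho^\lambda(d)|=2\pi\,\kappa(\lambda)\,\vf_0^\lambda(|d|),\qquad
\kappa(\lambda):=a\big(\vf_0^\lambda(a)c_{e_0^\lambda}'(a)-\vf_0^\lambda{}'(a)c_{e_0^\lambda}(a)\big)=r\,W[\vf_0^\lambda,c_{e_0^\lambda}](r),
\]
the $r$-independent Wronskian constant of two solutions of $L_{e_0^\lambda}$, which is nonzero (and positive) because $e_0^\lambda<0$ lies below the Landau spectrum of the $m=0$ sector. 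The oscillatory integral is thus reduced to one-dimensional radial data.

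It remains to bound $\kappa(\lambda)$ and $\vf_0^\lambda(|d|)$ from below. For $\lambda$ large, $e_0^\lambda\le-c_1\lambda^2$ (trial state $\ee^{-\frac\lambda4|x|^2}$, using \cref{ass:radial monotone cpt supp for v}(v1)), so $\tfrac{\lambda^2}4r^2-e_0^\lambda>0$ everywhere, whence $c_{e_0^\lambda}\ge1$ and $c_{e_0^\lambda}'(r)=\tfrac1r\int_0^r s(\tfrac{\lambda^2}4s^2+|e_0^\lambda|)c_{e_0^\lambda}(s)\,\dif s\ge\tfrac12r|e_0^\lambda|$, while $\vf_0^\lambda(a)>0>\vf_0^\lambda{}'(a)$; hence $\kappa(\lambda)\ge a\,\vf_0^\lambda(a)c_{e_0^\lambda}'(a)\ge c\lambda^2\vf_0^\lambda(a)$. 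For the pointwise lower bound on $\vf_0^\lambda$, set $w(r):=\ee^{-\lambda\psi(r)}$, $\psi(r):=\tfrac{r^2}4+\sqrt{\vm}\,r+\tfrac{\gamma_0}4$: using $v_0\le0$ and the trivial bound $e_0^\lambda\ge\lambda^2v_\mathrm{min}$ one gets $|\psi'|^2=(\tfrac r2+\sqrt{\vm})^2\ge\tfrac14r^2+\vm\ge\tfrac14r^2+v_0(r)-\tfrac{e_0^\lambda}{\lambda^2}$ with slack $\ge\sqrt{\vm}\,r$, which for $r\ge K\lambda^{-1/2}$ ($K$ a large constant) dominates the lower-order term $\lambda^{-1}(\psi''+\psi'/r)$; so $w$ is a sub-solution of the relevant radial operator on $[K\lambda^{-1/2},\infty)$. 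As $w(K\lambda^{-1/2})=O(\ee^{-\lambda\gamma_0/4})\le\vf_0^\lambda(K\lambda^{-1/2})$ for $\lambda$ large (the latter is $\ge c\lambda^{1/2}$ by normalisation, concentration, and Harnack on the $\lambda^{-1/2}$-rescaled equation), $w/\vf_0^\lambda\to0$ at infinity, and $\vf_0^\lambda>0$ solves the equation on $\{r>K\lambda^{-1/2}\}$, the Allegretto--Piepenbrink maximum principle gives $\vf_0^\lambda(r)\ge\ee^{-\frac\lambda4(r^2+4\sqrt{\vm}\,r+\gamma_0)}$ for all $r\ge K\lambda^{-1/2}$. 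Combining with $|\rho^\lambda(d)|=2\pi\kappa(\lambda)\vf_0^\lambda(|d|)\ge c\lambda^2\vf_0^\lambda(a)\vf_0^\lambda(|d|)$ and evaluating at $r=a$ and $r=|d|$ gives $|\rho^\lambda(d)|\ge c\lambda^2\ee^{-\frac\lambda4(a^2+4\sqrt{\vm}\,a+\gamma_0)}\ee^{-\frac\lambda4(|d|^2+4\sqrt{\vm}\,|d|+\gamma_0)}\ge\ee^{-\frac\lambda4(|d|^2+4\sqrt{\vm}\,|d|+\gamma_0')}$ for $\lambda$ large with $\gamma_0'=\gamma_0'(v_0)$, i.e.\ \eqref{rho-bd}.

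The main obstacle is the closed formula $|\rho^\lambda(d)|=2\pi\kappa(\lambda)\vf_0^\lambda(|d|)$ — concretely, recognising that the magnetic phase in \eqref{m-hop} is exactly the parallel-transport factor, so that Green's identity combined with the gauge-covariant mean-value property, applied on a circle about the origin (around which radial parallel transport is trivial, because the gauge is radial \emph{and} $\vf_0^\lambda$ is radial), collapses the non-stationary oscillatory integral to a Wronskian times $\vf_0^\lambda(|d|)$; here the disjointness \eqref{eq:minimal spacing between wells} and the radiality assumption (v2) are both essential. The remaining, more technical, difficulty is the one-dimensional Agmon-type lower bound $\vf_0^\lambda(r)\gtrsim\ee^{-\frac\lambda4(r^2+4\sqrt{\vm}\,r+\gamma_0)}$: the $\sqrt{\vm}$ excess over the true $\tfrac14r^2$ decay rate is the price of using the explicit sub-solution $\ee^{-\lambda(r^2/4+\sqrt{\vm}\,r)}$ instead of solving the eikonal equation, and the delicate point is matching $w$ to $\vf_0^\lambda$ across the concentration length $r\sim\lambda^{-1/2}$, which requires the pointwise bound $\vf_0^\lambda\gtrsim\lambda^{1/2}$ there.
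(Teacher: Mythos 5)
Your lower bound takes a genuinely different route from the paper's. The paper works directly with the oscillatory integral: it substitutes the explicit Tricomi-function representation \cref{vf-out} of $\vf_0^\lambda$ outside the well, performs the angular integral exactly via the Bessel identity of \cref{lem:Bessel id} to obtain the manifestly positive kernel \cref{Ld}, and then combines Laplace asymptotics, a lower bound on the normalization constant $C_\lambda$, and the integral bound \cref{eq:intvfv}. You instead remove the oscillation structurally: Green's identity for $(P-\lambda Ax)^2-e_0^\lambda$ on $B_a(0)$ (legitimate because $n\cdot Ax\equiv 0$ on circles about the origin and $\vf_0^\lambda$ has constant Dirichlet and Neumann data there by radiality), together with the observation that the angular average of $\vf_d^\lambda$ is its $m=0$ Fourier mode and hence equals $\vf_0^\lambda(|d|)\,c_{e_0^\lambda}(r)$ with $c_{e_0^\lambda}$ the regular radial solution, yields the exact factorization $|\rho^\lambda(d)|=2\pi\,\kappa(\lambda)\,\vf_0^\lambda(|d|)$ with $\kappa$ a Wronskian constant. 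I have checked this identity; it is correct, consistent with the paper's two-sided bounds, and structurally sharper — it reduces the whole problem to a pointwise lower bound on the radial ground state, bypassing the Bessel analysis and \cref{intvfv} entirely. Your Wronskian bound $\kappa\gtrsim\lambda^2\vf_0^\lambda(a)$ and the sub-solution computation for $w=\ee^{-\lambda(r^2/4+\sqrt{\vm}\,r+\gamma_0/4)}$ are also sound.

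The one step that is not justified as written is the matching of the sub-solution at the inner boundary $r=K\lambda^{-1/2}$. You assert $\vf_0^\lambda(K\lambda^{-1/2})\ge c\lambda^{1/2}$ by ``normalisation, concentration, and Harnack'', but \cref{ass:radial monotone cpt supp for v} does not guarantee that the ground state concentrates at scale $\lambda^{-1/2}$ about the origin (that is a feature of non-degenerate minima, not of general radial compactly supported wells), so this bound is not available in the stated generality. Fortunately your argument only needs $\vf_0^\lambda(K\lambda^{-1/2})\ge \ee^{-C\lambda}$, which does follow: normalization plus the Gaussian upper bound outside $B_a$ give $\|\vf_0^\lambda\|_{L^\infty(B_{a'})}\gtrsim 1$, and Harnack's inequality for $-\Delta u=Wu$ on $B_{2a'}$ with $\|W\|_\infty\lesssim\lambda^2$ has constant $\ee^{O(\lambda)}$; one then enlarges $\gamma_0$ accordingly. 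Relatedly, the maximum principle on $[K\lambda^{-1/2},\infty)$ requires the zeroth-order coefficient $\tfrac{\lambda^2}{4}r^2+\lambda^2v_0-e_0^\lambda$ to be positive on that ray, which needs the (easily supplied, but unstated) upper bound on $e_0^\lambda+\lambda^2\vm$. With these two repairs the proof goes through and yields \cref{rho-bd}.
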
	 

\cref{lb-hop} plays an important role in the derivation of tight binding models for continuum magnetic two-dimensional crystalline (not necessarily translation invariant) structures \cite{Shapiro_Weinstein_2020}; see the discussion in \cref{tb-red}. In fact, for us, \cref{lb-hop} is a precursor of \cref{thm-split} in the sense that we prove the latter by proving the former together with the key identity \begin{align}\lim_{\lambda\to\infty}\left|\frac{E^{\lambda}_1-E^{\lambda}_0}{2|\rho^{\lambda}(d)|}-1\right|=0\,.\label{eq:key identity for eignevalue splitting}\end{align}

\begin{rem}
In our proof we make use of the positivity of $\vf_0^{\lambda,b}$ which may be arranged WLOG under the second item of \cref{ass:radial monotone cpt supp for v}.  \end{rem}

\begin{rem}
	In \cref{eq:two-well Hamiltonian} we have used the symmetric gauge, which implements a uniform magnetic field of strength $b$ perpendicular to the plane. The choice of gauge should of course not matter for the analysis but we make it here for convenience as it facilitates dealing with the radial potential $v_0$.
\end{rem}

\begin{rem}
	A purely magnetic ($V\equiv0$) tunneling effect has recently been studied in \cite{1912.04035}, where the decisive step is reduction to a non-magnetic tunneling problem \cite[eq-n (1.3)]{1912.04035}. Here, in contrast, we are interested in the tunneling amplitude between two atoms immersed in a constant perpendicular magnetic field. Our lower bound on $\rho^\lambda$ is unrelated to the non-magnetic problem due to the oscillatory integral over $\RR^2$, which contributes to further exponential dampening and is not amenable to a saddle point approximation.
\end{rem}
In concluding this section, we quantitatively comment on the dampening effect of the oscillatory phase in \cref{m-hop}. If we had used ordinary translations instead of magnetic translations in \cref{m-hop} (which of course yields an irrelevant object for the construction of an approximate spectral basis for the two-atom Hamiltonian) then the oscillatory phase in \cref{m-hop} would be absent but otherwise all other expressions are identical. The upper bound on $\rho^\lambda(d)$ is unaffected (as that didn't take the phase into account) but the lower bound would be larger by a factor $$ \exp(\frac{1}{4}\lambda a^2)\,. $$ Indeed, to adjust for this change, one would now obtain $I_0(\lambda|d|r(\frac{1}{2}+t))$, instead of \linebreak$I_0(\lambda|d|r\sqrt{t(1+t)})$, on the RHS of \cref{eq:the result of the Bessel integral}. Clearly, we already see that $\frac{1}{2}+t$ is strictly larger than the result with oscillations $\sqrt{t(1+t)}$. Proceeding from \cref{eq:the result of the Bessel integral} in the same way as we do in the proof below, we ultimately find \cref{eq:lower bound on hopping coefficient with explicit normalization constant} modified by removing the term $a^2$ in the exponential and hence the claim. We note that the comparison to the non-magnetic estimates of $\left.|\rho^{b,\lambda}(d)|\right|_{b=0}$, which can be read off from \cite[Section 15.3]{FLW17_doi:10.1002/cpa.21735}, is not very informative because the non-magnetic expression depends on the shape of the potential whereas in the magnetic case the dominant behavior is determined by the Gaussian decay (in $|d|$) caused purely by the magnetic field. 

\subsection{Remarks on possible probabilistic approaches}\label{sec:prob approach}

	In the non-magnetic case, $b=0$, it is proven in  \cite{Simon_1984_10.2307/2007072} that (using the notation $\mu$ for eigenvalues of the non-magnetic system) that the eigenvalue splitting satisfies
	 \begin{align} \lim_{\lambda\to\infty}-\frac{1}{\lambda}\ 
	 \log(\mu^\lambda_1-\mu^\lambda_0)\ = \mathcal{S}(0,d)\,. \label{eq:non-magnetic-splitting}
	 \end{align}
	 Here $\mathcal{S}(0,d)$ is the extremal Euclidean action from the origin to $d$:
	  $$ S_T(\gamma) = \int_0^T \frac{1}{2}\ |\dot{\gamma}|^2 + V\circ \gamma$$ 
	 \footnote{Note that the sign of the potential is opposite that of the usual Lagrangian of mechanics. Additionally, the asymptotic parameter in front of $V$ is absent.} evaluated on the classical solution to the equations of motion corresponding to $S_T$, with boundary conditions $\gamma(0) = 0,\gamma(T) = d$, that is $$ \mathcal{S}(0,d) \equiv \inf\{S_T(\gamma):\ \gamma:[0,T]\to\RR^2:\gamma(0) = 0,\gamma(T) = d\} . $$ 
	  $\mathcal{S}(0,d)$ also corresponds to the Agmon distance from the origin to $d$, and indeed in \cite{Simon_1984_10.2307/2007072} it is proven that $$ \mathcal{S}(0,d) = \inf \int_0^1\sqrt{2V \circ \gamma}\ |\dot{\gamma}|,$$ 
	  where the infinimum is taken over all $\gamma:[0,1]\to\RR^2$ such that $\gamma(0) = 0,\gamma(1) = d$.
	
	One might wonder whether it is possible to improve on the upper and lower bounds we derive here on $E^{\lambda}_1-E^{\lambda}_0$ in \cref{thm-split} with a leading order asymptotic formula of the form \cref{eq:non-magnetic-splitting}  in the magnetic case as well. We are not aware that a natural Agmon metric exists in the magnetic case. However, there is a Feynman-Kac-Itô formula for the heat kernel of a magnetic Hamiltonian \cite{barrysimon2004,matsumoto1990} (corresponding to the magnetic Hamiltonian $H^{\lambda,b}$ in \cref{eq:two-well Hamiltonian}): 
	\begin{equation}
	 e^{-t H}(x,y) = (2\pi t)^{-\nu/2}\ee^{-\frac{|x-y|^2}{2t}}\EE_{x\to y}\left[ \exp\left( \int_0^t \ii\frac{b}{2}(e_3 \wedge \gamma )\cdot \dif{\gamma} - \lambda^2 V\circ \gamma\right)\right] \ ,\label{FKI}\end{equation}
	 where $\EE_{x\to y}$ denotes the expectation value of $\nu$-dimensional pinned Brownian motion $\gamma$ obeying $\gamma(0)=x$ and $\gamma(t)=y$. 

	\medskip
	
\noindent {\it Question:}\	{\it Are there stationary phase type arguments for \eqref{FKI} that relate the eigenvalue splitting $E^{\lambda,b}_1-E^{\lambda,b}_0$ with the magnetic action in a manner  analogous  to
 \cref{eq:non-magnetic-splitting} via the classical Euclidean action?}

\subsection{Applications to tight-binding reduction schemes}\label{tb-red}

Consider a crystal which is described as an infinite sum of translates of an atomic well, $v_0$, where the translates may lie in a lattice, {\it e.g.} $\mathbb{Z}^2$ or a more general discrete subset of $\RR^2$.  In the strong binding regime (deep potential wells), it is sometimes possible to rigorously derive an effective, {\it  tight binding}, operator, whose  spectrum gives a good approximation of the low energy spectrum of the crystal, and which also gives information on time-dynamics for initial conditions spectrally localized on this part of the spectrum. 

The tight binding operator is a discrete operator whose coefficients encode the hopping probabilities for an electron transiting from one site to its neighbors. A key ingredient in its derivation is a comparison of the  hopping probability from an atom to its nearest neighbor atom, a distance $|d|$ away, and to its next nearest and more distant neighbors, a distance at least $x|d|>|d|$ away. Here, $x>1$ depends on the details of the crystal lattice; see, for example, \cite{FLW17_doi:10.1002/cpa.21735}. For the square lattice, $x=\sqrt2$, and for the honeycomb lattice, $x=\sqrt3$. 
 \cref{lb-hop} can be used to prove the following comparison of hopping probabilities for appropriate choices of lattice parameters. In \cite{Shapiro_Weinstein_2020}, we use such comparisons to derive tight binding approximations for continuum magnetic two-dimensional crystalline (not necessarily translation invariant) structures in the regime of 
 deep potential wells and strong constant magnetic field.

 \begin{thm}\label{comp-hop} Assume the setting of \cref{lb-hop}; in particular, $|d|>2a$. Let $x>1$ be given. Then, there is a universal constant $C_\star>0$ such that for all $\lambda > 0$,
 \begin{equation} \left|\frac{\rho^\lambda(xd)}{\rho^\lambda(d)}\right|\leq C_\star \exp\left(-\frac18\lambda (x^2-1)|d|^2\right)\label{rho-quot}\,.\end{equation}
\end{thm}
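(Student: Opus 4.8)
\textbf{Proof proposal for \cref{comp-hop}.}

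The plan is to deduce the quotient bound directly from the two-sided estimate of \cref{lb-hop}, applied once with separation vector $d$ and once with $xd$. First I would record that since $|xd| = x|d| > |d| > 2a$, the hypothesis \cref{eq:minimal spacing between wells} is satisfied for $xd$ as well, so \cref{lb-hop} gives, for all $\lambda \ge \lambda_\star$,
\begin{equation}
\left|\rho^\lambda(xd)\right| \le C\,\lambda^{5/2}\exp\left(-\tfrac14\lambda\big((x|d|-a)^2 - a^2\big)\right)
\end{equation}
for the numerator, and
\begin{equation}
\left|\rho^\lambda(d)\right| \ge \exp\left(-\tfrac14\lambda\big(|d|^2 + 4\sqrt{\vm}|d| + \gamma_0\big)\right)
\end{equation}
for the denominator. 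Dividing, the ratio is bounded by $C\lambda^{5/2}\exp(-\tfrac14\lambda\,\Delta)$ with
$\Delta = (x|d|-a)^2 - a^2 - |d|^2 - 4\sqrt{\vm}|d| - \gamma_0$.

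The next step is elementary algebra on the exponent. Expanding, $(x|d|-a)^2 - a^2 = x^2|d|^2 - 2ax|d|$, so
$\Delta = (x^2-1)|d|^2 - 2ax|d| - 4\sqrt{\vm}|d| - \gamma_0$. This is $(x^2-1)|d|^2$ minus a quantity that is linear in $|d|$ (with coefficients depending only on $x$, $a$, $\vm$) plus the constant $\gamma_0$. To absorb the lower-order terms I would split $\tfrac14(x^2-1)|d|^2$ into $\tfrac18(x^2-1)|d|^2 + \tfrac18(x^2-1)|d|^2$ and note that for $|d|$ large relative to the $v_0$-dependent constants the second half dominates $\tfrac14(2ax + 4\sqrt{\vm})|d| + \tfrac14\gamma_0$; more carefully, one keeps track of the case of small $|d|$ (still $>2a$) by bounding the finitely-constrained shortfall by a constant, and likewise the polynomial prefactor $\lambda^{5/2}$ is dominated by $\exp(\tfrac{1}{16}(x^2-1)|d|^2\lambda)$ once $\lambda$ is large, or absorbed into $C_\star$ uniformly via $\sup_{t>0} t^{5/2}e^{-ct} < \infty$. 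Collecting these estimates yields \cref{rho-quot} with a constant $C_\star$ depending a priori on $v_0$, $d$ and $x$.

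The one genuinely delicate point is the claim that $C_\star$ can be taken \emph{universal} (independent of $v_0$, $d$, $x$) and the bound valid for \emph{all} $\lambda > 0$ rather than only $\lambda \ge \lambda_\star$. For the range $\lambda \ge \lambda_\star$ the argument above gives a constant depending on $v_0, d, x$; to promote this one uses that $\rho^\lambda(\cdot)$ is continuous and nonvanishing, so on the compact parameter range $0 < \lambda \le \lambda_\star$ the ratio $|\rho^\lambda(xd)/\rho^\lambda(d)|$ times $\exp(\tfrac18\lambda(x^2-1)|d|^2)$ is bounded, and then one argues that the $v_0$- and $d$-dependence can be scaled away — here is where I expect the main work — by exploiting the explicit Gaussian form of the dominant decay: the leading behaviour of both $|\rho^\lambda(xd)|$ and $|\rho^\lambda(d)|$ is governed by $\exp(-\tfrac14\lambda |\cdot|^2)$ (the magnetic Gaussian), and the potential-dependent corrections enter both numerator and denominator in a way that cancels in the ratio up to a universal multiple. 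Concretely, I would revisit the proof of \cref{lb-hop} to extract that the potential-dependent and normalization-dependent factors in the upper bound for $\rho^\lambda(xd)$ and in the lower bound for $\rho^\lambda(d)$ agree up to universal constants, so that they divide out; the residual is then a purely Gaussian/Bessel computation with universal constants. If this uniformity turns out to require more than the black-box statement of \cref{lb-hop}, one falls back to the weaker but still useful statement with $C_\star = C_\star(v_0,d,x)$ and $\lambda \ge \lambda_\star$, which already suffices for the tight-binding application in \cite{Shapiro_Weinstein_2020}.
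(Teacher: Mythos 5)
Your main line of argument has a genuine gap: dividing the upper bound of \cref{lb-hop} for $|\rho^\lambda(xd)|$ by its lower bound for $|\rho^\lambda(d)|$ cannot yield \cref{rho-quot}, because the two bounds in \cref{lb-hop} are not tight enough relative to one another. The exponent you obtain is $-\frac14\lambda\Delta$ with $\Delta=(x^2-1)|d|^2-2ax|d|-4\sqrt{\vm}|d|-\gamma_0$, and the shortfall $\frac14\lambda\left(2ax|d|+4\sqrt{\vm}|d|+\gamma_0\right)$ is \emph{multiplied by $\lambda$}; it is not a "finitely-constrained shortfall... bounded by a constant" as you claim, but a factor $\exp(c\lambda)$ with $c>0$. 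It can only be absorbed into $\frac18\lambda(x^2-1)|d|^2$ when $(x^2-1)|d|^2$ dominates the linear-in-$|d|$ slack, i.e.\ when $x$ is bounded away from $1$ (given $|d|$, $a$, $\vm$). For $x$ close to $1$, $\Delta<0$ and your method produces a bound that \emph{grows} exponentially in $\lambda$, so not even the weakened version with $C_\star=C_\star(v_0,d,x)$ follows; the fallback you propose is therefore also unavailable. The instinct you voice at the end--that one must go back inside the proof of \cref{lb-hop} so that the potential- and normalization-dependent factors cancel exactly in the ratio--is in fact the \emph{entire} proof, not a refinement of it.

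Concretely, the paper proves the equivalent statement (\cref{lem:factorize NN out of NNN hopping}: $|\rho^\lambda((\delta+\xi)e_1)|\le C_\star e^{-\frac18\lambda(\xi^2+2\delta\xi)}|\rho^\lambda(\delta e_1)|$, applied with $\xi=(x-1)|d|$) by comparing the two hopping coefficients through the common exact representation $\rho(\delta)=\lambda^2\int_0^a\vf(r)v(r)L_\delta(r)\,\dif r$ with the non-oscillatory kernel $L_\delta(r)$ of \cref{Ld}. One peels off the factor $\exp\left(-\frac18\lambda(r^2+\delta^2)\right)$--which supplies exactly the Gaussian gain $e^{-\frac18\lambda(\xi^2+2\delta\xi)}$--and shows that the residual kernel $\tilde L_\delta(r)$ satisfies $\tilde L_{\delta+\xi}(r)\le C_\star\tilde L_\delta(r)$; this uses the two-sided Bessel estimate \cref{eq:estimate for modified Bessel} (whence the universal $C_\star=C_2/C_1$) together with the fact that the phase $\phi(\delta,r,t)=-\frac12(r^2+\delta^2)(t+\frac14)+\delta r\sqrt{t(t+1)}$ is strictly decreasing in $\delta$ once $r\le a<\delta/2$. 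Because numerator and denominator share the factors $\lambda^2\vf(r)v(r)$ and $C_\lambda$ identically, everything potential- and $\lambda$-dependent divides out, which is what makes $C_\star$ universal. To repair your proposal you would need to carry out this kernel-level comparison rather than treating \cref{lb-hop} as a black box.
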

\noindent This result is proved below as \cref{lem:factorize NN out of NNN hopping}. It is a crucial ingredient in the proof of \cite[Proposition 4.3]{Shapiro_Weinstein_2020}.

\subsection{Outline of the paper}\label{sec:outline}
In \cref{sec:single-atom} we study properties of the single atom magnetic ground state, $\vf_0$, which are necessary for our analysis. In \cref{sec:magnetic orbital subspace} we introduce a two-dimensional subspace $\VV$ of $L^2(\RR^2)$,
 the span of the atomic ground state and its magnetic translate, which approximates the eigenspace associated with the lowest two eigenvalues of $H^{\lambda}$, for $\lambda$ large. This section also contains a general outline of the proof of \cref{thm-split}. In \cref{sec:proof of the main theorem} we derive an exact reduction of the 
spectral problem for $H^{\lambda}$ acting in $L^2(\RR^2)$, for energies near the lowest two eigenvalues, to a problem on the subspace $\VV$. The asymptotic solution, for large $\lambda$,  of this reduced problem 
 requires a lower bound on the (complex) magnetic hopping coefficient $\rho^\lambda(d)$. This is proved in \cref{sec:hopping}. A number of the technical arguments in the proof of \cref{thm-split} are deferred to the appendices.

\subsection{Notation and conventions }\label{sec:notation}

\begin{enumerate}
\item In order to have less cluttered expressions in our calculations, we shall sometimes suppress the dependence of certain objects on the parameters: $\lambda$, $b=\lambda$ and $d$. We make these dependencies explicit in the statements of Theorems, Propositions etc. or when otherwise necessary.
\item $L^2=L^2(\RR^2)$ and $\ip{f}{g}=\ip{f}{g}_{L^2}$
\item $\mathcal{B}(X)$ is the space of bounded linear operators $X\to X$.
\item $h=h^{\lambda,b}$ is the single-well magnetic Hamiltonian (as defined in \cref{eq:the one-well Hamiltonian})\item $H^{\lambda,b}$ is the double-well magnetic Hamiltonian (as defined in \cref{eq:two-well Hamiltonian}) with constant magnetic field of amplitude $b$. 
\item For $b=\lambda$ we set $H^\lambda=H^{\lambda,b}$ and $E^\lambda=E^{\lambda,b}$.
\item $\vf_0$ is the ground state of $h$, $\vf_d$ is $\vf_0$ magnetically translated to $d$. Briefly we also use $\vf_m$ to denote the $m$th excited state of $h$, with $m=0,1,2,\dots$, hence strictly speaking $\vf_d$ is really $(\vf_0)_d$, but we avoid that notation and merely write $\vf_d$, since we never need $(\vf_m)_d$ for $m>0$.
\item For two functions of $\lambda$, $D(\lambda)$ and $ E(\lambda)$, we write $D(\lambda)\lesssim E(\lambda)$ (resp. $D(\lambda)\gtrsim E(\lambda)$) for $\lambda$ large if there is a constant $K$, which can be taken to be independent of $\lambda$ (for $\lambda$ sufficiently large), such that $D(\lambda)\ \le K\ E(\lambda)$ (resp.
$D(\lambda)\ \ge K\ E(\lambda)$) for 
 $\lambda\gg1$.
\end{enumerate}

\noindent\textbf{Acknowledgements:} We thank I. Corwin and Y. Lin for stimulating conversations. We also thank B. Helffer for his helpful comments on an earlier version of this article. C.L.F. was supported in part by National Science Foundation grant DMS-1700180. 
M.I.W. was supported in part by National Science Foundation grants DMS-1412560, DMS-1620418 and DMS-1908657 as well as by the Simons Foundation Math + X Investigator Award \#376319. 
J.S. acknowledges support by the Swiss National Science Foundation (grant number P2EZP2\_184228), as well as support from the Columbia University Mathematics Department and Simons Foundation Award \#376319, while a postdoctoral fellow
during 2018-2019. 
\bigskip

\section{The single-atom magnetic ground state}\label{sec:single-atom}

In this section we discuss properties of the magnetic ground state $\vf_0^{\lambda,b}(x)$, and its magnetic translate, $\vf^{\lambda,b}_{d}(x)$ (see below).
These states are used to define a subspace of {\it orbitals}, which for $\lambda$ large, approximately spans the low energy eigenspace of $H^{\lambda,b}$. \bigskip

Let $e_0^{\lambda,b}< e_1^{\lambda,b}\le e_2^{\lambda,b}\le\dots\le e_N^{\lambda,b}$ denote the first $N$ eigenvalues of $h^{\lambda,b}$ (as defined in \cref{eq:the one-well Hamiltonian}) listed with multiplicity, with corresponding eigenfunctions $\vf_m^{\lambda,b}$:

 \[ h^{\lambda,b}\vf_m^{\lambda,b}\ =\  e^{\lambda,b}\vf_m^{\lambda,b}, \quad  \|\vf_m^{\lambda,b}\|_{L^2}=1.\]
As stated, going forward, we follow the notational convention, for $b=\lambda$:  $h^\lambda=h^{\lambda,b}$, $e^{\lambda,b}=e^\lambda$ and $\vf_m^{\lambda,b}=\vf_m^{\lambda}$.

Under \cref{ass:radial monotone cpt supp for v}, the ground state eigenvalue, $e^{\lambda,b}_0$, is simple and its corresponding eigenfunction can be taken to be strictly positive: $\vf_0^{\lambda,b}>0$ and radial. 

We now reiterate the third point in \cref{ass:radial monotone cpt supp for v}: the spectral gap, $e_1^\lambda -e_0^\lambda$, is uniformly bounded away from zero for all $\lambda$ sufficiently large. Equivalently, we have \begin{assumption}\label{lem:single well gap}
Let $(\vf^{\lambda}_0,e^{\lambda}_0)$ denote the magnetic ground state eigenpair. 
There are constants $\lambda_\star, C_{\rm gap}>0$, which depend on $v_0$ and $d$, such that  
 if $\lambda>\lambda_\star$, then the following holds: For all  $\psi\in \mathrm{dom}(h^\lambda)$ such that $\ip{\vf_0^{\lambda}}{\psi}_{L^2}\ =\ 0$:
\begin{align}
\ip{\psi}{(h^{\lambda}-e^{\lambda}_0\Id)\psi}_{L^2}\ \geq C_{\rm gap}\  \norm{\psi}_{L^2}^2\ .
\label{gap}\end{align} 
\end{assumption}

\begin{rem} \cref{lem:single well gap} holds for potentials with a non-degenerate minimum \cite{Matsumoto_1994} where 
\begin{align} e_m^{\lambda}\ =\  -|v_{\rm min}|\ \lambda^2 + c_m \lambda + \mathcal{O}(1),\quad c_m>0,
\label{eq:gs-asym-alt}
\end{align}
and $c_m\sim m$ is proportional to the $m^{th}$ magnetic 2D harmonic oscillator state energy. Hence, 
in this case \cref{gap} is true also with the addition of a $\lambda$ factor on the RHS.
For piecewise constant potentials in constant magnetic fields, we \emph{expect} for all $\lambda\gg1$ that
 \begin{align} e_m^\lambda\  =\ -|v_{\rm min}|\ \lambda^2 + f_{m,\star} +  f_m(\lambda),\quad  f_{m\star}>0,\ \ f_m(\lambda)=o(1)\label{eq:gs-asym}
\end{align} 
and hence we would have $C_{\rm gap} \sim 1$; see \cite[Section 4.1, example (2)]{FLW17_doi:10.1002/cpa.21735} for the non-magnetic case.
\end{rem}

\subsection{Gaussian decay estimates for $\vf^{\lambda}_0(x)$ for $x$ outside of $\supp(v_0)$}

As discussed in the introduction, a key step in estimating the eigenvalue splitting (\cref{thm-split})
 is a proof of bounds on the magnetic hopping coefficient, $\rho^{\lambda}(d)$, defined in \eqref{m-hop}. Consider the integrand of the hopping coefficient 
  \eqref{m-hop}. 
  Since $\supp(v_0)\subset B_a(0)$, where $a< \frac12|d|$ (see \cref{eq:minimal spacing between wells}), for $x\in\supp(v_0)\subset B_a(0)$,
   we have $|x-d|>a$ and hence the function $x\mapsto \vf_0^{\lambda}(x-d)$ only samples values of $\vf_0^{\lambda}$ outside $B_a(0)$. In this region we shall make use of the following bounds on $\vf_0^{\lambda}$:

\begin{thm}\label{g-dec} There are constants $\mu_0,\mu_1, \lambda_\star>0$, which depend on $v_0$,  such that for all $\lambda>\lambda_\star$, we have the following:
The ground state $\vf^{\lambda}_0$ satisfies the Gaussian bounds:

	 \begin{align} 
	 \lambda^{-1}\exp\left(-\left(\frac14+\mu_0\right)\lambda \left(\left(|z|^2-a^2\right)+\mu_1\right)\right) \ \lesssim\ \vf_0^{\lambda}(z) \lesssim \sqrt{\lambda} \exp\left(-\frac14\lambda \left(|z|^2-a^2\right)\right) ,\quad {\rm for}\ |z|>a.
	 \label{eq:upper bound on wave function outside of well}\end{align}
\end{thm}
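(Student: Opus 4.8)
The plan is to obtain the upper and lower bounds separately, in both cases comparing $\vf_0^\lambda$ to the ground state of a reference operator for which exact Gaussian decay is available, namely the Landau Hamiltonian $(P-\lambda Ax)^2$ restricted to its lowest Landau level, whose normalized lowest-radial state is (up to phase) $\varphi_{\rm Landau}(x) = \sqrt{\lambda/(2\pi)}\,\exp(-\tfrac{\lambda}{4}|x|^2)$ with energy $\lambda$. Outside $B_a(0)$ the potential $v_0$ vanishes, so $h^\lambda$ and $H^{0,\lambda}$ agree there, and the problem is to control the solution of an elliptic PDE in the exterior region given its values on the circle $|x|=a$ and the fact that $e_0^\lambda = -|v_{\rm min}|\lambda^2 + O(\lambda)$ sits far below the first Landau level.

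For the \textbf{upper bound} I would run an Agmon/Combes--Thomas type exponential-weight argument adapted to the magnetic setting. Set $F(x) = \tfrac14\lambda|x|^2$ and consider $e^{F}\vf_0^\lambda$; the diamagnetic inequality $|\nabla|u|| \le |(P-\lambda Ax)u|$ pointwise reduces matters to a scalar Schrödinger estimate for $|\vf_0^\lambda|$, and then a standard integration-by-parts with the weight $e^{2F}$ on the exterior region $\{|x|>a\}$ shows that $\int_{|x|>a} e^{2F}(|\nabla|\vf_0^\lambda||^2 + (\lambda^2 v_0 - e_0^\lambda - |\nabla F|^2)|\vf_0^\lambda|^2)$ is controlled by a boundary term at $|x|=a$; since $|\nabla F|^2 = \tfrac{\lambda^2}{4}|x|^2$ is exactly the effective potential created by the lowest Landau level and $e_0^\lambda$ is below it (with room of order $\lambda$), the quadratic form is coercive outside a compact set, giving $\int_{|x|>a}e^{\lambda|x|^2/2}|\vf_0^\lambda|^2 \lesssim e^{\lambda a^2/2}$ after absorbing the boundary data using elliptic regularity and the normalization $\|\vf_0^\lambda\|_{L^2}=1$. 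Converting this weighted $L^2$ bound to the pointwise bound with the clean power $\sqrt\lambda$ is done by a local elliptic (sub-mean-value / Moser) estimate on balls of radius $\sim\lambda^{-1/2}$, on which the weight is essentially constant; the scaling of these balls produces exactly the $\sqrt\lambda$ prefactor.

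For the \textbf{lower bound} I would exploit positivity: $\vf_0^\lambda>0$ by item (v2) of Assumption \ref{ass:radial monotone cpt supp for v}, and radiality reduces $h^\lambda$ in the zero-angular-momentum sector to a one-dimensional operator on $(0,\infty)$ in the variable $r$, whose potential outside $r=a$ is the pure Landau term $\tfrac{\lambda^2}{4}r^2$ (plus the $1/r^2$ centrifugal piece, which is lower order). One can then use a one-dimensional ODE comparison (a Riccati or barrier argument): construct an explicit subsolution of the form $c\lambda^{-1}\exp(-(\tfrac14+\mu_0)\lambda(|z|^2+\mu_1))$ with $\mu_0>0$ small and $\mu_1$ chosen so that on the circle $|z|=a$ this quantity lies below $\vf_0^\lambda$ (using a quantitative lower bound on $\vf_0^\lambda$ on $\partial B_a$, e.g. via Harnack applied inside the well where $\vf_0^\lambda$ has $L^2$-mass bounded below), and verify that $(h^\lambda - e_0^\lambda)$ applied to it is $\le 0$ in $\{|z|>a\}$ — the extra slack $\mu_0$ absorbs the $O(\lambda)$ gap between $e_0^\lambda$ and the bottom of the Landau effective potential as well as the centrifugal term. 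The maximum principle for $h^\lambda-e_0^\lambda$ (which has no spectrum below $e_0^\lambda$, hence satisfies a comparison principle on exterior domains) then forces $\vf_0^\lambda \ge$ the subsolution everywhere outside $B_a(0)$, which is the claimed lower bound after reading off constants.

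The \textbf{main obstacle} I anticipate is making the constants on the boundary circle $|z|=a$ quantitative and uniform in $\lambda$: one needs both an upper and a (strictly positive) lower bound on $\vf_0^\lambda$ restricted to $\partial B_a(0)$ with the right $\lambda$-powers, and this requires transferring the $L^2$-normalization and the spectral-gap information (Assumption \ref{lem:single well gap}) into pointwise control inside and on the boundary of the well, presumably via Harnack inequalities with $\lambda$-dependent geometry. A secondary technical point is handling the centrifugal $1/r^2$ term and the magnetic cross-term $-\lambda L_z$ carefully enough that they genuinely are lower-order corrections to the dominant $\tfrac{\lambda^2}{4}r^2$ confinement; the restriction to the radial sector (v2) is what keeps this manageable, since there $L_z$ acts as zero.
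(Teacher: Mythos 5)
Your overall architecture (weighted estimate for the upper bound, barrier plus comparison principle for the lower bound) is a legitimate alternative to the paper's route, which instead solves the exterior problem exactly: outside $\supp(v_0)$ the radial equation \cref{rgea} is Kummer's equation, so $\vf_{\rm out}(r)=C_\lambda e^{-\lambda r^2/4}\,U_\alpha(\lambda r^2/2)$ with Tricomi's function, and both bounds follow from Laplace asymptotics of the integral representation \cref{vf-out} together with two-sided control of the normalization constant $C_\lambda$ (normalization for the upper bound; a Gronwall-type estimate on $E(r)=\lambda^2\vf^2+(\vf')^2$ plus \cref{intvfv} for the lower bound). However, your upper-bound step has a genuine gap: the diamagnetic/Kato inequality yields only $-\Delta|\vf|+\lambda^2 v_0|\vf|\le e_0^\lambda|\vf|$, i.e.\ it \emph{discards} the magnetic confinement entirely. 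Outside the well this reads $-\Delta|\vf|\le e_0^\lambda|\vf|$ with $-e_0^\lambda\approx\vm\lambda^2$, and the Agmon positivity condition $-e_0^\lambda-|\nabla F|^2\ge0$ then caps $|\nabla F|\lesssim\lambda$, so the best decay obtainable along this route is $e^{-c\lambda|x|}$, not the Gaussian $e^{-\lambda|x|^2/4}$. Indeed the quadratic form you write down, with $|\nabla F|^2=\tfrac{\lambda^2}{4}|x|^2$ and no compensating $+\tfrac{\lambda^2}{4}|x|^2$ term, is \emph{anti}-coercive for $|x|$ large. The harmonic effective potential survives only in the zero-angular-momentum sector, so you should drop the diamagnetic step and use (v2) directly: run the weighted estimate (or an ODE comparison) on $-\partial_r^2-\tfrac1r\partial_r+\tfrac{\lambda^2}{4}r^2-e_0^\lambda$, where $F=\tfrac{\lambda}{4}r^2$ is admissible with margin $-e_0^\lambda\sim\lambda^2$.

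The lower-bound barrier argument is structurally sound: on $\{r>a\}$ the operator $-\Delta+\tfrac{\lambda^2}{4}r^2-e_0^\lambda$ is positive, so the comparison principle applies, and a Harnack inequality with constant $e^{C\lambda}$ (the potential has size $\lambda^2$ on order-one balls) does supply the needed boundary datum on $\partial B_a$ at the price of the constant $\mu_1$ — this parallels the paper's Gronwall argument for the lower bound on $C_\lambda$. But your calibration of $\mu_0$ is off: the quantity the extra decay must absorb is not an $O(\lambda)$ gap but $-e_0^\lambda\approx\vm\lambda^2$, so checking $(h^\lambda-e_0^\lambda)w\le0$ for $w=e^{-(\frac14+\mu_0)\lambda r^2}$ on $r>a$ forces $2\mu_0 a^2\ge\vm+O(\lambda^{-1})$, i.e.\ $\mu_0$ is an order-one constant depending on $v_0$ and $a$ — consistent with the paper's $\mu_0=2(\vm/2)^{3/4}a^{-3/2}$, but not ``small.'' Two further caveats: the precise prefactors $\sqrt\lambda$ and $\lambda^{-1}$ are not automatic in your scheme (Moser/Harnack constants on balls of radius $\lambda^{-1/2}$ are not uniform, since $\|V\|_\infty r^2\sim\lambda$ there), and the explicit Tricomi representation the paper extracts here is reused essentially in \cref{prop:ld-pos}, so a purely comparison-based proof of this theorem would not by itself support the later analysis of the hopping coefficient.
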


\noindent In the sequel, we use the intermediate estimate which plays a role in the proof of \cref{eq:upper bound on wave function outside of well}: \begin{align}\exp\left(-2\lambda(\vm/2)^{3/4}a^{-3/2}|z|^{2}\right)\lesssim \frac{\vf_0^\lambda(z)}{\frac{C_\lambda}{\sqrt{\lambda}}\exp\left(-\frac14\lambda |z|^2\right)}\lesssim 1\qquad(|z|>a)\,.\label{eq:upper bound on wave function outside of well keeping the normalization constant}\end{align} Here, $C_\lambda$, introduced just below in \cref{vf-out}, is a strictly positive $\lambda$-dependent constant (independent of $z$) whose upper and lower bounds, derived in the proof below, yield \cref{eq:upper bound on wave function outside of well}.

\begin{rem}
	We mention in passing the more general Gaussian decay results on eigenfunctions found in \cite{Erdos1996,Nakamura_doi:10.1080/03605309608821214}. There, using either the Feynman-Kac representation for $\vf^\lambda_m$ or PDE techniques respectively, one obtains Gaussian upper bounds (but not lower bounds) on $\vf^\lambda_m$, or any eigenfunction below the essential spectrum for potentials that decay at infinity but are not necessarily either compactly supported or radial.
\end{rem}

\begin{proof}[Proof of \cref{g-dec}] We write $\vf=\vf_0^{\lambda},v=v_0$ and $e=e_0^{\lambda}$ for brevity.
Since the ground state $\vf$ is radial (zero angular momentum), we have
\begin{eqnarray*}
	\left(-\Delta+\frac{1}{4}\lambda^{2}|x|^{2}+\lambda^{2}v\left(|x|\right)-e\Id\right)\vf & = & 0,\quad\ \vf>0,\quad 
	 \vf\in L^2(\RR^2).
\end{eqnarray*}
In  polar coordinates $\left(r,\theta\right)$, for
$r>a$:\begin{align}
	\left(-\partial_{r}^{2}-\frac{1}{r}\partial_{r}+\frac{1}{4}\lambda^{2}r^{2}-e\Id\right)\vf & = 0\,.
\label{rgea}\end{align}
Define
\begin{align*}
\vf\left(r\right) & =  e^{-\frac{1}{2}y}\psi\left(y\right),\quad \textrm{where}\ \ y =\frac{1}{2}\lambda r^{2}.
\end{align*} 
Then, \eqref{rgea} implies that $\psi(y)$ satisfies  Kummer's differential 
equation  
\begin{align}
	y\psi''\left(y\right)+\left(1-y\right)\psi'\left(y\right)-\alpha\psi\left(y\right) & =  0,\quad {\rm with}\ \alpha:=\frac{\lambda-e}{2\lambda}\label{kummer}\,.
\end{align}
Since asymptotically $e\approx -\lambda^2\vm$, we define \begin{align}\nu := \frac{\alpha}{\lambda}\,,\label{eq:def of v-star}\end{align} a positive quantity of order 1.

Equation \eqref{kummer} has two linearly independent solutions, $M_{\alpha}(y)$ (the Kummer
function) and $U_{\alpha}(y)$ (Tricomi's function). $M_\alpha(y)$ is regular at the origin but grows as $y$ tends to infinity (in a way that after multiplying by the Gaussian makes it non-normalizable),
while $U_{\alpha}(y)$ is singular at $y=0$ and decays as $y$ tends to infinity. Hence the ground state $\vf(r)$,  for $r>a$ must be a multiple of $U_{\alpha}\left(\frac{1}{2}\lambda r^{2}\right)$ and the Gaussian pre-factor.

An appropriate integral representation for
$U_{\alpha}(y)$ is given in \cite[pp. 129]{Temme_2014_asymp_methods_For_integrals}: 
\begin{eqnarray*}
	U_{\alpha}\left(y\right) & = & \frac{1}{\Gamma\left(\alpha\right)}\int_{0}^{\infty}\exp\left(-yt\right)t^{\alpha-1}\left(1+t\right)^{-\alpha}\dif{t}\qquad\left(\Re\left\{ \alpha\right\} >0\right)
\end{eqnarray*}
Hence we find that for $r>a$ (\emph{outside} the support of $v$), 
\begin{align}
	\vf_{\rm out}(r)\ \equiv\ \vf\left(r\right)\Big|_{r> a} & = C_\lambda\exp\left(-\frac{1}{4}\lambda r^{2}\right)\int_{0}^{\infty}\exp\left(-\frac{1}{2}\lambda r^{2}t\right)t^{\alpha-1}\left(1+t\right)^{-\alpha}\dif{t},
\label{vf-out}\end{align}
where the constant $C_\lambda>0$ is fixed by the normalization condition $\|\vf\|_{L^2(\RR^2)}=1$, determined jointly by \cref{vf-out} and  $\vf(r)$ for  $0\le r\le a$ (for which we have less information as it depends on the details of $v$). 
\begin{rem}We note in passing that for the Landau Hamiltonian ($V\equiv0$) in the symmetric gauge 
	\begin{equation}
	(P-\lambda A x)^2, \label{eq:Landau}
	\end{equation}
	the requirement of regularity of eigenstates at $r=0$ implies that any $L^2$ eigenstate with eigenvalue $e^\lambda_0$ is proportional to  $M_{\alpha}\left(\frac12 \lambda r^2\right)$,
	where $\alpha= (\lambda-e)/(2\lambda)$.
	Furthermore, the requirement of decay as  $r$ tends to infinity is also satisfied  if and only if  
	$\alpha$ is a non-positive integer.
	This implies the quantization condition of
	Landau levels: 
	$
	-m = \alpha\equiv (\lambda-e)/(2\lambda)$, where $m=0,1,2,\dots
	$. Equivalently,  	$e^\lambda_m  =  \lambda\left(2m+1\right)$, $m=0,1,2,\dots$.
\end{rem}
Resuming the proof of \cref{g-dec} now, before controlling $C_\lambda$, we obtain the dominant behavior of the integral in \cref{vf-out} for large $\lambda$.  
We rewrite that integral as
 $$ \int_{0}^{\infty}\frac{1}{t}\ee^{-\lambda\left(\frac{1}{2}r^{2}t+\vmh\log\left(1+\frac{1}{t}\right)\right)}\dif{t}\,, $$ where we have replaced $\alpha/\lambda$ by $\vmh$ according to \cref{eq:def of v-star}.

  The function within the exponential takes its minimum at \begin{align} t_{\star}\left(r\right):=\frac{1}{2}\left(\sqrt{1+\vmh\frac{8}{r^{2}}}-1\right)>0. \label{eq:definition of t-star}\end{align}
A judicious application of Laplace's approximation for the integral \cite[Section 2.4]{erdelyi1956asymptotic} yields
 \begin{align} \vf(r) \approx C_\lambda\sqrt{\frac{2\pi}{\lambda\vmh}\left(1+\frac{t_{\star}\left(r\right)^{2}}{1+2t_{\star}\left(r\right)}\right)}\exp\left(-\lambda\left(\frac{1}{4}(1+2t_{\star}\left(r\right))r^{2}+\vmh\log\left(1+\frac{1}{t_{\star}\left(r\right)}\right)\right)\right)\qquad(r>a)\,. \label{eq:exact asymptotic form of ground state outside the well not including asymptotics on the normalization constant}
 \end{align}
 
In particular for $\lambda$ sufficiently large we may derive simpler upper and lower bounds of $\vf$ from the right hand side of \cref{eq:exact asymptotic form of ground state outside the well not including asymptotics on the normalization constant}. For example, using that \begin{align*}\frac{1}{2}t_{\star}\left(r\right)r^{2}+\vmh\log\left(1+\frac{1}{t_{\star}\left(r\right)}\right)&=\vmh\left[\frac{1}{2}t_{\star}\left(r\right)\frac{r^{2}}{\vmh}+\log\left(1+\frac{1}{t_{\star}\left(r\right)}\right)\right] \\ &\leq\vmh 2\sqrt{\frac{r}{\sqrt{\vmh}}} = 2\vmh^{3/4}\sqrt{r}\end{align*} and $r\geq a$ (so $\sqrt{r}\leq a^{-3/2} r^2$) we obtain the lower bound
 \begin{align}\vf(r) \gtrsim \frac{C_\lambda}{\sqrt{\lambda}}\exp\left(-\lambda\left(\frac{1}{4}+2\vmh^{3/4}a^{-3/2}\right)r^{2}\right),\quad r>a\,,
 \label{simpler lower bound on ground state outside well not including normalization constant}\end{align}
and clearly we have the upper bound:
 \begin{align}\vf(r)\lesssim \frac{C_\lambda}{\sqrt{\lambda}}\exp\left(-\frac14\lambda r^2\right),\quad r>a\,.\label{eq:obvious upper bound on ground state}\end{align} This establishes \cref{eq:upper bound on wave function outside of well keeping the normalization constant}.

We next use these upper and lower bounds on $\vf(r)$ for $r>a$ to obtain upper and lower bounds on $C_\lambda$, using the normalization condition  \begin{align}
\norm{\varphi}_{B_{a}}^{2}+\norm{\varphi_{\rm out}}_{\mathbb{R}^{2}\setminus B_{a}}^{2} = 1 \label{eq:normalization constraint}\,.\end{align}

Let us begin with the upper bound on $C_\lambda$.
 For $\lambda\gg1$, the term 
$\norm{\varphi_{\rm out}}_{\mathbb{R}^{2}\setminus B_{a}}$ is asymptotically given by
 \begin{align} \norm{\varphi_{\rm out}}_{\mathbb{R}^{2}\setminus B_{a}}^{2} \approx \frac{2 \pi C_\lambda^2}{\lambda\vmh}\int_a^{\infty}r \left(1+\frac{t_{\star}\left(r\right)^{2}}{1+2t_{\star}\left(r\right)}\right) \exp\left(-2\lambda\left(\frac{1}{4}(1+2t_{\star}\left(r\right))r^{2}+\vmh\log\left(1+\frac{1}{t_{\star}\left(r\right)}\right)\right)\right) \dif{r}\,. \label{eq:normalization outside of the well asymptotic formula}\end{align}
 
We next asymptotically evaluate the integral in \cref{eq:normalization outside of the well asymptotic formula}. 
We note that the function $$ q(r) := \frac{1}{4}(1+2t_{\star}\left(r\right))r^{2}+\vmh\log\left(1+\frac{1}{t_{\star}\left(r\right)}\right) $$ is monotone increasing so its minimum is located at $r=a$, the endpoint of the integration interval. 
Therefore the asymptotic evaluation of 
the integral within \cref{eq:normalization outside of the well asymptotic formula} yields
 \begin{align} \norm{\varphi_{\rm out}}_{\mathbb{R}^{2}\setminus B_{a}}^{2} \approx \frac{2 \pi a C_\lambda^2}{\lambda^2\vmh q'(a)}  \left(1+\frac{t_{\star}\left(a\right)^{2}}{1+2t_{\star}\left(a\right)}\right) \exp\left(-2\lambda\left(\frac{1}{4}(1+2t_{\star}\left(a\right))a^{2}+\vmh\log\left(1+\frac{1}{t_{\star}\left(a\right)}\right)\right)\right).\label{eq:asymptotics on normalization integral outside the well}\end{align} 
 
Solving \cref{eq:asymptotics on normalization integral outside the well} for $C_\lambda$ and using that  $\norm{\varphi_{\rm out}}_{\mathbb{R}^{2}\setminus B_{a}}^{2} \leq 1 $ yields the upper bound: 
 \begin{align}C_\lambda \leq  \lambda\left(\frac{2 \pi a }{ \vmh q'(a)}\left(1+\frac{t_{\star}\left(a\right)^{2}}{1+2t_{\star}\left(a\right)}\right)\right)^{-1/2} \exp\left(+\lambda\left(\frac{1}{4}(1+2t_{\star}\left(a\right))a^{2}+\vmh\log\left(1+\frac{1}{t_{\star}\left(a\right)}\right)\right)\right)\,.
\label{eq:upper bound on normalization coefficient}\end{align}
Inserting this upper bound on $C_\lambda$ into \cref{eq:exact asymptotic form of ground state outside the well not including asymptotics on the normalization constant} (and not into the simpler \cref{eq:obvious upper bound on ground state}, as we require a tighter upper bound) yields
 \begin{align}\vf(r) \lesssim\sqrt{\lambda}  \frac{\exp\left(-\lambda\left(\frac{1}{4}(1+2t_{\star}\left(r\right))r^{2}+\vmh\log\left(1+\frac{1}{t_{\star}\left(r\right)}\right)\right)\right)}{\exp\left(-\lambda\left(\frac{1}{4}(1+2t_{\star}\left(a\right))a^{2}+\vmh\log\left(1+\frac{1}{t_{\star}\left(a\right)}\right)\right)\right)}\qquad(r>a)\,.\label{eq:full upper bound on ground state outside the well}\end{align}
Since both $r\mapsto t_\star(r)r^2$ and $r\mapsto \log(1+\frac{1}{t_\star(r)})$ are separately monotone increasing and $r>a$ we indeed get $$\vf(r) \lesssim \sqrt{\lambda} \exp(-\frac{1}{4}\lambda(r^2-a^2))$$ which is corresponds to the upper bound in \cref{eq:upper bound on wave function outside of well}.

To finish the proof of \cref{g-dec} we need to obtain a lower bound on the constant $C_\lambda$ to be used along with \cref{simpler lower bound on ground state outside well not including normalization constant}. To that end, let us define the function
 $$ E(r) := \lambda^2\vf^2(r)+(\vf^\prime(r))^2\qquad (r>0)\,. $$ 
 Differentiating $E$ and replacing  $\vf''$ by its expression in terms of $\vf$ in \cref{rgea} yields
  $$ E'(r) = \lambda\tilde{v}\left(r\right)2\lambda\varphi\left(r\right)\varphi'\left(r\right)-2\frac{1}{r}\varphi'\left(r\right)^{2} ,$$ 
where $\tilde{v}(r) := \frac{1}{4}r^{2}+v\left(r\right)-e/\lambda^2+1$. We have the basic bound $$|\tilde{v}(r)|\leq \frac14 R^2 +2\vm +1 =: A$$ for all $r\in[0,R]$ for all $R>0$, where we have used $|e|\leq \lambda^2\vm$.

Now fix $R\geq a$ and let $r\in\left(0,R\right]$. Then, we have
$E^\prime(r)\ge -\left(\lambda A + \frac{2}{r}\right)E(r) $. Integration over the interval $[r,R]$ implies 
\[ \lambda^2\vf^2(r)\ \le\ E(r) \leq R^{2}E\left(R\right)\exp\left(\lambda A\left(R-r\right)\right)r^{-2},\quad r\leq R. \]
Taking the square root and then multiplying by $r\vf(r)$ yields:
\[ \lambda r\ \vf^2(r)\ \leq R\sqrt{E\left(R\right)}\exp\left(\frac12\lambda A\left(R-r\right)\right)\vf(r),\quad r\leq R. \]
We have the bound  $\norm{\vf}_{L^{\infty}_{B_R}} \leq Q \lambda^2$ by \cref{intvfv} for some positive constant $Q$,
stated just below (we emphasize the proof of it does not rely on the lower bound on $C_\lambda$).
Hence, 

\[ \lambda r\ \vf^2(r)\ \leq Q \lambda^2 R\exp\left(\frac12\lambda A\left(R-r\right)\right)\sqrt{E\left(R\right)},\quad r\leq R. \]
Integration over $0\le r\le a$ gives a bound on $\|\vf\|^2_{L^2(B_a)}$, and since $\vf$ is normalized in $L^2(\RR^2)$:
\begin{align*}
1\ &\leq\ 
 \frac{4\pi Q R}{A}\ \exp\left(\frac12\lambda AR\right)\sqrt{E\left(R\right)}\ +\ 
 \|\vf_{\rm out}\|^2_{L^2(\RR^2\setminus B_a)} \end{align*}
 which by \cref{eq:asymptotics on normalization integral outside the well} gives:
 \[
1\ \lesssim\ \exp\left(\frac12\lambda AR\right)\ \sqrt{E\left(R\right)}\ +\ 
\frac{1}{\lambda^2}C_\lambda^2\exp(-\frac12\lambda a^2)\ .\]
 
We next obtain an upper bound for $E(R)=\lambda^2\vf_{\rm out}^2(R)+\vf_{\rm out}^\prime(R)^2$  in terms of $C_\lambda$ from \cref{eq:obvious upper bound on ground state}
 and a bound on $\vf_{\rm out}^\prime(R)$, which follows from the representation \cref{vf-out}: 
  $$ |\vf_{\rm out}'(R)| \lesssim \sqrt{\lambda} C_{\lambda}\ee^{-\frac{1}{4}\lambda R^{2}}\ .$$
Hence, for all $R\ge a$:
 $$E(R) \lesssim \lambda C_{\lambda}^2\ee^{-\frac{1}{2}\lambda R^{2}},  $$
 and so \begin{align*} \underbrace{\lambda^2\exp\left(+\frac12\lambda a^2\right)}_{=:p} &\lesssim \underbrace{\lambda^{2.5} \exp\left(\frac12\lambda \left(AR+a^2-\frac12 R^2\right)\right)}_{=:q} C_\lambda + C_\lambda^2
 \end{align*} which implies $$ C_\lambda \gtrsim \frac12 q\left(\sqrt{4\frac{p}{q^2}+1}-1\right)\,. $$
 Since we are free to choose any $R\geq a$, let us now specify to $R=a$. Then the term within the exponent in $\frac{p}{q^2}$, $\frac12(R^2-a^2)-AR=-Aa$, is always negative (we have $A>0$). As a result, for large $\lambda$, $4\frac{p}{q^2}\ll 1$ and we may estimate $$\sqrt{4\frac{p}{q^2}+1}-1\geq\frac{p}{q^2}$$ so that \begin{align} C_\lambda \gtrsim  \frac{p}{2q} &\sim \frac{1}{\sqrt{\lambda}} \exp\left(-\frac12\lambda a(A-a)\right)\nonumber\\
 &= \frac{1}{\sqrt{\lambda}} \exp\left(-\frac12\lambda a\left(\frac14(a-2)^2+2\vm\right)\right)\,.\label{clam-lb}
 \end{align}
  Therefore, inserting \cref{clam-lb} into  \cref{simpler lower bound on ground state outside well not including normalization constant} yields the lower bound in \cref{eq:upper bound on wave function outside of well}:
  \begin{align*} \vf(r) &\gtrsim \lambda^{-1}\exp\left[-\lambda\left(\frac{1}{4}+2(\vm/2)^{3/4}a^{-3/2}\right)r^{2}-\frac12\lambda a\left(\frac14(a-2)^2+2\vm\right)\right]\\
  &=:\lambda^{-1}\exp\left[-\lambda\left(\frac{1}{4}+\mu_0\right)\left(\left(r^{2}-a^2\right)+\mu_1\right)\right] \end{align*} with \begin{align}\mu_0 &= 2(\vm/2)^{3/4}a^{-3/2} \\
  \mu_1 &= \frac{\frac12 a\left(\frac14(a-2)^2+2\vm\right)}{\frac14(1+2(\vm/2)^{3/4}a^{-3/2})}+a^2 \,.\end{align}
\end{proof}

The above proof makes use of \cref{vf-in-bd} in

\begin{prop}\label{intvfv}
 Assume $a'>a$. There is a constant $C$, such that for all $\lambda$ sufficiently large:
	\begin{align}
	 \|\vf_0^\lambda\|_{L^\infty_{B_{a'}}} &\lesssim \lambda^2\label{vf-in-bd}\\
	 -\int_0^{a}\vf_0^\lambda\left(r\right)v_0\left(r\right)r\dif{r} &\geq C  > 0\ . 
	\label{eq:intvfv}
	\end{align}
\end{prop}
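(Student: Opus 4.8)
Write $\vf:=\vf_0^\lambda$ and $e:=e_0^\lambda$; by \cref{ass:radial monotone cpt supp for v} this is a positive radial $L^2(\RR^2)$--normalised solution of $-\vf''-\tfrac1r\vf'+\tfrac14\lambda^2 r^2\vf+\lambda^2 v_0\vf = e\vf$ on $(0,\infty)$. The first bound is a crude pointwise bound on $\vf$, the second an integrated (linear) one.

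\textbf{For \cref{vf-in-bd}} the plan is to use the diamagnetic inequality. For any $t>0$ one has $\vf=\ee^{te}\ee^{-t h^\lambda}\vf$; taking absolute values inside the Feynman--Kac--It\^o representation (the single--well analogue of \cref{FKI}, with $v_0$ replacing $V$) and using $v_0\ge -\vm$ gives the pointwise domination
\[ 0<\vf(x)\ \le\ \ee^{te}\big(\ee^{-t(-\Delta+\lambda^2 v_0)}\vf\big)(x)\ \le\ \ee^{t(e+\lambda^2\vm)}\,\norm{\ee^{t\Delta}}_{L^2\to L^\infty}\,\norm{\vf}_{L^2}. \]
Then I would use $\norm{\ee^{t\Delta}}_{L^2(\RR^2)\to L^\infty(\RR^2)}=(8\pi t)^{-1/2}$ together with the cheap variational bound from the normalised lowest--Landau state $\psi_\lambda(x)=\sqrt{\lambda/2\pi}\,\ee^{-\lambda|x|^2/4}$, namely $e\le\ip{\psi_\lambda}{h^\lambda\psi_\lambda}=\lambda+\lambda^2\!\int_{\RR^2}\! v_0|\psi_\lambda|^2\le\lambda$ (using $(P-\lambda A x)^2\psi_\lambda=\lambda\psi_\lambda$ and $v_0\le0$). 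Choosing $t=\lambda^{-2}$ then gives $\norm{\vf}_{L^\infty(\RR^2)}\lesssim\lambda\,\ee^{\vm+\lambda^{-1}}\lesssim\lambda$, which is stronger than \cref{vf-in-bd} and valid on all of $\RR^2$, not just on $B_{a'}$.

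\textbf{For \cref{eq:intvfv}} the idea is to pair the eigenequation against the constant function. Multiplying the radial ODE by $r$ (so $-r\vf''-\vf'=-(r\vf')'$) and integrating over $(0,\infty)$ --- the boundary term $[r\vf']_0^\infty$ vanishes by radial regularity at $r=0$ and by the Gaussian decay of \cref{g-dec} at $r=\infty$, and $\supp v_0\subset B_a$ --- yields the exact identity
\[ -\lambda^2\!\int_0^a v_0(r)\vf(r)\,r\,\dif r\ =\ \int_0^\infty\!\Big(\tfrac14\lambda^2 r^2-e\Big)\vf(r)\,r\,\dif r\ =\ \tfrac14\lambda^2\!\int_0^\infty\! r^3\vf\,\dif r\ -\ e\!\int_0^\infty\! r\vf\,\dif r , \]
with both moments on the right finite and positive. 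A standard variational argument (a lowest--Landau eigenfunction localised where $v_0$ is bounded away from $0$, or \cref{eq:gs-asym-alt} at a non--degenerate minimum) gives $e\to-\infty$, so $-e/\lambda^2\ge\tfrac12\vm$ for $\lambda$ large; discarding the nonnegative $r^3$--term,
\[ -\int_0^a v_0\,\vf\,r\,\dif r\ \ge\ \tfrac12\vm\int_0^\infty r\vf\,\dif r\ >\ 0 , \]
which already gives strict positivity for every $\lambda$. For a $\lambda$--uniform lower bound I would then invoke \cref{vf-in-bd}: since $\vf\ge0$ and $\norm{\vf}_{L^2}=1$, $\int_0^\infty r\vf\,\dif r=\tfrac1{2\pi}\int_{\RR^2}\vf\ge\tfrac1{2\pi}\norm{\vf}_{L^2}^2/\norm{\vf}_{L^\infty}\gtrsim\lambda^{-1}$, hence $-\int_0^a v_0\vf\,r\,\dif r\gtrsim\lambda^{-1}$.

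\textbf{Main obstacle.} This last estimate is the only delicate point. Because $\vf_0^\lambda$ concentrates on the magnetic length $\lambda^{-1/2}$, the genuine order of $-\int_0^a v_0\vf\,r\,\dif r$ is $\lambda^{-1/2}$; recovering this optimal rate rather than the cruder $\lambda^{-1}$ needs the sharper bound $\norm{\vf_0^\lambda}_{L^\infty}\lesssim\sqrt\lambda$, which the $t$--optimisation above produces as soon as one knows $e_0^\lambda+\lambda^2\vm=O(\lambda)$ (true, e.g., via the harmonic approximation at a non--degenerate minimum, cf.\ \cref{eq:gs-asym-alt}). I would therefore expect the constant $C$ in \cref{eq:intvfv} to be understood up to a power of $\lambda$ --- which is harmless, since every place \cref{intvfv} is applied carries an exponentially small factor that swamps any such polynomial correction --- while the content actually used downstream, strict positivity of the overlap integral, is exactly what the identity above supplies.
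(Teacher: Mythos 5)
Your proposal is correct in substance and reaches both estimates by a genuinely different route. For \cref{vf-in-bd} the paper argues locally: it uses the Sobolev-type bound \cref{Sobelev} together with the eigen-equation, which gives $\norm{P^2\vf}_{L^2(B_{a_2})}\lesssim\lambda^2$ and hence $\norm{\vf}_{L^\infty(B_{a_1})}\lesssim\lambda^2$; your diamagnetic/heat-kernel argument, combined with the lowest-Landau trial-state bound $e_0^\lambda\le\lambda$ and the choice $t=\lambda^{-2}$, yields the stronger and global bound $\norm{\vf}_{L^\infty(\RR^2)}\lesssim\lambda$, at the cost of invoking semigroup domination rather than elementary elliptic estimates. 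For \cref{eq:intvfv} the two proofs are structurally parallel: both pair the eigen-equation against an (essentially) constant test function and then convert $\norm{\vf}_{L^2}=1$ plus the sup bound into $\int\vf\gtrsim 1/\norm{\vf}_{L^\infty}$. The paper tests against a cutoff $\Theta$ equal to $1$ on $B_{a_1}$ and supported in $B_{a_2}$, so the term $\ip{\vf}{P^2\Theta}$ is exponentially small by \cref{g-dec}; you instead integrate the radial ODE against $r\,\dif r$ over all of $(0,\infty)$, with boundary terms killed by regularity at $r=0$ and by the Gaussian decay of $\vf$ and $\vf'$ at infinity (available from the representation \cref{vf-out} and the upper bound on $C_\lambda$, which do not use \cref{intvfv}, so there is no circularity). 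Both proofs then need $-e_0^\lambda\gtrsim\lambda^2$, which the paper likewise takes from the asymptotics $e\approx-\vm\lambda^2$ without detailed proof, so you are on equal footing there.

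The one discrepancy is the power of $\lambda$: you obtain $-\int_0^a\vf\, v_0\, r\,\dif r\gtrsim\lambda^{-1}$ rather than a $\lambda$-independent constant, and you are right to flag this rather than paper over it. Since $\vf$ concentrates on the magnetic scale $\lambda^{-1/2}$ with height of order $\lambda^{1/2}$, the integral genuinely decays (of order $\lambda^{-1/2}$), so the statement must be read with a factor of $\lambda^2$ retained: the paper's own identity \cref{eq:inner product of cut-off with eigen equation} carries $\lambda^2\int_{B_a}|v|\vf$ on the left, and tracking that factor through \cref{eq:lower bound on integral of potential and ground state in terms of ground state} gives $\lambda^2\int_0^a\vf\,|v_0|\,r\,\dif r\gtrsim 1$, i.e.\ $\gtrsim\lambda^{-2}$ for the integral itself — in fact weaker than your $\lambda^{-1}$. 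As you observe, every downstream use (the lower bound \cref{eq:lower bound on hopping coefficient with explicit normalization constant} and the estimates of \cref{sec:analysis of the various error terms in the reduction scheme}) is insensitive to polynomial factors, which are absorbed into $\gamma_0$ for $\lambda$ large, so your version serves exactly the same purpose.
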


 The bound \cref{eq:intvfv} plays an important role in a lower 
bound on the hopping coefficient, see \cref{lb-rho}. The proof of  \cref{eq:intvfv} makes use of \cref{vf-in-bd}. With \cref{intvfv}, the proof of \cref{g-dec} is now complete.

\section{Magnetic orbital subspace $\mathcal{V}$,  the hopping coefficient and outline of the proof of \cref{thm-split}}\label{sec:magnetic orbital subspace}

\subsection{Magnetic translations}\label{sec:m-trans}

We recall that in the non-magnetic case, the low-lying (first two) eigenstates of the Hamiltonian in \cref{HnoB}
 are, for $\lambda$ large,  well-approximated by symmetric and anti-symmetric superpositions of 
states:
\begin{equation}
\vf_0(x)\quad {\rm and}\quad R^d\vf_0(x)\equiv e^{-i d\cdot P}\vf_0(x)=\vf_0(x-d).
\label{R-def}
\end{equation}
 Here, $\vf_0$ is the ground state
 for the atomic Hamiltonian $h^{\lambda,0}\equiv P^2+\lambda^2 v_0(x)$ and, by translation invariance of the Laplacian ($[R^d,P^2]=0$),
 $\vf(x-d)$ is the ground state of  $P^2+\lambda^2 v(x-d)$.
 
 In the case of constant nonzero magnetic field, we seek to approximate the low lying spectral subspace, in an analogous manner, by using ground states of the atomic \emph{magnetic} Hamiltonian. An essential difference here is that kinetic part of the magnetic Hamiltonian, $(P-b A x)^2$, does not commute with 
  translations ($[R^d,(P-b A x)^2]\ne0$). Here we have temporarily restored $b$ in order to emphasize the magnetic dependence.

 The situation is remedied by the use of  \emph{magnetic} translations, introduced by Zak \cite{Zak_1964_PhysRev.134.A1602}. Introduce the magnetic translation by $z\in\RR^2$:
  
\begin{equation}
 \hat{R}^z := e^{\ii x \cdot b A z}\ R^z\ =  e^{\ii x \cdot b A z}\ e^{-iz\cdot P}
 \end{equation}
  Then $[\hat{R}^z,(P-b A x)^2]=0$ for any $z\in\RR^2$. We therefore define
  the magnetic translation of the ground state centered at the origin $\vf\equiv\vf_0$, by the vector $d\in\RR^2$:
\[ \vf_d(x)\ \equiv\ \hat{R}^d \vf(x)\ =\ e^{ix\cdot \lambda A d}\ \vf_0(x-d), \] 
which satisfies the key relation:

\begin{align*}   
h^\lambda_d\ \vf_d(x)\ &=\ e_0^\lambda\ \vf_d(x),\quad \textrm{where}\\
h^\lambda_d\ &\equiv\ \hat{R}^d\ h^\lambda \hat{R}^{-d} =  (P- \lambda Ax)^2\ +\ \lambda^2 v_d(x)\ \ {\rm and}\\ 
v_\zeta(x) &=  R^{^\zeta} v_0(x)\ =\ v_0(x-\zeta),\quad \zeta\in\RR^2. 
\end{align*}

\subsection{Magnetic orbital subspace $\mathcal{V}={\rm span}\{\vf_0,\vf_d\}$}

The states $\vf_0$ and $\vf_d$ play the role of {\it magnetic atomic orbitals}. They satisfy
\begin{align} \HH^\lambda \vf_0^\lambda = v_d \vf_0^\lambda,\qquad  \HH^\lambda \vf_d^\lambda = v_0 \vf^\lambda_d,
\label{eq:app-estat}\end{align} 
where we have defined the centered about $e_0^\lambda$ two-well Hamiltonian
\[ \HH^\lambda :=\ H^\lambda\ -\ e^\lambda_0\Id\,.\]

For $\lambda$ large, the right hand sides of \cref{eq:app-estat} are bounded, uniformly in $x$, by  $e^{-C^\prime \lambda|d|^2}$, for some $C^\prime>0$, since
 $v_0$ is supported in $B_a(0)$ and $\vf_0$ satisfies the Gaussian bound of \cref{g-dec}.

We expect that for $\lambda$ large, the low lying spectrum of $H^\lambda$ to be determined by $H^\lambda$ restricted to the subspace of magnetic orbitals:
\begin{equation}
 \VV := \mathrm{span}\Set{\vf_0^\lambda,\vf_d^\lambda}\ .\label{VV-def}
 \end{equation}
Note that $L^2(\RR^2)\ =\ \VV\oplus\VV^\perp$ and introduce orthogonal projections:
\[ \Pi:L^2(\RR^2)\to \VV\quad {\rm and}\quad \Pi^\perp:L^2(\RR^2)\to \VV^\perp\ .\]

\noindent {\bf N.B.} To lighten up the notation, we shall often suppress the superscript $\lambda$: thus $\vf_0^\lambda\to\vf_0$ etc. 

\subsection{Strategy for the proof of \cref{thm-split}}\label{strategy}

We provide a detailed sketch of the proof of \cref{thm-split}.

\begin{enumerate}
\item[] Step 1: Via a Schur complement / Lyapunov-Schmidt reduction strategy we reduce the eigenvalue problem 
 $\HH\psi=z\psi$, $\psi\in L^2$ (|z| small--we recall that $\HH$ has been recentered at $e$) to a problem on the two dimensional subspace $\VV$:
\begin{equation}
 \Big[\ \Pi\ ({\HH}-z)\ \Pi\ + D^{\lambda}(z)\ \Big]\psi_\parallel\ =\ 0,\quad \psi_\parallel\in\VV,
 \label{red-V}
 \end{equation}
where  $z\mapsto  D^{\lambda}(z)$  is an analytic mapping from a neighborhood of $z=0$ into $\mathcal{B}(\VV)$.
This reduction uses the bounded invertibility of $\Pi^{\perp}(\HH-z)\Pi^{\perp}$. In particular, for $|z|$ small:
\begin{equation}
 \left\|\ \left(\Pi^{\perp}\ (\HH-z)\ \Pi^{\perp}\right)^{-1}\ \right\|_{\mathcal{B}(\VV^\perp)}\ \lesssim\ 1.
\label{res-perp}
\end{equation}
\item[] Step 2: For $\lambda$ large, we expand the operator in \eqref{red-V}. In terms of the basis
 $\Set{\vf_0,\vf_d}$, the condition for $z$ (with $|z|$ small) to be an eigenvalue of $\HH$ is equivalent to
\begin{equation}
\det \Big[\ \begin{pmatrix} -z&\rho^\lambda\left(d\right) \\ \overline{\rho^{\lambda}\left(d\right) }&-z\end{pmatrix} 
  \ +\ B^\lambda(z)\ \Big]=\ 0, 
 \label{exp-red-V}
 \end{equation}
where $B^\lambda(z)$ is a $2\times2$ matrix which depends analytically on $z$ near zero, and   $\rho^\lambda\left(d\right)$ is the double well magnetic hopping coefficient:
 \begin{equation} \rho^{\lambda}\left(d\right) \ =\lambda^2\
  \left\langle \vf_0, v_0 \vf_d\right\rangle\,;
 \label{m-hop1}
 \end{equation}
see \cref{m-hop}.  
Hence, $z$ with $|z|$ small, is an eigenvalue of $\HH$
  if and only if
  \begin{equation} z^2\ -\ |\rho^\lambda\left(d\right)|^2\ +\ F^\lambda(z)\ =\ 0, 
  \label{z-eqn}
  \end{equation}
  where $z\mapsto F^\lambda(z)$ is some analytic function in a neighborhood of the origin (see \cref{eq:def of F} below).
  \item[] Step 3: \cref{lb-hop} provides a lower bound on $|\rho^{\lambda}\left(d\right)|$ for $\lambda$ large. 
  Hence,  $z=|\rho^{\lambda}\left(d\right)|\ w$, with $|w|$ of order $1$, is an eigenvalue of $\HH$
  if and only if
  \begin{equation} w^2\ -\ 1\ +\ |\rho^\lambda\left(d\right)|^{-2}\ F^\lambda(\ |\rho^{\lambda}\left(d\right)|\ w\ )\ =\ 0, 
  \label{w-eqn}
  \end{equation}
  where $w\mapsto |\rho^{\lambda}\left(d\right)|^{-2}\ F^\lambda(\ |\rho^{\lambda}\left(d\right)|\ w\ )$ is analytic in a neighborhood of the origin. The bounds on $\rho^\lambda(d)$ asserted in \cref{lb-hop} allow us to control the correction term (the term involving $F$) in \cref{w-eqn}.
  
  Detailed estimates (\cref{sec:analysis of the various error terms in the reduction scheme}) imply: for any $\varepsilon>0$ small, there exists $\lambda_\star(\varepsilon)$ such  that for $\lambda>\lambda_\star$
  \begin{align} \Big|\ |\rho^{\lambda}\left(d\right)|^{-2}\ F^\lambda(\ |\rho^{\lambda}\left(d\right)|\ w\ )\ \Big|\ < |w^2-1|\label{eq:inequality for Rouche's theorem}\end{align}
  for all $w$ satisfying either $|w-1|=\varepsilon$ or $|w-(-1)|=\varepsilon$. It follows from Rouch\'e's theorem 
  that \eqref{w-eqn} has exactly one solution $w_-^\lambda$ satisfying $|w_-^\lambda -(-1)|<\varepsilon$ and 
  exactly one solution $w_+^\lambda$ satisfying $|w_+^\lambda -1|<\varepsilon$. Correspondingly, the \emph{non-centered} Hamiltonian \cref{eq:two-well Hamiltonian} has exactly two distinct eigenvalues $E_0^{\lambda}<E_1^{\lambda}$ with:
  \[ E_0^{\lambda}\ =\ e_0^{\lambda}\ +\ |\rho^{\lambda}\left(d\right)|\ w_-^\lambda\quad 
  \textrm{and}\quad E_1^{\lambda}\ =\ e_0^{\lambda}\ +\ |\rho^{\lambda}\left(d\right)|\ w_+^\lambda\ .\]
  Hence, the eigenvalue splitting--our main object of interest--is given by:
  \[ E_1^{\lambda}\ -\ E_0^{\lambda}\ =\ |\rho^{\lambda}\left(d\right)|\ \left[\ w_+^\lambda\ -\ w_-^\lambda\ \right]
  \ =\ 2\ |\rho^{\lambda}\left(d\right)|\ (\ 1+ \mathcal{O}(\varepsilon)\ ),\ \]
  for all $\lambda>\lambda_\star(\varepsilon)$. We have thus proven \cref{eq:key identity for eignevalue splitting}, and \cref{thm-split} follows from the upper and lower bounds
   on $|\rho^{\lambda}\left(d\right)|$ asserted in \cref{lb-hop} (with adjustment of the constants passing from \cref{rho-bd} to \cref{eq:main result}). 
  
\end{enumerate}

\section{Proof of \cref{thm-split}}
\label{sec:proof of the main theorem}
We carry out the proof of \cref{thm-split} following the strategy laid out in the previous section, with certain technical calculations left to appendices. 

\subsection{Step 1 of \cref{strategy}: Reduction to $\VV$}\label{sec:reduction to orbital subspace}

 Since $L^2=\VV\oplus\VV^\perp$, we  can identify $\psi\in L^2$ with the 
 pair $(\psi_\parallel,\psi_\perp)\in\VV\times\VV^\perp$. We write the eigenvalue problem $\HH\psi = z \psi$ as 
 \[ \begin{pmatrix} 
 \Pi\ (\HH-z)\ \Pi & \Pi\ \HH\ \Pi^\perp\\
 \Pi^\perp\ \HH\ \Pi & \Pi^\perp\ (\HH-z)\ \Pi^\perp
 \end{pmatrix} \begin{pmatrix} \psi_\parallel\\ \psi_\perp\end{pmatrix}\ =\  \begin{pmatrix} 0\\ 0 \end{pmatrix}\,.\
 \]
 The following proposition facilitates a reduction to a problem on $\VV$.
 
 \begin{prop}\label{prop:res-bd}
Fix $z$ near the origin. There exists $\lambda_\star$ such that, for all $\lambda>\lambda_\star$,  the  operator  
$\Pi^\perp\ ({\HH}^{\lambda}-z\Id)\ \Pi^\perp: \VV^\perp\to\VV^\perp$ is invertible, with the resolvent bound: 
 \begin{equation}
 \Big\|\ \left[\ \Pi^\perp\ ({\HH}^{\lambda}-z\Id)\ \Pi^\perp\right]^{-1}\ \Big\|_{\mathcal{B}(\VV^\perp)}\lesssim 1
 \label{eq:res-bd in early section}
 \end{equation}
 \end{prop}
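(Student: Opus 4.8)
The plan is to show that $\Pi^\perp(\HH^\lambda - z)\Pi^\perp$ is a small perturbation of $\Pi^\perp(h^\lambda - e_0^\lambda)\Pi^\perp$ on $\VV^\perp$, and that the latter is bounded below uniformly in $\lambda$. First I would recall that $\HH^\lambda = H^\lambda - e_0^\lambda$ and write $H^\lambda = h^\lambda + \lambda^2 v_d$ (the single atomic well at the origin plus the well at $d$), so that on any $\psi$,
\begin{equation*}
\langle \psi, (\HH^\lambda - z)\psi\rangle = \langle\psi,(h^\lambda - e_0^\lambda)\psi\rangle + \lambda^2\langle\psi, v_d\,\psi\rangle - z\|\psi\|^2 .
\end{equation*}
The key point is that although $\VV = \mathrm{span}\{\vf_0,\vf_d\}$ is \emph{not} exactly spanned by eigenfunctions of $h^\lambda$ (only $\vf_0$ is), the vector $\vf_d$ is an exponentially good approximate eigenvector of $h^\lambda$ with eigenvalue $e_0^\lambda$: indeed $(h^\lambda - e_0^\lambda)\vf_d = \lambda^2 v_0\,\vf_d$ by \cref{eq:app-estat}, and by the Gaussian decay bound \cref{g-dec} together with $\supp(v_0)\subset B_a(0)$ and $|d|>2a$, one has $\|\lambda^2 v_0 \vf_d\|_{L^2}\lesssim \lambda^{5/2}e^{-\frac14\lambda(|d|-a)^2 + \frac14\lambda a^2} =: \delta_\lambda$, which is super-exponentially small. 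The same bound controls $\langle\vf_0,\vf_d\rangle$, $\langle \vf_0,\lambda^2 v_d \vf_0\rangle$, etc., so $\VV$ is an \emph{almost-invariant} subspace for $h^\lambda$ sitting at energy $e_0^\lambda$.

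The main step is then the spectral-gap estimate. Take $\psi\in\VV^\perp$ with $\|\psi\|=1$. I want $\langle\psi,(h^\lambda - e_0^\lambda)\psi\rangle \geq c > 0$. Assumption \ref{lem:single well gap} gives this \emph{if} $\psi\perp\vf_0$; since $\psi\perp\VV\ni\vf_0$, this holds exactly, so immediately $\langle\psi,(h^\lambda - e_0^\lambda)\psi\rangle \geq C_{\mathrm{gap}}$. Next, the perturbation term $\lambda^2\langle\psi, v_d\psi\rangle$: here $v_d$ is supported in $B_a(d)$ where, after a magnetic translation sending $\vf_d\mapsto\vf_0$, I can compare $\psi$ restricted to $B_a(d)$ against $\vf_d$. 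Because $\psi\perp\vf_d$ as well, the mass of $\psi$ near the well at $d$ is controlled; more precisely, writing $\lambda^2|\langle\psi,v_d\psi\rangle|\le \lambda^2\|v_d\|_\infty\|\mathbf 1_{B_a(d)}\psi\|^2$ and using a local elliptic/Agmon estimate to bound $\|\mathbf 1_{B_a(d)}\psi\|$ by (mass of $\psi$ outside the wells, which is $O(1)$) times an exponentially small factor is too lossy — the cleaner route is: the operator $h^\lambda + \lambda^2 v_d$ is exactly $h^\lambda_d$ conjugated appropriately plus $\lambda^2 v_0$, i.e. $H^\lambda$, and by symmetry $H^\lambda - e_0^\lambda$ restricted to the orthogonal complement of $\VV$ is bounded below by $\min(C_{\mathrm{gap}},C_{\mathrm{gap}}') - O(\delta_\lambda)$ using that the \emph{pair} of single-well gap estimates (one centered at $0$, one at $d$) together cover $\VV^\perp$ up to an error governed by the overlap $\delta_\lambda$. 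So I would run a two-well IMS-type localization: pick a partition of unity $\chi_0^2 + \chi_d^2 + \chi_\infty^2 = 1$ with $\chi_0$ supported near well $0$, $\chi_d$ near well $d$, $\chi_\infty$ away from both, estimate $\langle\psi,(H^\lambda-e_0^\lambda)\psi\rangle \ge \sum_j \langle\chi_j\psi,(H^\lambda-e_0^\lambda)\chi_j\psi\rangle - \|\,|\nabla\chi_j|\,\|_\infty^2$, bound each localized piece below (near well $0$ using the single-well gap after discarding the exponentially-far well $d$; near well $d$ symmetrically; in the region $\chi_\infty$ using that $H^\lambda - e_0^\lambda \gtrsim \lambda$ there since the potential is negligible and $e_0^\lambda\approx -\lambda^2\vm$), and absorb the cross terms coming from $\psi\perp\VV$ and the exponential tails.

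Finally, with $\langle\psi,(\HH^\lambda-z)\psi\rangle \ge c - |z| \geq c/2$ for $|z|$ small and $\lambda$ large, self-adjointness of $\Pi^\perp(\HH^\lambda-z)\Pi^\perp$ (for real $z$; for complex $z$ near $0$ one gets the same via $|\langle\psi,(\HH^\lambda-z)\psi\rangle|\ge \Re\langle\psi,(\HH^\lambda-\Re z)\psi\rangle - |\Im z|$) yields invertibility with $\|[\Pi^\perp(\HH^\lambda-z)\Pi^\perp]^{-1}\|_{\mathcal B(\VV^\perp)} \le 2/c \lesssim 1$. I expect the main obstacle to be the bookkeeping in the two-well localization — specifically, showing the cross/overlap terms generated by insisting $\psi\perp\vf_d$ (rather than $\psi\perp$ an exact eigenfunction) are genuinely of size $O(\delta_\lambda)$ and not merely $O(e^{-c\lambda})$ with an insufficient constant; but since the claimed bound is only $\lesssim 1$ (no sharp constant needed), any crude exponential smallness of the overlaps suffices, which makes this manageable.
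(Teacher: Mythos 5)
Your proposal is correct and follows essentially the same route as the paper: after discarding the direct perturbative treatment of $\lambda^2 v_d$, you arrive at exactly the paper's argument, namely a coercivity estimate for $\Pi^\perp\HH^\lambda\Pi^\perp$ (the paper's \cref{thm:en-est}) proved via a two-well IMS partition of unity, the single-well gap estimates \cref{at-en-est,tat-en-est} near each well, the Landau bound $(P-\lambda Ax)^2-e_0^\lambda\gtrsim\lambda^2$ away from both wells, and exponentially small corrections from the fact that the localized pieces are only approximately orthogonal to $\vf_0,\vf_d$. The passage from the quadratic-form lower bound to invertibility and \cref{eq:res-bd in early section} is also the same.
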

 \cref{prop:res-bd} is proved via energy estimates in \cref{sec:en-est}. Using this proposition we may, by a Schur complement strategy, 
 eliminate $\psi_\perp$ and obtain a reduction to a problem on the two-dimensional subspace $\mathcal{V}$:
 \begin{equation}
\Big[\ \Pi\ ({\HH}^{\lambda}-z\Id)\ \Pi\ +\  D^\lambda(z)\ \ \Big]\ \psi_\parallel\ =\ 0\ .\label{psi-red}\end{equation}
Here $D^\lambda(z)$ is the rank two operator given by
\begin{equation}
D^\lambda(z)\ \equiv \ -\Pi\ {\HH^\lambda}\ \Pi^{\perp}\ \left(\Pi^{\perp}\ ({\HH^\lambda}-z\Id)\ \Pi^{\perp}\right)^{-1}\ \Pi^{\perp}\ {\HH^\lambda}\ \Pi\,.
\label{Cdef}
\end{equation}

\subsection{Step 2 of \cref{strategy} - expansion of \eqref{psi-red}}

 We choose an orthonormal basis for $\mathcal{V}$, 
$ \{\ \vf_0, \tilde{\vf}_d\ \}$, 
 where $\tilde{\vf}_d$ is obtained from $\vf_d$ by Gram-Schmidt orthonormalization: 
 \[  \tilde{\vf}_d \equiv \left(1-\left|\left\langle \vf_0,\vf_{d}\right\rangle \right|^{2}\right)^{-1/2}\Big(\ \vf_{d}-\left\langle \vf_0,\vf_{d}\right\rangle \vf_0\ \Big)\,. \]

With respect to this basis:
\[ \Pi\  (\HH^{\lambda}-z) \ \Pi	=	\begin{bmatrix}\left\langle \vf_0,  \HH^{\lambda} \vf_0\right\rangle\ -\ z  & \left\langle \vf_0, \HH^{\lambda}  \tilde{\vf}_{d}\right\rangle \\
	\left\langle \tilde{\vf}_{d}, \HH^{\lambda}  \vf_0\right\rangle  & \left\langle \tilde{\vf}_{d}, \HH^{\lambda}  \tilde{\vf}_{d}\right\rangle\ - \ z
\end{bmatrix} 
\] and similarly for $D(z)$.  Since $e^{-c'\lambda}\lesssim\left|\left\langle \vf_0,\vf_{d}\right\rangle\right| \lesssim e^{-c\lambda}$ for some $c,c'>0$, 
we introduce
\begin{equation}
 A^\lambda(d):=\begin{bmatrix}0 & \left\langle \vf_0,\HH^{\lambda}\vf_{d}\right\rangle \\
\left\langle \vf_{d},\HH^{\lambda}\vf_0\right\rangle  & 0
\end{bmatrix} = \begin{bmatrix}0 &  \rho(d) \\
\overline{\rho(d)}  & 0
\end{bmatrix} . \label{Ad}\end{equation}
The latter equality follows from \cref{eq:app-estat}. Therefore, 
\begin{align*}
\Pi\ ({\HH}^{\lambda}-z)\ \Pi\ +\  D^\lambda(z)\ =\ \begin{bmatrix}-z &  \rho(d) \\
\overline{\rho(d)}  & -z
\end{bmatrix}\ +\ B^\lambda(z)
\end{align*}
where
\begin{equation}
B^\lambda(z)\ \equiv \ \Pi \HH^{\lambda}\Pi -  A^\lambda(d)\ +\ D^\lambda(z)
\label{Bdef}\end{equation}
and $D^\lambda(z)$ is given by \cref{Cdef}.
Therefore, by the setup of \cref{sec:reduction to orbital subspace} we have

\begin{prop} For $z$ in a sufficiently small neighborhood of zero (uniformly in $\lambda$), 
$z$ is an eigenvalue of $\HH^\lambda$ (the Hamiltonian centered at $e_0^\lambda$) if and only if 
\begin{equation} \det\Big[\ \begin{bmatrix}-z &  \rho^\lambda(d) \\
\overline{\rho^\lambda(d)}  & -z
\end{bmatrix}\ +\ B^\lambda(z)\ \Big]\ =\ 0.
\label{det-z}\end{equation}

\end{prop}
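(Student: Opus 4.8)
The plan is to assemble the two-step reduction already sketched in \cref{strategy}.

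\emph{Step 1 (Lyapunov--Schmidt reduction).} Writing $\psi = \psi_\parallel + \psi_\perp$ relative to $L^2=\VV\oplus\VV^\perp$, the eigenvalue equation $\HH^\lambda\psi = z\psi$ is equivalent to the pair of block equations
\[ \Pi(\HH^\lambda - z)\Pi\,\psi_\parallel + \Pi\HH^\lambda\Pi^\perp\psi_\perp = 0,\qquad \Pi^\perp\HH^\lambda\Pi\,\psi_\parallel + \Pi^\perp(\HH^\lambda - z)\Pi^\perp\psi_\perp = 0 . \]
By \cref{prop:res-bd}, for $z$ in a $\lambda$-independent neighborhood of the origin and $\lambda>\lambda_\star$ the operator $\Pi^\perp(\HH^\lambda-z)\Pi^\perp$ is boundedly invertible on $\VV^\perp$, so the second equation forces $\psi_\perp = -\bigl(\Pi^\perp(\HH^\lambda-z)\Pi^\perp\bigr)^{-1}\Pi^\perp\HH^\lambda\Pi\,\psi_\parallel$; substituting into the first gives exactly \cref{psi-red} with $D^\lambda(z)$ as in \cref{Cdef}. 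Conversely, given a solution $\psi_\parallel$ of \cref{psi-red}, reconstructing $\psi_\perp$ by the same formula yields a solution of $\HH^\lambda\psi=z\psi$ (note $\VV\subset\mathrm{dom}(\HH^\lambda)$, so $\psi$ lies in the domain). Moreover a genuine eigenvector must have $\psi_\parallel\neq0$, since otherwise $\psi_\perp\in\VV^\perp\setminus\{0\}$ would lie in the kernel of the invertible operator $\Pi^\perp(\HH^\lambda-z)\Pi^\perp$. Hence $z$ is an eigenvalue of $\HH^\lambda$ if and only if \cref{psi-red} has a nonzero solution in $\VV$.

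\emph{Step 2 (matrix form on the two-dimensional space).} Since $|\langle\vf_0,\vf_d\rangle|\lesssim e^{-c\lambda}<1$ for $\lambda$ large, $\{\vf_0,\vf_d\}$ is a basis of $\VV$ and $\dim\VV = 2$; Gram--Schmidt produces the orthonormal basis $\{\vf_0,\tilde\vf_d\}$. A nonzero solution of \cref{psi-red} exists if and only if the $2\times2$ matrix of $\Pi(\HH^\lambda-z)\Pi + D^\lambda(z)$ in this basis is singular. Expanding $\Pi(\HH^\lambda-z)\Pi$ in $\{\vf_0,\tilde\vf_d\}$, using the identities \cref{eq:app-estat} (which, with the definition \cref{m-hop1}, give $\langle\vf_0,\HH^\lambda\vf_d\rangle=\rho^\lambda(d)$) together with self-adjointness of $\HH^\lambda$ to get $\langle\vf_d,\HH^\lambda\vf_0\rangle=\overline{\rho^\lambda(d)}$, and peeling off the off-diagonal principal part into the reference matrix $A^\lambda(d)$ of \cref{Ad}, one obtains
\[ \Pi(\HH^\lambda - z)\Pi + D^\lambda(z) = \begin{bmatrix} -z & \rho^\lambda(d) \\ \overline{\rho^\lambda(d)} & -z \end{bmatrix} + B^\lambda(z), \]
with $B^\lambda(z)$ defined by \cref{Bdef}; the singularity of the right-hand side is precisely \cref{det-z}. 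For the analyticity claim I would note that the only $z$-dependent ingredient is $D^\lambda(z)$, and by the first resolvent identity $z\mapsto\bigl(\Pi^\perp(\HH^\lambda-z)\Pi^\perp\bigr)^{-1}$ is operator-norm analytic on the neighborhood where \cref{prop:res-bd} applies, so the rank-two operator \cref{Cdef} is analytic there, while $\Pi\HH^\lambda\Pi-A^\lambda(d)$ is constant in $z$.

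The substantive input is \cref{prop:res-bd}, the uniform-in-$\lambda$ bounded invertibility of $\Pi^\perp(\HH^\lambda-z)\Pi^\perp$, which is proved separately by energy estimates; granting it, the present proposition is just a careful unwinding of the Schur complement and a choice of orthonormal basis. The only point requiring genuine care is the ``only if'' direction — verifying that an eigenvector of $\HH^\lambda$ has nonzero $\VV$-component and that the reconstructed $\psi_\perp$ lands in $\mathrm{dom}(\HH^\lambda)$, so that the block system really is equivalent to $\HH^\lambda\psi=z\psi$ — and both of these follow immediately from the invertibility statement and from $\VV\subset\mathrm{dom}(\HH^\lambda)$.
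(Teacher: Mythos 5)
Your proposal is correct and follows essentially the same route as the paper: the Schur-complement/Lyapunov--Schmidt reduction to $\VV$ granted \cref{prop:res-bd}, followed by writing the reduced operator in the orthonormal basis $\{\vf_0,\tilde\vf_d\}$ and absorbing everything except $-z\Id + A^\lambda(d)$ into $B^\lambda(z)$. Your added care about the ``only if'' direction (a genuine eigenvector has $\psi_\parallel\neq0$) and the analyticity of $D^\lambda(z)$ is consistent with, and slightly more explicit than, the paper's treatment.
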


We claim that for $z$ sufficiently near zero and $\lambda$ large, the $2\times2$ matrix-valued analytic function $B^\lambda(z)$ is a small correction and hence \cref{det-z} implies that $z^2-|\rho^\lambda(d)|^2\approx0$. To justify this we express $B^\lambda(z)$ in terms of the matrix elements of $H$ and show that their contribution is negligible. 

\subsection{Step 3 of \cref{strategy}--solving \cref{det-z} for all $z$ near zero}

In \cref{sec:hopping} we prove \cref{lb-hop} and in particular a lower bound on the magnitude of the hopping coefficient, $\rho^\lambda(d)$.
 With this in mind we write $w:=\left|\rho\left(d\right)\right|^{-1}z$. Then, 
 \cref{det-z} becomes
 \begin{align} w^{2}-1-f\left(w\right)w-g\left(w\right)=0, 
 \label{poly-w}\end{align}
 where
\begin{align}
 f\left(w\right)\ &:=\ \left|\rho\left(d\right)\right|^{-1}\tr\left(B\left(z\right)\right)\Big|_{z=\left|\rho\left(d\right)\right|w}
 \label{fw}\\
  g\left(w\right)\ &:=\ \left|\rho\left(d\right)\right|^{-2}\left(\det\left(B\left(z\right)\right)+ \left( \rho(d) B_{21}(z)\ +\ \overline{\rho(d)} B_{12}(z) \right)\ \right)\Big|_{z=\left|\rho\left(d\right)\right|  w} \nonumber\\
  &=\quad |\rho(d)|^{-2}\left(\det(B(z)) +
  \tr(\det(A)A^{-1}B(z))\right)\Big|_{z=\left|\rho\left(d\right)\right|  w}\,.
 \label{gw} \end{align}
   We thus identify $F$ from \cref{z-eqn} as \begin{equation}
   	|\rho\left(d\right)|^{-2}\ F^\lambda(\ |\rho\left(d\right)|\ w\ ) := -f\left(w\right)w-g\left(w\right)\,.\label{eq:def of F}
   \end{equation} 
 To bound $ |\rho\left(d\right)|^{-2}\ F^\lambda(\ |\rho\left(d\right)|\ w\ )$ we require
   \begin{prop}\label{B-bounds}
	For any fixed $K>0$, (with $\rho=\rho^\lambda(d)$ and $B(z)=B^\lambda(z)$), 
	\begin{align*}
		|\rho|^{-1}\tr(B(|\rho|w)),\quad 
		|\rho|^{-2}\det(B(|\rho|w)),\quad {\rm and}\quad 
		|\rho|^{-2}\tr(\det(A)A^{-1}B(|\rho|w))
	\end{align*}
	tend to zero as $\lambda\to\infty$ uniformly in $|w|\le K$.
\end{prop}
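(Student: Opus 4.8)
The plan is to reduce all three assertions to the single statement that, uniformly for $|w|\le K$,
$|\rho|^{-1}\,\|B^\lambda(|\rho|w)\|_{\mathcal{B}(\VV)}\to0$ as $\lambda\to\infty$, where $\rho=\rho^\lambda(d)$. Granting this, the first assertion is immediate; the second follows from $|\det M|\le\|M\|^2$ for a $2\times2$ matrix $M$; and the third from the identity $\det(A^\lambda(d))\,A^\lambda(d)^{-1}=\left[\begin{smallmatrix}0&-\rho\\ -\overline{\rho}&0\end{smallmatrix}\right]$, so that $|\rho|^{-2}\,|\tr(\det(A)A^{-1}B^\lambda(|\rho|w))|\le 2\,|\rho|^{-1}\,\|B^\lambda(|\rho|w)\|$. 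Since $|\rho|\to0$ by \cref{lb-hop}, for $\lambda$ large the point $z=|\rho|w$ with $|w|\le K$ lies in any prescribed neighborhood of the origin, so the analyticity of $B^\lambda$ near $z=0$ and the resolvent bound of \cref{prop:res-bd}, which holds on such a neighborhood, may be invoked at $z=|\rho|w$.

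First I would expand $B^\lambda(z)=\Pi\HH^\lambda\Pi-A^\lambda(d)+D^\lambda(z)$ (see \cref{Bdef}) and bound the three summands, using the identities $\HH^\lambda\vf_0=\lambda^2 v_d\vf_0$, $\HH^\lambda\vf_d=\lambda^2 v_0\vf_d$ from \cref{eq:app-estat} together with the Gaussian estimates of \cref{g-dec}. In the orthonormal basis $\{\vf_0,\tilde{\vf}_d\}$, the diagonal entries of $\Pi\HH^\lambda\Pi$ equal, up to corrections carrying a factor of the exponentially small overlap $\langle\vf_0,\vf_d\rangle$ produced by the Gram--Schmidt step, the numbers $\langle\vf_0,\lambda^2 v_d\vf_0\rangle$ and $\langle\vf_d,\lambda^2 v_0\vf_d\rangle$; each of these is an integral of $\lambda^2|\vf_0|^2$ (resp.\ $\lambda^2|\vf_d|^2$) over $\supp(v_d)\subset B_a(d)$ (resp.\ $\supp(v_0)\subset B_a(0)$), a region on which, by \cref{g-dec}, the relevant state is evaluated at distance $\ge|d|-a>a$ from its concentration point and is thus of size $\lesssim\sqrt\lambda\,\exp(-\tfrac14\lambda((|d|-a)^2-a^2))$; hence these entries are $\lesssim\lambda^{N}\exp(-\tfrac12\lambda((|d|-a)^2-a^2))$ for some fixed $N$. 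Subtracting $A^\lambda(d)$ removes the two occurrences of $\rho$ in the off-diagonal, and the residual off-diagonal discrepancy comes solely from replacing $\vf_d$ by $\tilde{\vf}_d$, hence is a sum of terms each containing a factor $\langle\vf_0,\vf_d\rangle$. Finally, from \cref{Cdef}, \cref{prop:res-bd} and self-adjointness of $\HH^\lambda$, $\|D^\lambda(z)\|\lesssim\|\Pi^\perp\HH^\lambda\Pi\|^2$, and $\|\Pi^\perp\HH^\lambda\Pi\|\lesssim\|\lambda^2 v_d\vf_0\|_{L^2}+\|\lambda^2 v_0\vf_d\|_{L^2}\lesssim\lambda^{N}\exp(-\tfrac14\lambda((|d|-a)^2-a^2))$ by \cref{g-dec}. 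Collecting the three contributions, $\|B^\lambda(|\rho|w)\|\lesssim\lambda^{N}\exp(-\tfrac12\lambda((|d|-a)^2-a^2))+|\langle\vf_0,\vf_d\rangle|\,\bigl(|\rho|+\lambda^{N}\exp(-\tfrac14\lambda((|d|-a)^2-a^2))\bigr)$, uniformly in $|w|\le K$.

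The last step is to divide by $|\rho|$ and insert the lower bound $|\rho|\ge\exp(-\tfrac14\lambda(|d|^2+4\sqrt{\vm}|d|+\gamma_0))$ of \cref{lb-hop}: the first contribution then decays like $\lambda^{N}\exp\bigl(-\tfrac14\lambda[\,2((|d|-a)^2-a^2)-(|d|^2+4\sqrt{\vm}|d|+\gamma_0)\,]\bigr)$, whose exponent is negative because the bracket is positive under the separation hypothesis \cref{eq:spacing}; the $\langle\vf_0,\vf_d\rangle$-contribution tends to $0$ because $|\langle\vf_0,\vf_d\rangle|\to0$ and, again using the lower bound on $|\rho|$, also $|\langle\vf_0,\vf_d\rangle|\,|\rho|^{-1}\lambda^{N}\exp(-\tfrac14\lambda((|d|-a)^2-a^2))\to0$. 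This yields $|\rho|^{-1}\|B^\lambda(|\rho|w)\|\to0$ uniformly in $|w|\le K$, and hence the proposition. I expect the main obstacle to be precisely this final accounting of exponential rates: one must confirm that the Gaussian-small bounds on the entries of $B^\lambda$ genuinely outpace the comparatively weak \emph{lower} bound on $|\rho^\lambda(d)|$ from \cref{lb-hop}, and it is here — not in the Schur-complement reduction — that the quantitative separation \cref{eq:spacing} is used rather than merely the non-overlap condition \cref{eq:minimal spacing between wells}. A secondary, routine point is keeping all estimates uniform in $z=|\rho|w$ over $|w|\le K$, which is the reason \cref{prop:res-bd} is formulated on a full neighborhood of $z=0$.
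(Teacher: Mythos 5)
Your overall architecture matches the paper's: expand $B^\lambda(z)=\Pi\HH^\lambda\Pi-A^\lambda(d)+D^\lambda(z)$, bound the diagonal entries via \cref{eq:app-estat} and the Gaussian decay of \cref{g-dec}, bound $D^\lambda$ by $\norm{\Pi^\perp\HH^\lambda\Pi}^2$ via \cref{prop:res-bd}, treat the Gram--Schmidt corrections as $\mathcal{O}(\langle\vf_0,\vf_d\rangle)$, and divide by the lower bound on $|\rho^\lambda(d)|$. Your reduction of all three quantities to the single statement $|\rho|^{-1}\norm{B^\lambda(|\rho|w)}\to0$ (via $|\det M|\le\norm{M}^2$ and $\det(A)A^{-1}=\left[\begin{smallmatrix}0&-\rho\\-\overline{\rho}&0\end{smallmatrix}\right]$) is a clean streamlining of the paper's separate treatments of $f(w)$ and $g(w)$.

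However, there is a genuine gap at exactly the point you flag as the crux. You bound both factors of $|\vf_0|^2$ on $B_a(d)$ by the \emph{final} upper bound of \cref{eq:upper bound on wave function outside of well} and divide by the \emph{final} lower bound of \cref{lb-hop}. The resulting exponent is $-\tfrac14\lambda\bigl[\,2((|d|-a)^2-a^2)-(|d|^2+4\sqrt{\vm}|d|+\gamma_0)\,\bigr]=-\tfrac14\lambda\bigl[\,|d|^2-4(a+\sqrt{\vm})|d|-\gamma_0\,\bigr]$, and \cref{eq:spacing} only guarantees $|d|^2-4(a+\sqrt{\vm})|d|>0$, \emph{not} $>\gamma_0$: for $|d|$ only slightly larger than $4(\sqrt{\vm}+a)$ the bracket is negative and your bound diverges. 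The loss comes from using the upper bound on the normalization constant $C_\lambda$ in the numerator while using its lower bound (which is where $\gamma_0$ originates) in the denominator; these do not cancel. The paper avoids this by keeping $C_\lambda$ explicit on both sides: one factor of $\vf_0$ is bounded by \cref{eq:upper bound on wave function outside of well keeping the normalization constant} (retaining $C_\lambda$) and $|\rho|$ is bounded below by \cref{eq:lower bound on hopping coefficient with explicit normalization constant} (also retaining $C_\lambda$), so the $C_\lambda$'s cancel in the ratio and one arrives at \cref{term1}, whose exponent $-\tfrac14\lambda|d|^2+(\sqrt{2\vmh}+a)\lambda|d|$ is negative under \cref{eq:spacing} with no $\gamma_0$ slack. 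Your argument needs this same bookkeeping (or the additional hypothesis $|d|^2-4(a+\sqrt{\vm})|d|>\gamma_0$) to close; the same remark applies to your estimate of the $D^\lambda$-term and of the residual $\langle\vf_0,\vf_d\rangle$-contributions, which you assert but do not verify at the level of exponents.
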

We prove \cref{B-bounds}  in \cref{sec:analysis of the various error terms in the reduction scheme}.  Now fix an arbitrary $\varepsilon>0$,
   then for $w\in\CC$ such that $|w^2-1|=\varepsilon$ we have:
   \begin{align}
    |\rho\left(d\right)|^{-2}\ \Big|\ F^\lambda(\ |\rho\left(d\right)|\ w\ )\ \Big| \ \le\
    |f\left(w\right)w|\ +\ |g\left(w\right)| < |w^2-1|,
    \end{align}
    for $\lambda>\lambda_\star(\varepsilon)$. This completes the proof of \cref{thm-split} modulo
     (i) bounds on the hopping coefficient in \cref{sec:hopping} and (ii) the proof of \cref{B-bounds}.

\section{The hopping coefficient}\label{sec:hopping}
In this and the following sections we shall often simplify expressions by writing $\vf$ for $\vf^\lambda_0$ and   $v$ for $v_0$. We have by \cref{eq:app-estat} and \cref{m-hop}: 
\begin{eqnarray*}
	\rho^\lambda\left(d\right) & \equiv & \left\langle \vf^\lambda_0,\HH^\lambda\vf^\lambda_{d}\right\rangle  =  \left\langle \vf_0, \lambda^2 v_{0}\vf_{d}\right\rangle \\
	& = & \lambda^{2}\int\overline{\vf\left(x\right)}v\left(x\right)\exp\left(\ii x\cdot\lambda Ad\right)\vf\left(x-d\right)\dif{x}\,.
\end{eqnarray*}

We first bound $|\rho^\lambda\left(d\right)|$ from above. We note that 
\[|\rho^\lambda\left(d\right)|\le \max_{x\in B_a(0)}|\vf(x-d)|\ \lambda^2\int_{B_a(0)} |v(x)\vf(x)|\dif{x}\ \lesssim \lambda^2\ \max_{x\in B_a(0)}|\vf(x-d)| \,.\]
For $x\in B_a(0)$ we have, since $a<\frac12|d|$, that $|x-d|>a$. Applying \cref{g-dec}
 to $\vf(x-d)$, we deduce the upper bound
\begin{align*}
	|\rho^\lambda(d)| 
	\lesssim\lambda^{2.5}\exp\Big(-\frac{1}{4}\lambda (\left(|d|-a\right)^2-a^2)\Big)\,.
\end{align*}

The proof of a lower bound is more difficult, due to the oscillatory character of the integral \cref{m-hop} defining $\rho^\lambda(d)$. Without any loss of generality we may assume $d= |d| e_{1}\equiv|d|(1,0)$ and we recall that $a<\frac12|d|$. Then,  $Ad=\frac12 e_3\wedge e_1= \frac12 e_2$ and hence the exponent in the integrand of \cref{m-hop} is
  $i x\cdot bAd=i\frac12 b |d| r\sin\theta=\frac{i}2 \lambda |d| r\sin\theta$ in polar coordinates. Thus,
\begin{align}
\rho^\lambda\left(d\right) & =\lambda^{2}\int_{\mathbb{R}^{2}}\overline{\vf\left(x\right)}v\left(x\right)
\exp\left(\ii x\cdot\lambda\frac{1}{2}|d| e_{2}\right)\vf\left(x- |d| e_{1}\right)\dif{x}\nonumber\\
& =\lambda^{2}\int_0^\infty\int_0^{2\pi}\vf\left(r\right)v\left(r\right)
\exp\left(\frac{\ii}2 \lambda |d| r\sin\theta\right)\vf\left(\sqrt{r^{2}+|d|^{2}-2r|d|cos\left(\theta\right)}\right)r\dif{r}\dif{\theta}
\nonumber\\
& =\lambda^{2}\int_0^a\vf\left(r\right)v\left(r\right)L_{|d|}\left(r\right)\dif{r} \nonumber\\
&\textrm{where}\quad  
L_{|d|}(r) := \int_0^{2\pi}\exp\left(\frac{i}2 \lambda |d| r\sin\theta\right)\vf^\lambda\left(\sqrt{r^{2}+|d|^{2}-2r|d|\cos\left(\theta\right)}\right)r\dif{\theta}
\label{Ld-def}\end{align}

We next study the integral kernel $L_{|d|}(r)$. 
Remarkably, although defined as an oscillatory integral, $L_{|d|}(r)$ can be shown to be positive
 and to have a representation as a \emph{non}-oscillatory integral.
 
 \begin{prop}\label{prop:ld-pos} Let $a<\frac{|d|}{2}$. Then, 
 \begin{align}
L_{|d|}(r) & =  C_\lambda\ r\ \exp\left(-\frac14\lambda(r^2+|d|^2)\right)\nonumber\\
&\qquad\qquad\times\ \int_0^\infty\ \exp\left(-\frac{\lambda}{2}(r^2+|d|^2)t\right)
	t^{\alpha-1}(1+t)^{-\alpha}\ I_0\left(\lambda|d|r\sqrt{t(t+1)}\right)\ dt
\label{Ld}\,	\end{align}
where $C_\lambda$ is the positive constant from \cref{eq:upper bound on wave function outside of well keeping the normalization constant}, $\alpha$ from \cref{kummer}, and $z\mapsto I_0(z)$ denotes the modified Bessel function of order zero,
which is strictly positive and grows exponentially for $z\in[0,\infty)$.
 \end{prop}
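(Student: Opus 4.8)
The plan is to start from the definition of $L_{|d|}(r)$ in \cref{Ld-def}, insert the integral representation of $\vf^\lambda$ valid outside the support of $v_0$, and then exchange the order of integration so that the $\theta$-integral becomes a classical Bessel integral. Observe that for $r \le a$ and $|d| > 2a$, the argument $\sqrt{r^2+|d|^2-2r|d|\cos\theta}$ is always $\ge |d|-a > a$, so we may substitute \cref{vf-out}:
\begin{equation*}
\vf^\lambda\!\left(\sqrt{r^2+|d|^2-2r|d|\cos\theta}\right) = C_\lambda \exp\!\left(-\tfrac14\lambda(r^2+|d|^2-2r|d|\cos\theta)\right)\int_0^\infty \exp\!\left(-\tfrac12\lambda(r^2+|d|^2-2r|d|\cos\theta)t\right)t^{\alpha-1}(1+t)^{-\alpha}\,dt.
\end{equation*}
Substituting this into $L_{|d|}(r)$ and using Tonelli/Fubini (all factors are positive once the $\theta$-integral is understood, so absolute convergence is immediate from the exponential decay in $t$ and boundedness in $\theta$) gives
\begin{equation*}
L_{|d|}(r) = C_\lambda\, r\, \exp\!\left(-\tfrac14\lambda(r^2+|d|^2)\right)\int_0^\infty \exp\!\left(-\tfrac{\lambda}{2}(r^2+|d|^2)t\right)t^{\alpha-1}(1+t)^{-\alpha}\left[\int_0^{2\pi}\exp\!\left(\tfrac{\ii}{2}\lambda|d|r\sin\theta + \tfrac12\lambda|d|r(1+2t)\cos\theta\right)d\theta\right]dt.
\end{equation*}

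The next step is to evaluate the inner $\theta$-integral. The exponent is $\beta\cos\theta + \gamma\sin\theta$ with $\beta = \tfrac12\lambda|d|r(1+2t)$ real and $\gamma = \tfrac{\ii}{2}\lambda|d|r$ purely imaginary; combining, $\beta\cos\theta+\gamma\sin\theta = \Re\!\big(\sqrt{\beta^2+\gamma^2}\,e^{\ii(\theta-\phi)}\big)$-type manipulation, or more directly using the standard identity $\int_0^{2\pi} e^{p\cos\theta + q\sin\theta}\,d\theta = 2\pi I_0(\sqrt{p^2+q^2})$ (valid for complex $p,q$ by analytic continuation from the real case, since both sides are entire in $(p,q)$). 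Here $p^2 + q^2 = \tfrac14\lambda^2|d|^2r^2(1+2t)^2 - \tfrac14\lambda^2|d|^2r^2 = \tfrac14\lambda^2|d|^2r^2\big((1+2t)^2-1\big) = \lambda^2|d|^2r^2\,t(t+1)$, so $\sqrt{p^2+q^2} = \lambda|d|r\sqrt{t(t+1)}$ and the inner integral equals $2\pi I_0\!\big(\lambda|d|r\sqrt{t(t+1)}\big)$. Absorbing the $2\pi$ into $C_\lambda$'s normalization convention (or carrying it as an explicit constant — I would check against the stated \cref{Ld} and adjust), this yields exactly \cref{Ld}. Positivity of $L_{|d|}(r)$ is then manifest: $C_\lambda > 0$, $r > 0$, the exponentials are positive, $t^{\alpha-1}(1+t)^{-\alpha} > 0$ for $t > 0$, and $I_0$ is strictly positive on $[0,\infty)$; the integral is therefore strictly positive.

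The main obstacle, and the point requiring the most care, is justifying the complex Bessel identity $\int_0^{2\pi} e^{p\cos\theta+q\sin\theta}d\theta = 2\pi I_0(\sqrt{p^2+q^2})$ with the particular branch of the square root, and equally the interchange of the $t$- and $\theta$-integrations when the integrand is genuinely oscillatory in $\theta$ (so Tonelli does not apply verbatim). I would handle the latter by noting that after pulling out $\exp(\tfrac12\lambda|d|r(1+2t)\cos\theta)$ the double integral converges absolutely in $(t,\theta)$ jointly — the bound $|e^{\ii(\cdots)\sin\theta}| = 1$ together with $\int_0^\infty \exp(-\tfrac{\lambda}{2}(r^2+|d|^2-2r|d|)t)t^{\Re\alpha-1}(1+t)^{-\Re\alpha}\,dt < \infty$ (using $r^2+|d|^2-2r|d| = (|d|-r)^2 > 0$ and $\Re\alpha > 0$) — so Fubini applies. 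For the Bessel identity, the cleanest route is: both sides are entire functions of the single complex variable obtained by writing $p = \zeta\cos\psi$, $q = \zeta\sin\psi$ and rotating $\theta$, reducing to $\int_0^{2\pi}e^{\zeta\cos\theta}d\theta = 2\pi I_0(\zeta)$, which is the standard integral representation of $I_0$ valid for all $\zeta\in\CC$; here $\zeta^2 = p^2+q^2$ and $I_0$ is even, so the choice of square-root branch is immaterial. This completes the proof of \cref{prop:ld-pos}.
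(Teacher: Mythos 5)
Your proof is correct and follows essentially the same route as the paper's: substitute the representation \cref{vf-out} for $\vf^\lambda_{\rm out}$ (valid since $\sqrt{r^2+|d|^2-2r|d|\cos\theta}\ge |d|-a>a$), interchange the $t$- and $\theta$-integrations, and evaluate the angular integral as $2\pi I_0\left(\lambda|d|r\sqrt{t(t+1)}\right)$. The only real difference is the justification of the complex Bessel identity: you argue by analytic continuation in $(p,q)$ from the real case (together with evenness of $I_0$ to dispose of the branch issue), whereas the paper's \cref{lem:Bessel id} performs an explicit complex shift $\theta\mapsto\theta-\ii\delta$ and shows the integral is $\delta$-independent by periodicity — both are valid, and your side remark about the factor of $2\pi$ is apt, since it appears in the paper's own derivation but not in the displayed formula \cref{Ld}, a harmless constant discrepancy.
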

 
To prove \cref{prop:ld-pos} we use the following identity for modified Bessel functions:

\begin{lem}\label{lem:Bessel id}
	For any $\xi,\beta\in\RR$ with $\beta>\xi$ we have $$ \int_{0}^{2\pi}\exp(\ii \xi \sin\theta+\beta\cos\theta)\dif{\theta} = 2\pi I_0(\sqrt{\beta^2-\xi^2}) $$ where $I_0$ is the modified Bessel zeroth function, {\it i.e.} $I_0(x) \equiv J_0(\ii x)$.
\end{lem}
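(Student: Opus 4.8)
The plan is to reduce the two–parameter integral to a standard one-parameter integral representation of $I_0$ by a change of variables on the circle, exploiting the fact that $\xi\sin\theta+\beta\cos\theta$ is, up to a phase shift, a single cosine with complex amplitude. First I would write $\xi\sin\theta+\beta\cos\theta = \Re\!\big(\zeta\, e^{-\ii\theta}\big)$ for a suitable complex constant $\zeta$; concretely, $\zeta = \beta + \ii\xi$, so that $\zeta e^{-\ii\theta} = (\beta\cos\theta+\xi\sin\theta) + \ii(\xi\cos\theta-\beta\sin\theta)$ and the real part is exactly the exponent. However, the exponent in the lemma is $\ii\xi\sin\theta + \beta\cos\theta$, which is \emph{not} the real part of anything — it is genuinely complex — so instead I would treat it directly as $\beta\cos\theta + \ii\xi\sin\theta$ and complete it to the form $\Re(w e^{\ii\theta})$ only after allowing $w$ to be complex: one has $\beta\cos\theta+\ii\xi\sin\theta = \tfrac12(\beta+\xi)e^{\ii\theta} + \tfrac12(\beta-\xi)e^{-\ii\theta}$.

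With that rewriting, the integral becomes
\begin{equation*}
\int_0^{2\pi} \exp\!\Big(\tfrac12(\beta+\xi)e^{\ii\theta} + \tfrac12(\beta-\xi)e^{-\ii\theta}\Big)\dif\theta .
\end{equation*}
Now I would expand both exponentials in their absolutely convergent power series, multiply, and integrate term by term over $[0,2\pi]$; only the diagonal terms (equal powers of $e^{\ii\theta}$ and $e^{-\ii\theta}$) survive, giving
\begin{equation*}
2\pi \sum_{n\ge 0} \frac{1}{(n!)^2}\Big(\tfrac12(\beta+\xi)\Big)^n\Big(\tfrac12(\beta-\xi)\Big)^n
= 2\pi \sum_{n\ge 0} \frac{1}{(n!)^2}\Big(\tfrac{\beta^2-\xi^2}{4}\Big)^n
= 2\pi\, I_0\!\big(\sqrt{\beta^2-\xi^2}\big),
\end{equation*}
using the standard series $I_0(x)=\sum_{n\ge0}(x/2)^{2n}/(n!)^2$. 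The hypothesis $\beta>\xi$ (together with, implicitly, $\beta>-\xi$, i.e. $\beta>|\xi|$, which is what is really needed for $\beta^2-\xi^2>0$; I would note this) guarantees the argument of $I_0$ is a nonnegative real, so there is no ambiguity in the square root and $I_0$ is evaluated on $[0,\infty)$ where it is real, positive, and increasing. Alternatively, and perhaps more cleanly for the writeup, I would substitute $\theta\mapsto\theta+\theta_0$ for a real shift $\theta_0$ chosen so that in the new variable the exponent collapses — but since the coefficients $\tfrac12(\beta\pm\xi)$ are unequal real numbers, no real shift diagonalizes it, and one is led to a \emph{complex} scaling $e^{\ii\theta}\mapsto \sqrt{(\beta-\xi)/(\beta+\xi)}\,e^{\ii\theta}$, which is a contour deformation on the unit circle; by Cauchy's theorem this is legitimate since the integrand is entire in $e^{\ii\theta}$, and it reduces the integral to $\int_0^{2\pi}\exp(\sqrt{\beta^2-\xi^2}\cos\theta)\dif\theta = 2\pi I_0(\sqrt{\beta^2-\xi^2})$, the classical formula.

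I expect essentially no serious obstacle here: the only point requiring a little care is the justification of term-by-term integration (immediate from absolute convergence and uniform bounds on $[0,2\pi]$) or, in the contour-deformation approach, the justification of deforming the circle into an ellipse-like contour, which follows from analyticity of $z\mapsto \exp(\tfrac12(\beta+\xi)z + \tfrac12(\beta-\xi)z^{-1})$ on $\CC\setminus\{0\}$ and closedness of the contours. I would write the series argument, as it is the shortest and makes the positivity of the right-hand side manifest, which is exactly the property needed in the proof of \cref{prop:ld-pos}.
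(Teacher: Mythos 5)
Your argument is correct, and it takes a genuinely different route from the paper. The paper writes the exponent as $-\sqrt{\beta^2-\xi^2}\,\sin(\theta-\ii\delta)$ for a suitable complex shift $\delta$, shows by differentiating under the integral sign and periodicity that $\int_0^{2\pi}\exp(-x\sin(\theta-\ii\delta))\,\dif\theta$ is independent of $\delta$, and then quotes a standard integral representation of $I_0$; this is a contour-shift argument in disguise. You instead split the exponent as $\tfrac12(\beta+\xi)e^{\ii\theta}+\tfrac12(\beta-\xi)e^{-\ii\theta}$ and either extract the constant Fourier coefficient of the resulting Laurent-type series (only the diagonal terms $m=n$ survive, giving $2\pi\sum_n ((\beta^2-\xi^2)/4)^n/(n!)^2$) or rescale $e^{\ii\theta}$ by $\sqrt{(\beta-\xi)/(\beta+\xi)}$ via Cauchy's theorem. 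The series version is self-contained (it needs only the power series of $I_0$, not the integral representation cited from \cite{abramowitz+stegun}), is trivially justified by absolute convergence, and makes the positivity of the result manifest, which is the property actually used in \cref{prop:ld-pos}; it also shows the identity is really polynomial in $\beta^2-\xi^2$ and hence valid without any sign restriction, once $I_0(\sqrt{\cdot})$ is read as the entire function $\sum_n(\cdot)^n/4^n(n!)^2$. Your observation that the stated hypothesis $\beta>\xi$ should really be $\beta>|\xi|$ for $\sqrt{\beta^2-\xi^2}$ to be unambiguous is a fair catch, though harmless in the application, where $\xi=\tfrac{\lambda}{2}|d|r\ge 0$ and $\beta=\lambda|d|r(\tfrac12+t)>\xi$.
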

\begin{proof}
	We have \begin{align*}\ii \xi \sin(\theta)+\beta\cos(\theta)
	&\stackrel{\beta>\xi}{=}-\sqrt{\beta^2-\xi^2}\sin\left(\theta-\ii\left(\ii\frac{\pi}{2}+\frac12\log\left(\frac{\beta/\xi+1}{\beta/\xi-1}\right)\right)\right)\end{align*}
	Now for \emph{any} $\delta\in\CC$, $$ \partial_{\delta} \int_{0}^{2\pi} \exp(-x\sin(\theta-\ii\delta))\dif{\theta} = -\ii \int_{0}^{2\pi} \partial_{\theta}\exp(-x\sin(\theta-\ii\delta))\dif{\theta} = 0 \,, $$ the last equality follows by the fundamental theorem of calculus and the fact the integrand is $2\pi$-periodic in $\theta$. Hence combining these two equations we find that for any $\delta\in\CC$, $$ \int_{0}^{2\pi}\exp(\ii \xi \sin\theta+\beta\cos\theta)\dif{\theta} = \int_{0}^{2\pi}\exp\left(-\sqrt{\beta^2-\xi^2}\sin(\theta-\ii\delta)\right)\dif{\theta}$$ and so for appropriate choice of $\delta$ we find the result using the integral representation of $I_0$ \cite[Eq-n 9.6.16]{abramowitz+stegun}. 
\end{proof}

We  now turn to the proof of \cref{prop:ld-pos}.
Recall by \cref{eq:minimal spacing between wells} that $|x-d|>a$ for $x\in B_a(0)$.
Therefore by \cref{vf-out}), $\vf^\lambda(|x-d|)=\vf^\lambda_{\mathrm{out}}(|x-d|)$ for $|x-d|> a$.
Hence, 
\begin{align*}
	L_{|d|}(r) & = r \int_0^{2\pi}\exp\left(\frac{i}2 \lambda |d| r\sin\theta\right)\vf_{\rm out}^\lambda\left(\sqrt{r^{2}+|d|^{2}-2r|d|\cos\theta}\right)\ \dif{\theta}
\end{align*}
Substitution of the expression \cref{vf-out}  for $\vf^\lambda_{\mathrm{out}}(|x-d|)$ gives
\begin{align*}
L_{|d|}(r) 	& =  C_\lambda\ r\ \int_0^{2\pi}\exp\left(\frac{i}2 \lambda |d| r\sin\theta\right)\ \exp\left(-\frac{1}{4}\lambda\left(r^{2}+|d|^{2}-2r|d|\cos\theta \right)\right)\\
	&\qquad   \times\int_0^{\infty}\exp\left(-\frac{1}{2}\lambda\left(r^{2}+|d|^{2}-2r|d|\cos\theta\right)t\right)t^{\alpha-1}\left(1+t\right)^{-\alpha}\quad \dif{t}\ \dif{\theta}
	\end{align*}
	Combining and reorganizing exponents, and then interchanging order of integration yields:
	\begin{equation}
	L_{|d|}(r)  =  C_\lambda\ r\ \exp\left(-\frac14\lambda(r^2+|d|^2)\right)
	 \int_0^\infty t^{\alpha-1}(1+t)^{-\alpha}\exp\left(-\frac{\lambda}{2}(r^2+|d|^2)t\right)\ \mathcal{I}(t) \ \dif{t} ,
	\label{Ld1}\end{equation}
	where 
\begin{align} \mathcal{I}(t) = \int_0^{2\pi} \exp\left(\ \lambda|d|r\left(-\frac{i}{2}\sin\theta + \left(\frac12+t\right)\cos\theta\right)\ \right) d\theta\ =\ 2\pi I_0\left(\lambda|d|r\sqrt{t(t+1)}\right),\label{eq:the result of the Bessel integral}\end{align}
by an application of \cref{lem:Bessel id} with $\xi=\frac{\lambda}2|d|r$ and $\beta=\lambda|d|r(1/2+t)$. Using this relation for $\mathcal{I}(t,r,\lambda,|d|)$ in \cref{Ld1}, we conclude the representation for $L_{|d|}(r)$ in   \cref{Ld}. This completes the proof of \cref{prop:ld-pos}.

\subsection{Proof of the lower bound for $|\rho^\lambda(d)|$ in \cref{lb-hop}}\label{lb-rho}
The expression for  $L_{|d|}(r) $ in \cref{Ld} has non-negative integrand, so it may be bounded below as follows. The modified the Bessel function $z\mapsto I_0(z)$ has the following property \cite[Eq-n (9.3.14)]{Temme_2014_asymp_methods_For_integrals}: there are positive constants $C_1<C_2$, such that for all $z\ge0$
\begin{align} 
C_1\ \frac{\exp(z)}{\sqrt{2\pi z}+1}\  \leq\ I_0(z)\ \leq C_2\ \frac{\exp(z)}{\sqrt{2\pi z}+1}\,. \label{eq:estimate for modified Bessel}\end{align}

Hence, replacing that estimate in the integrand for $L_{|d|}(r)$ as well as using the basic estimates $\sqrt{t(t+1)}\geq t$ and $\log(\frac{t}{1+t})\geq-\frac{1}{t}$ we get the estimate (making use of \cref{eq:def of v-star}):
 $$ L_{|d|}(r) \geq C_1 C_\lambda r\ee^{-\frac{1}{4}\lambda\left(r^{2}+|d|^{2}\right)}\int_{0}^{\infty}\frac{\ee^{-\frac{1}{2}\lambda\left(|d|-r\right)^{2}t-\lambda\vmh\frac{1}{t}}}{t\sqrt{2\pi\lambda r|d|\sqrt{t\left(t+1\right)}}+t}\dif{t}\,. $$
 
A routine analysis involving Laplace's approximation \cite[Section 2.4]{erdelyi1956asymptotic} follows. The minimum of $ t\mapsto \frac{1}{2}\left(|d|-r\right)^{2}t+\vmh\frac{1}{t}$ is attained at $\frac{\sqrt{2\vmh}}{|d|-r}$, and so the integral is comparable to
$$ D  \frac{1}{\lambda}\exp\left(-\lambda\frac{\sqrt{2\vmh}}{2}\left(|d|-r\right)-\lambda\vmh\frac{|d|-r}{\sqrt{2\vmh}}\right)$$ for some constant $D$ that does not depend on $\lambda$.
 Putting the pieces together we find $$ L_{|d|}(r) \gtrsim C_\lambda \frac{1}{\lambda} r\exp\left(-\frac{1}{4}\lambda\left(|d|^{2}+r^{2}\right)-\lambda\sqrt{2\vmh}\left(|d|-r\right)\right)\, . $$

Finally we return to bounding $|\rho^\lambda\left(d\right)|$ from below. Using the lower bound, just above, on $L_{|d|}(r)$ and $v(r)\le0$, we have
\begin{align}
	|\rho^\lambda\left(d\right)| & \geq  \lambda^{2}\int_0^{a}\vf\left(r\right)\ \left(-v\left(r\right)\right)\ L_{|d|}(r) \dif{r}\nonumber\\
	& \gtrsim C_\lambda \frac{1}{\lambda} \lambda^{2}\int_0^{a}\vf\left(r\right)\ \left(-v\left(r\right)\right)\ r\exp\left(-\frac{1}{4}\lambda\left(|d|^{2}+r^{2}\right)-\lambda\sqrt{2\vmh}\left(|d|-r\right)\right)\dif{r}\nonumber\\
	& \gtrsim C_\lambda \frac{1}{\lambda} \exp\left(-\frac{1}{4}\lambda\left(|d|^{2}+4\sqrt{2\vmh}|d|+a^{2}\right)\right)\ \int_0^{a}\vf\left(r\right)\ \left(-v\left(r\right)\right)\ r\dif{r}\label{eq:lower bound on hopping coefficient with explicit normalization constant}\,.
	\end{align}

The lower bound in \cref{lb-hop} now follows via  \cref{intvfv} and \cref{clam-lb}.

\subsection{Relation between nearest neighbor and beyond nearest neighbor hopping coefficients;
 proof of \cref{comp-hop}}

As explained in the introduction, the following result is useful in analyzing the relation between 
crystalline systems with deep atomic potential wells subject to strong constant magnetic fields, and tight-binding discrete models. In \cite{Shapiro_Weinstein_2020} we follow this program.

\begin{prop}\label{lem:factorize NN out of NNN hopping}
We use the same convention that WLOG $d$ is chosen along the horizontal axis. Fix $\delta=|d|>2a$, i.e., $d=\delta e_1$. Then, there are constants $C_\star$ and $\lambda_\star$ such that for all $\lambda\geq\lambda_\star$ and all $\xi\ge0$:
	\begin{equation} |\rho^\lambda(\left(\delta+\xi\right)e_1)|\ \le\ C_\star\ e^{-\frac18\lambda(\xi^2+2\delta \xi)}\ |\rho^\lambda(\delta e_1)|.\label{rhodx}
	\end{equation}
\end{prop}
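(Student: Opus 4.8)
\textbf{Proof plan for \cref{lem:factorize NN out of NNN hopping}.}
The strategy is to run the entire chain of estimates from \cref{sec:hopping} with $|d|$ replaced by $\delta+\xi$, and extract from each step its explicit dependence on the separation, so that the ratio in \cref{rhodx} telescopes to a clean Gaussian factor. Concretely, recall that the upper bound derivation for $|\rho^\lambda(d)|$ proceeded by $|\rho^\lambda((\delta+\xi)e_1)| \le \lambda^2 \max_{x\in B_a(0)}|\vf(x-(\delta+\xi)e_1)|\, \int_{B_a(0)}|v\vf|$, and then by \cref{g-dec} the maximum is $\lesssim \sqrt\lambda\exp(-\tfrac14\lambda((\delta+\xi-a)^2-a^2))$. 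On the other side, the lower bound \cref{eq:lower bound on hopping coefficient with explicit normalization constant} gives $|\rho^\lambda(\delta e_1)| \gtrsim C_\lambda\lambda^{-1}\exp(-\tfrac14\lambda(\delta^2+4\sqrt{2\vmh}\delta+a^2))\int_0^a \vf(-v)r\,\dif r$, and the last integral is bounded below by a positive constant independent of $\lambda$ via \cref{intvfv}. Dividing, the $C_\lambda$ and the bounded integral factors are controlled by \cref{clam-lb} and \cref{intvfv}, the powers of $\lambda$ combine to $\lambda^{3/2}\cdot(\text{const})$ which is harmless after adjusting $C_\star$ (indeed one can absorb polynomial-in-$\lambda$ prefactors into the exponential for $\lambda\ge\lambda_\star$, or keep them and note $x^2-1>0$ so \cref{rho-quot} still holds with a larger universal constant), and the exponent becomes $-\tfrac14\lambda\big((\delta+\xi-a)^2 - a^2 - \delta^2 - 4\sqrt{2\vmh}\delta - a^2 - (\text{$C_\lambda$, $\vmh$ terms})\big)$.

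The arithmetic to check is that $(\delta+\xi-a)^2 - \delta^2 = \xi^2 + 2\delta\xi - 2a\xi - 2a\delta + a^2$, so the leading $\xi$-dependent part of the exponent is $-\tfrac14\lambda(\xi^2+2\delta\xi)$ plus terms that are either $\xi$-independent (hence bounded by a constant depending on $\delta=|d|$, which we fold into $C_\star$ — but note the statement wants $C_\star$ \emph{universal}, so one must be slightly more careful: the $\delta$-dependent constants must themselves be shown to be dominated by, e.g., $e^{\lambda\delta^2/8}$ type quantities, or one re-examines whether a cruder bound suffices) or linear in $\xi$ with a \emph{negative} coefficient $-\tfrac14\lambda(-2a\xi) = \tfrac12\lambda a\xi > 0$ — wait, this has the wrong sign, so one instead bounds $-2a\xi \le 0$ trivially and discards it, or more simply uses $(\delta+\xi-a)^2-a^2 \ge (\delta-a)^2 - a^2 + (\xi^2+2(\delta-a)\xi)$ and then $(\delta-a)\xi \ge (\delta/2)\xi$ since $\delta>2a$, recovering $\xi^2+2\cdot\tfrac{\delta}{2}\cdot\xi \ge \tfrac12(\xi^2+2\delta\xi)$ up to the factor explaining the $\tfrac18$ versus $\tfrac14$ in \cref{rho-quot}. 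This is exactly where the constraint $|d|>2a$ is used, and it is the one genuinely load-bearing inequality.

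The main obstacle, then, is not any single estimate — all the ingredients are already in hand from \cref{g-dec}, \cref{intvfv}, \cref{clam-lb} and \cref{eq:lower bound on hopping coefficient with explicit normalization constant} — but rather the \emph{bookkeeping} of constants: ensuring that the constant $C_\star$ in \cref{rho-quot}/\cref{rhodx} can be taken independent of $\lambda$ (immediate, since all $\lambda$-dependence is either a bounded-below positive constant times a power of $\lambda$, dominated by the exponential) and tracking which constants may depend on $v_0,d$ versus which must be universal. I would organize the write-up as: (i) quote the upper bound on $|\rho^\lambda((\delta+\xi)e_1)|$; (ii) quote the lower bound on $|\rho^\lambda(\delta e_1)|$; (iii) divide and simplify the exponent using $(\delta-a) > \delta/2$; (iv) absorb all sub-exponential and $\delta$-dependent-but-$\xi$-independent factors into $C_\star$, noting for $\xi=0$ the inequality is trivially true with $C_\star\ge 1$, which fixes the normalization. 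A subtlety worth a remark: for \emph{small} $\xi$ (say $\xi \lesssim 1/\lambda$) the Gaussian factor $e^{-\frac18\lambda(\xi^2+2\delta\xi)}$ is $\approx 1$ and the inequality reduces to $|\rho^\lambda((\delta+\xi)e_1)| \lesssim |\rho^\lambda(\delta e_1)|$, which follows from monotonicity-type comparison of the two-sided bounds in \cref{lb-hop}; for large $\xi$ the Gaussian does all the work. No continuity-in-$\xi$ argument is needed since the bound is claimed pointwise in $\xi$.
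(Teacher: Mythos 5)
There is a genuine gap in your approach. Dividing the \emph{upper} bound on $|\rho^\lambda((\delta+\xi)e_1)|$ by the \emph{lower} bound on $|\rho^\lambda(\delta e_1)|$ cannot yield \cref{rhodx} with a $\lambda$-independent constant, because the two bounds in \cref{lb-hop} do not match: they differ by a factor that is exponentially large in $\lambda$. Carrying out your own arithmetic, the quotient of the two bounds is
\[
\lesssim\ \lambda^{5/2}\exp\Bigl(-\tfrac14\lambda\bigl(\xi^2+2\delta\xi-2a\xi\bigr)\Bigr)\,\exp\Bigl(+\tfrac14\lambda\bigl(2a\delta+4\sqrt{\vm}\,\delta+\gamma_0\bigr)\Bigr),
\]
and the second exponential grows with $\lambda$; it is $\xi$-independent but not $\lambda$-independent, so it cannot be folded into $C_\star$. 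Consequently your argument only proves \cref{rhodx} for $\xi$ beyond a threshold $\xi_0(\delta,a,v_0)$ at which the genuine Gaussian gain $\tfrac18\lambda(\xi^2+2\delta\xi)$ overwhelms this spurious factor. Your fallback for small $\xi$ --- that $|\rho^\lambda((\delta+\xi)e_1)|\lesssim|\rho^\lambda(\delta e_1)|$ ``follows from monotonicity-type comparison of the two-sided bounds'' --- fails for the same reason: already at $\xi=0$ one would need $(\delta-a)^2-a^2\ge \delta^2+4\sqrt{\vm}\,\delta+\gamma_0$, i.e.\ $-2a\delta\ge 4\sqrt{\vm}\,\delta+\gamma_0$, which is false. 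Note also that \cref{comp-hop} is claimed for \emph{all} $\lambda>0$ with a \emph{universal} $C_\star$, which a priori rules out any argument that loses even a fixed $e^{c\lambda}$.

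The paper's proof sidesteps the two-sided bounds entirely and works with the positive, non-oscillatory representation of the kernel from \cref{prop:ld-pos}. One writes $|\rho(\delta)|=\lambda^2\int_0^a\vf(r)|v(r)|e^{-\frac18\lambda(r^2+\delta^2)}\tilde L_\delta(r)\,\dif r$, where the explicit Gaussian factor produces exactly $e^{-\frac18\lambda(\xi^2+2\delta\xi)}$ (and a harmless extra decay) upon replacing $\delta$ by $\delta+\xi$, and then shows the remaining kernel satisfies $\tilde L_{\delta+\xi}(r)\le C_\star\tilde L_\delta(r)$ pointwise in $r\in[0,a]$. That monotonicity comes from sandwiching $\tilde L_\delta$ between universal multiples of an integral whose phase $\phi(\delta,r,t)=-\tfrac12(r^2+\delta^2)(t+\tfrac14)+\delta r\sqrt{t(t+1)}$ is strictly decreasing in $\delta$ precisely because $r\le a<\delta/2$ --- this is where the hypothesis $|d|>2a$ is load-bearing, as you correctly anticipated, but it enters through the kernel's phase rather than through exponent bookkeeping. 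The universal constant is just the ratio $C_2/C_1$ from the two-sided Bessel estimate \cref{eq:estimate for modified Bessel}. If you want to salvage your write-up, replace steps (i)--(iv) with this kernel-level comparison; the upper/lower bounds of \cref{lb-hop} should not appear in the argument at all.
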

To prove \cref{comp-hop}, choose $\xi := (x-1)|d| = (x-1)\delta$.
	 
\begin{proof} Recall that $\rho(\delta e_1)=\lambda^{2}\int_0^\infty\vf\left(r\right) v\left(r\right) L_{\delta e_1}\left(r\right)\dif{r}$, where the kernel $L_{\delta e_1}\left(r\right)$ is given in \cref{Ld}. Now this kernel is composed of an exponential factor $ \exp\left(-\frac14\lambda(r^2+\delta^2)\right)$, which decreases with increasing $\delta$, multiplied by an integral. 
If there were only this exponential factor, we could complete the proof using its monotonicity and multiplicative properties. 
The integral factor however is not evidently a decreasing function of $\delta$ due to the modified Bessel function, $I_0$, appearing in the integrand.
Thus (slightly abusing the notation to write $\rho(\delta)$ instead of $\rho(\delta e_1)$) we split the exponential and rewrite $|\rho(\delta)|$ as:
\begin{equation} 
|\rho(\delta)|= \lambda^{2}\int_0^a \vf\left(r\right) |v\left(r\right)|\ \exp\left(-\frac18\lambda(r^2+\delta^2)\right)\ \tilde{L}_{\delta}\left(r\right)\dif{r},\label{rho2}
\end{equation}
where $\supp(v)\subset B_a(0)$ and for all $r\le a$:
\begin{equation}
\tilde{L}_{\delta}\left(r\right)\ =C_\lambda r\ 
 \int_0^\infty\ \exp\left(-\frac{\lambda}{2}(r^2+\delta^2)\left(t+\frac14\right)\right)
	t^{\alpha-1}(1+t)^{-\alpha}\ I_0\left(\lambda \delta r\sqrt{t(t+1)}\right)\ dt
\label{tLd}\, .\end{equation}

We now claim that there is a constant $C_\star$, such that  for all $0\le r\le a$ the function $\delta\mapsto \tilde{L}_{\delta}\left(r\right):(0,\infty)\to\RR$
 satisfies  
 \begin{equation}
 \tilde{L}_{\delta+\xi}\left(r\right)\le C_\star\ \tilde{L}_{\delta}\left(r\right).
 \label{tL-est}\end{equation}
Using \cref{tL-est} to bound the expression in \cref{rho2} implies the assertion of \cref{lem:factorize NN out of NNN hopping}.

So to conclude the proof of \cref{lem:factorize NN out of NNN hopping} we now verify \cref{tL-est}. Employing \cref{eq:estimate for modified Bessel} again, we have for $0\le r\le a$ and $\delta>0$:
\begin{align}
C_1\ \tilde{I}_\delta(r)\ &\leq \tilde{L}_\delta(r)\ \leq\ C_2\ \tilde{I}_\delta(r)\ , \quad \textrm{where}\ \label{Ld-bd} \\ 
	 \tilde{I}_\delta(r) &:=  C_\lambda\ r\ \int_0^\infty\ \frac{t^{\alpha-1}(1+t)^{-\alpha}}{\sqrt{2\pi\lambda\delta r\sqrt{t(t+1)}}+1}\exp\left(\lambda\phi(\delta, r,t)\right)
	\dif{t},\ \textrm{and}\nonumber\\
\phi(\delta,r,t) &:= -\frac{1}{2}(r^2+\delta^2)\left(t+\frac{1}{4}\right)+\delta r\sqrt{t(t+1)}
	\nonumber \end{align}

For any $r\in[0,a]$ and $t>0$, $\delta\mapsto \phi(\delta,r,t)$ is strictly decreasing. 
Indeed, note that  $ \phi'(\delta)  = -\delta \left(t+\frac14\right)+r\sqrt{t(t+1)}$. Moreover, by \cref{eq:minimal spacing between wells} we have  $a<\frac{\delta}2$ and hence $\frac{r}\delta\le\frac12$.

 Therefore, $$\frac{\phi^\prime(\delta)}{\delta}\le- \left(t+\frac14\right)+\frac 12 \sqrt{t(t+1)}\le -\frac{t}2 < 0.$$ for all $t>0$ and $r\in[0,a]$. Hence,   $\delta\mapsto\phi(\delta)$ is a strictly decreasing function for for all $t>0$ and $r\in[0,a]$. Since the prefactor of the exponential in the integrand of
  $\tilde{I}_\delta(r)$ is also decreasing as a function of $\delta$, we have that $\tilde{I}_\delta(r)$ is decreasing with $\delta$. Applying this monotonicity and \cref{Ld-bd} we have: for any $r\in[0,a]$ and all $\xi\ge0$:
	\begin{align*} 
	\tilde{L}_{\delta+\xi}(r) &\leq C_2\ \tilde{I}_{\delta+\xi}(r) \leq  C_2\ \tilde{I}_{\delta}(r)\ \leq  C_2\ C_1^{-1}\ \tilde{L}_{\delta}(r).
	\end{align*} 
	This concludes the proof of \cref{tL-est} with $C_\star=C_2\ C_1^{-1}$.
\end{proof}

\subsection{Proof of  \cref{intvfv}}\label{int-lb}

We shall write $B_r=B_r(0)$, since all discs in this proof will be centered at the origin.
	We seek a lower bound for $\int \vf |v| \dif{x} = 2\pi\int_0^{a}\vf\left(r\right) |v(r)| r\dif{r} $.
	
	 Fix $a_2>a_1>a$ and let $\Theta\in C^\infty(\RR^2\to[0,1])$ be a cut-off function supported in $B_{a_2}$ and equal to $1$ in $B_{a_1}$. Recall that  $\vf$ is a radial function and it also satisfies 
	$$ (P^2+\lambda^2 v(x)+\frac{1}{4}\lambda^2 |x|^2 -e\Id)\vf =0. $$
Taking the $L^2$ inner product  with $\Theta$ and noting that $\Theta\equiv1$ on the $\supp(v)$ yields 

	\begin{align}
	\begin{split}
	\lambda^2\ \int_{B_a} |v(x)|\vf(x) \dif{x} = 	\lambda^2\left\langle |v|\vf,\Theta\right\rangle&= \left\langle \vf,
		              P^2\Theta\right\rangle+\left\langle\left(\frac{1}{4}\lambda^2 |x|^2-e\Id\right)\vf,
		                     \Theta \right\rangle \,.
	\end{split}
	\label{eq:inner product of cut-off with eigen equation}
	\end{align}
	
To bound the right hand side, we note (i) $P^2\Theta$ is supported within the set $|x|\ge a_1>a$, where $\vf$ satisfies the pointwise exponentially small bound \cref{eq:upper bound on wave function outside of well} and that 
 on $\supp(\Theta) = B_{a_2}$ we have $\frac{1}{4}\lambda^2 x^2+|e^\lambda|\ge C\lambda^2$ by 
 \cref{eq:gs-asym}. Therefore
 \begin{align}
		\int_{\RR^2} |v|\vf & \geq\  C \lambda^2 \int_{B_{a_1}} \vf(x)\ dx\ -\  C^\prime \exp(-c \lambda)\,.
		\label{eq:lower bound on integral of potential and ground state in terms of ground state}
	\end{align}

The proof of \cref{intvfv} has now been reduced to a lower bound on  $\lambda^2 \int_{B_{a_1}} \vf$.

By normalization of $\vf$ and the pointwise Gaussian bound for $|x|>a$ (\cref{g-dec}), 
\begin{equation}
 1\ =\ \int_{\RR^2}\vf^2\ = \int_{B_{a_1}}\vf^2\ + \  \int_{\RR^2\setminus B_{a_1}}\vf^2\ \le\ \|\vf\|_{_{L^\infty(B_{a_1})}}\cdot \int_{B_{a_1}}\vf\ +\ C e^{-c\lambda}
\label{lb-nm}\end{equation}

To bound $\|\vf\|_{_{L^\infty(B_{a_1})}}$ we use the Sobolev-type bound:
 \begin{align}
		\norm{\vf}_{L^\infty(B_{a_1})} \leq C \left(\ \norm{P^2\vf}_{L^2(B_{a_2})} + \norm{\vf}_{L^2(B_{a_2})}\ \right)\,.
\label{Sobelev}	\end{align}
	Now $-P^2 \vf = (\lambda^2 v+\frac{1}{4}\lambda^2 |x|^2 -e\Id)\vf$, and for 
 $x\in B_{a_2}$ we have 
 \[\left(\lambda^2 v(x)+\frac{1}{4}\lambda^2 |x|^2 - e^\lambda\right)\vf(x) \leq C \lambda^2\vf(x).\]
 and  hence  $\norm{P^2\vf}_{L^2(B_{a_2})} \leq C \lambda^2$. It follows from \cref{Sobelev} that 
 \begin{align} \norm{\vf}_{L^\infty(B_{a_1})} \leq C \lambda^2\,. \label{eq:vf-infty}\end{align}
 This proves  \cref{vf-in-bd}, the first bound in \cref{intvfv}.
The proof of \cref{intvfv} is now completed by substituting \eqref{eq:vf-infty} into \cref{lb-nm} to obtain 
\cref{eq:intvfv}.

\appendix

\section{Proof of \cref{B-bounds}}\label{sec:analysis of the various error terms in the reduction scheme}
 
\subsection{Estimation of  $wf\left(w\right)\ =\ w\ \left|\rho\left(d\right)\right|^{-1}\tr\left(B\left(z\right)\right)\Big|_{z=\left|\rho\left(d\right)\right|w}$}\label{A1}

Recall the definition of the correction $2\times2$ matrix $B\left(z\right) \equiv \Pi \HH\Pi - A^\lambda(d)\ + D^\lambda(z)$, where 
\[ \Pi \HH\Pi - A^\lambda(d) =
 \begin{pmatrix} 
 \left\langle  \vf_0,\HH\vf_0\right\rangle & \left\langle  \vf_0,\HH(\tilde\vf_d-\vf_d)\right\rangle \\
 \left\langle (\tilde\vf_d-\vf_d),\HH  \vf_0\right\rangle & \left\langle  \tilde\vf_d,\HH\tilde\vf_d\right\rangle
 \end{pmatrix}\]
 and 
\[ D^{\lambda}(z) = -\Pi \HH\Pi^{\perp}\left(\Pi^{\perp}(\HH-z\Id)\Pi^{\perp}\right)^{-1}\Pi^{\perp}\HH\Pi. \]
We may bound the norm of $D^\lambda(z)$ via the resolvent estimate \cref{res-perp}:
\begin{align*}
\left\| D^\lambda(z) \right\| & \leq C'\norm{\Pi \HH\Pi^{\perp}\left(\Pi^{\perp}(\HH-z\Id)\Pi^{\perp}\right)^{-1}\Pi^{\perp}\HH\Pi}\ 
	\leq  C^{\prime\prime}\norm{\Pi \HH\Pi^{\perp}}^{2}\,.
\end{align*}
By linearity and continuity of the trace
\begin{align*}
	\left|\tr\left(B\left(z\right)\right)\right| & \leq \left|\left\langle \vf_0,\HH\vf_0\right\rangle \right|+\left|\left\langle \tilde{\vf}_{d},\HH\tilde{\vf}_{d}\right\rangle \right| + C^{\prime\prime}\norm{\Pi \HH\Pi^{\perp}}^{2}\ .
\end{align*}
Pick any constant, $K>1$, and let $z=\left|\rho\left(d\right)\right| w$, where $|w|\le K$. 
Therefore, for $\lambda\gg1$
\begin{equation}
 |f(w)| \le \left|\rho\left(d\right)\right|^{-1}\left|\left\langle \vf_0,\HH\vf_0\right\rangle \right| + 
 \left|\rho\left(d\right)\right|^{-1}\left|\left\langle \tilde{\vf}_{d},\HH\tilde{\vf}_{d}\right\rangle \right| + C^{\prime\prime}\left|\rho\left(d\right)\right|^{-1}\norm{\Pi \HH\Pi^{\perp}}^{2}
 \label{tr-bound}\end{equation}

We now bound each of the three terms in \cref{tr-bound} separately in the order that they appear.

\subsubsection{Bound on $ \left|\rho\left(d\right)\right|^{-1}\left|\left\langle \vf_0,\HH\vf_0\right\rangle \right|$} 

Using \cref{eq:app-estat} and \cref{eq:upper bound on wave function outside of well} and, by \cref{eq:minimal spacing between wells}
  that $|x-d|<a$ implies $|x|>a$:
\begin{align*}
	\left|\left\langle \vf_0,\HH\vf_0\right\rangle \right| & =  \left| \left\langle \vf_0,v_{d}\vf_0\right\rangle \right|\ \leq 
	 \lambda^{2}\ \int_{B_a(d)}\left|\vf_0\right|^{2}\lesssim\lambda^2 \max_{z\in B_a(d)}|\vf(x)|\times\max_{z\in B_a(d)}|\vf(x)|\,.
\end{align*}
For one of these factors, we use the upper bound in \cref{eq:upper bound on wave function outside of well keeping the normalization constant} (keeping $C_\lambda$ explicit) and for the other one we use the upper bound in \cref{eq:upper bound on wave function outside of well}. This yields $$ \left|\left\langle \vf_0,\HH\vf_0\right\rangle \right|\lesssim\lambda^2 \times C_\lambda \frac{1}{\sqrt{\lambda}}\exp\left(-\frac14\lambda(|d|-a)^2\right)\times \sqrt{\lambda}\exp\left(-\frac14\lambda\left(\left(|d|-a\right)^2-a^2\right)\right) $$ 
where $C_\lambda$ is the particular constant from \cref{eq:upper bound on wave function outside of well keeping the normalization constant}. 

Using the lower bound on $\rho^\lambda(d)$ of \cref{lb-hop}, or more specifically keeping $C_\lambda$ explicit (so, really using \cref{eq:lower bound on hopping coefficient with explicit normalization constant}) so that it cancels with the same factor of $C_\lambda$ above we have for $\lambda\gg1$:
\begin{align}  \left|\rho\left(d\right)\right|^{-1}\left|\left\langle \vf_0,\HH\vf_0\right\rangle \right| \ &\lesssim \lambda^2 \frac1\lambda \exp\left(-\frac{1}{2}\lambda\left(|d|-a\right)^{2}+\frac14 \lambda a^2+\frac14\lambda\left(|d|^2+4\sqrt{2\vmh}|d|+a^2\right)\right)\nonumber\\
&=\lambda  \exp\left(-\frac{1}{4}\lambda|d|^2+(\sqrt{2\vmh}+a)\lambda|d|\right)\,.
\label{term1}\end{align}
Hence as long as $$|d| > 4(\sqrt{2\vmh}+a)$$ and $\lambda$ is sufficiently large, $\left|\rho\left(d\right)\right|^{-1}\left|\left\langle \vf_0,\HH\vf_0\right\rangle \right|$ decays exponentially to zero as $\lambda\to\infty$ and we have our result. Using \cref{eq:def of v-star}, this is asymptotically \cref{eq:spacing}.

Similarly, $ \left|\rho\left(d\right)\right|^{-1}\left|\left\langle \vf_d,\HH\vf_d\right\rangle \right|\to0$ as $\lambda\to\infty$. 

\subsubsection{Bound on $\left|\rho\left(d\right)\right|^{-1}\left|\left\langle \tilde{\vf}_{d},\HH\tilde{\vf}_{d}\right\rangle \right|$}

We recall that  $
	\tilde{\vf}_{d}  \equiv  \left(1-\left|\left\langle \vf,\vf_{d}\right\rangle \right|^{2}\right)^{-1}\left(\vf_{d}-\left\langle \vf,\vf_{d}\right\rangle \vf\right)$. By our pointwise decay estimates on $\vf_0$, we have 
	$\left|\left\langle \vf_0,\vf_d\right\rangle\right|\lesssim e^{-c\lambda}$ and therefore  
	 \begin{align*}
	\left|\left\langle \tilde{\vf}_{d},\HH\tilde{\vf}_{d}\right\rangle \right| 
	& \lesssim \left(1+e^{-c\lambda}\right)\left(\left|\left\langle \vf_{d},\HH\vf_{d}\right\rangle \right|
	+e^{-c\lambda}\left|\left\langle \vf_{d},\HH\vf_0\right\rangle \right|+e^{-c\lambda}\left|\left\langle \vf_0,\HH\vf_0\right\rangle \right|\right)
\end{align*}
Using  \cref{eq:app-estat}, note that
$\left|\left\langle \vf_{d},\HH\vf_0\right\rangle \right|=\left|\rho^\lambda(d) \right|$. Therefore, 
multiplication by  $\left|\rho\left(d\right)\right|^{-1}$ and using \cref{term1} we conclude that 
$\left|\rho\left(d\right)\right|^{-1}\left|\left\langle \tilde{\vf}_{d},\HH\tilde{\vf}_{d}\right\rangle \right|\to0$ as $\lambda\to\infty$. 

\subsubsection{Bound on $\norm{\Pi \HH\Pi^{\perp}}_{\mathcal{B}(L^2)}^{2}$}
\label{subsubsec:controlling the non-linear term}
We may equivalently bound the adjoint $\Pi^\perp \HH\Pi$. Using the relations \cref{eq:app-estat} and that $\left|\left\langle \vf_0,\vf_d\right\rangle\right|\lesssim e^{-c\lambda}$, we have
 \begin{align*}
 \HH\Pi &= \left\langle\vf_0,\cdot\right\rangle \HH\vf_0+\left\langle\tilde{\vf}_d,\cdot\right\rangle \HH\tilde{\vf}_d = \left(\left\langle\vf_0,\cdot\right\rangle +\mathcal{O}(e^{-c\lambda})\left\langle\tilde{\vf}_d,\cdot\right\rangle\right)v_d\vf_0
  +\left\langle\tilde{\vf}_d,\cdot\right\rangle v_0 {\vf}_d\,.
  \end{align*}
  Therefore, $\|\HH\Pi\|_{\mathcal{B}(L^2)} \lesssim \|v_d\vf_0\|_{L^2} + 
\| v_0 {\vf}_d\|_{L^2} \lesssim \left| \left\langle \vf_0,v_d\vf_0\right\rangle\right| = \left|\left\langle \vf_0,\HH\vf_0\right\rangle \right|$. Furthermore, $\|\Pi^\perp \HH\|_{\mathcal{B}(L^2)}$ satisfies the same bound.  Finally, multiplication by $\left|\rho\left(d\right)\right|^{-1}$ and using \cref{term1} we have that 
$|\rho^{-1}|\norm{\Pi \HH\Pi^{\perp}}_{\mathcal{B}(L^2)}^{2}\to0$ as $\lambda\to\infty$.

\subsection{Estimation of $g(w)$, given by \cref{gw}}

We need to bound the expression
\begin{equation} g(w)=\left(\det\left(\ \left|\rho\left(d\right)\right|^{-1}B\left(z\right)\right)+ 
\left( \left|\rho\left(d\right)\right|^{-1}\rho(d) B_{21}(z)\ +\ \left|\rho\left(d\right)\right|^{-1}\overline{\rho(d)} B_{12}(z) \right)\ \right)\Big|_{z=\left|\rho\left(d\right)\right|  w},
\label{gw1}
\end{equation}
where $B(z)=\Pi \HH \Pi - A^\lambda(d) + D^\lambda(z)$ was already studied in \cref{A1}. The latter two terms in \cref{gw1} tend to zero as $\lambda\to\infty$ by very similar estimates to those above. Indeed, we bound $ \left|\det\left(\left|\rho\left(d\right)\right|^{-1}B\left(z\right)\right)\right|$, using that the determinant of a $2\times2$ matrix  is bounded in terms of the square of its largest matrix element.
The magnitude of the diagonal elements of  $\Pi \HH \Pi -A$ and the $\|D^\lambda(z)\|$ have already been controlled above.
It remains to bound the magnitude of the off-diagonal terms elements of $\Pi \HH \Pi -A$. A representative bound, using $\left\langle\vf_0,\vf_d\right\rangle\lesssim e^{-c\lambda}$, is
$
|\ip{\vf}{\HH(\tilde{\vf}_d-\vf_0)}| \lesssim e^{-c\lambda}\ \left( \left|\ip{\vf_0}{\HH\vf_0}\right|\ +\ 
|\ip{\vf_0}{\HH\vf_d}|  \right),$
which tends to zero as $\lambda\to\infty$ by earlier bounds.

This completes the proof of \cref{B-bounds}.

\section{Resolvent bound: $\left(\ \Pi^\perp\ ({\HH}^{\lambda}-z\Id)\ \Pi^\perp\ \right)^{-1}\in \mathcal{B(\VV^\perp)}$ via energy estimates--  proof of  \cref{prop:res-bd}}
\label{sec:en-est}

Our goal is to prove that for all  $z$ sufficiently near  the origin

\begin{equation}
 \Big\|\ \left[\ \Pi^\perp\ ({\HH}^{\lambda}-z\Id)\ \Pi^\perp\right]^{-1}\ \Big\|_{\mathcal{B}(\VV^\perp)}\lesssim 1
 \label{eq:res-bd}\,.
 \end{equation}
  
At the heart of the proof of \cref{eq:res-bd} is an energy estimate for $\HH^\lambda$ restricted to $\VV^\perp$,
 which we now explain. Recall that $h^\lambda$ denotes the atomic magnetic Hamiltonian, \eqref{eq:the one-well Hamiltonian}, with $b=\lambda$,
 and $(e_0^\lambda,\vf_0)$ its normalized ground state eigenpair: $(h^\lambda-e_0^\lambda)\vf^\lambda_0=0,\ \vf^\lambda_0>0$. Furthermore, $h_d^\lambda=\hat{R}^dh^\lambda (\hat{R}^{d})^\ast$ denotes the magnetic translation of $h^\lambda$;  $\vf^\lambda_d=\hat{R}^d\vf^\lambda_0$, the magnetically translated ground state of $h^\lambda$; and $v_d(x)=v(x-d)$ the 
 atomic potential, $v_0$, re-centered at $x=d$. 
 The key to \cref{eq:res-bd} is:

\begin{thm}\label{thm:en-est}
There exist positive constants $\lambda_\star, C_1$ and $C_2$ such that for all $\lambda>\lambda_\star$ the following holds:
 Suppose $\psi\in H^2(\RR^2)$ with $\psi\in\VV^\perp=\{\vf_0,\vf_d\}^\perp$, {\it i.e.}
 \[ \left\langle\vf_0,\psi\right\rangle = 0\quad {\rm and}\quad \left\langle\vf_d,\psi\right\rangle = 0. \]
 Then, 
 \begin{equation}
 \left\langle \psi,\HH^\lambda\psi\right\rangle \ge C_1 \|\psi\|^2\ +\ C_2\lambda^{-2}\ \|(P-\lambda Ax)\psi\|^2
 \label{energy-est}\end{equation}
 Here, $C_1,C_2\in\mathcal{O}(1)$.
\end{thm}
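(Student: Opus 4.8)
The plan is to first prove the purely scalar estimate $\langle\psi,\HH^\lambda\psi\rangle\ge c\,\norm{\psi}^2$ for all $\psi\in\VV^\perp\cap H^2(\RR^2)$ and $\lambda$ large, with $c>0$ depending only on $v_0$, and then to deduce the kinetic term in \cref{energy-est} by a soft argument. For the latter step, expand $\langle\psi,\HH^\lambda\psi\rangle=\norm{(P-\lambda Ax)\psi}^2+\lambda^2\langle\psi,V\psi\rangle-e_0^\lambda\norm{\psi}^2$; since the two wells have disjoint supports one has $-\vm\le V\le 0$, and $e_0^\lambda<0$ for $\lambda$ large (recall $e_0^\lambda=-\vm\lambda^2+\mathcal{O}(\lambda)$, cf.\ the remark after \cref{gap}), so $\norm{(P-\lambda Ax)\psi}^2\le\langle\psi,\HH^\lambda\psi\rangle+\vm\lambda^2\norm{\psi}^2$; dividing by $\lambda^2$ and invoking the scalar bound gives $\lambda^{-2}\norm{(P-\lambda Ax)\psi}^2\le(\lambda^{-2}+\vm/c)\langle\psi,\HH^\lambda\psi\rangle$, and then $C_1:=c/2$, $C_2$ sufficiently small, yield \cref{energy-est}.

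For the scalar bound I would use a magnetic IMS localization. Fix radii $a<r_1<r_2<|d|/2$ and a smooth partition of unity $J_0^2+J_d^2+J_\infty^2=1$ with $J_0\equiv1$ on $B_{r_1}(0)$ and $\supp J_0\subset B_{r_2}(0)$, $J_d$ the analogue centred at $d$ (disjoint from $J_0$ since $r_2<|d|/2$), and $J_\infty$ the complementary function; all $J_j$ and $\nabla J_j$ are bounded uniformly in $\lambda$, and each $\supp\nabla J_j$ lies in the annular set $A:=\{r_1\le|x|\le r_2\}\cup\{r_1\le|x-d|\le r_2\}$, disjoint from $\supp V$ since $r_1>a$. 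The magnetic vector potential commutes through the cutoffs, so the IMS identity reads
\[\langle\psi,\HH^\lambda\psi\rangle=\langle J_0\psi,\HH^\lambda J_0\psi\rangle+\langle J_d\psi,\HH^\lambda J_d\psi\rangle+\langle J_\infty\psi,\HH^\lambda J_\infty\psi\rangle-\sum_{j}\norm{|\nabla J_j|\,\psi}^2.\]
On $\supp J_0$ we have $v_d\equiv0$, hence $\HH^\lambda J_0\psi=(h^\lambda-e_0^\lambda)J_0\psi$; moreover $J_0\psi$ is exponentially close to $\{\vf_0\}^\perp$, because $\langle\vf_0,J_0\psi\rangle=-\langle(1-J_0)\vf_0,\psi\rangle$ with $1-J_0$ supported in $\{|x|>r_1\}$, $r_1>a$, where $\vf_0$ obeys the Gaussian bound of \cref{g-dec}, giving $|\langle\vf_0,J_0\psi\rangle|\lesssim e^{-c\lambda}\norm{\psi}$ (and symmetrically near $d$). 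On $\supp J_\infty$ we have $V\equiv0$, so $\langle J_\infty\psi,\HH^\lambda J_\infty\psi\rangle=\norm{(P-\lambda Ax)J_\infty\psi}^2-e_0^\lambda\norm{J_\infty\psi}^2\ge\tfrac12\vm\lambda^2\norm{J_\infty\psi}^2$ for $\lambda$ large.

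The crux, and the step I expect to be the main obstacle, is that the localization error $\sum_j\norm{|\nabla J_j|\psi}^2$ is only $\mathcal{O}(\norm{\psi}^2)$, of the same order as the single-well gap $C_{\rm gap}\norm{\psi}^2$, so the naive bound $\langle J_0\psi,\HH^\lambda J_0\psi\rangle\ge C_{\rm gap}\norm{J_0\psi}^2-\mathcal{O}(e^{-c\lambda})$ will not close the estimate. To get around this I would prove a \emph{weighted} single-well estimate: for $\lambda$ large and $g\in\mathrm{dom}(h^\lambda)$ with $g\perp\vf_0$,
\[\langle g,(h^\lambda-e_0^\lambda)g\rangle\ \ge\ c_1\,\norm{g}_{L^2(B_a(0))}^2+c_2\,\lambda\,\norm{g}_{L^2(\RR^2\setminus B_a(0))}^2,\]
with $c_1,c_2>0$ depending only on $v_0$ (and the translated version for $h_d^\lambda,\vf_d$). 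This should follow from a $\lambda$-dependent convex combination of \cref{gap} with the elementary pointwise bound $\lambda^2v_0(x)-e_0^\lambda\ge-C_0\lambda$ for $|x|\le a$ and $\lambda^2v_0(x)-e_0^\lambda\ge\tfrac12\vm\lambda^2$ for $|x|>a$ (a consequence of $v_0\ge v_\mathrm{min}$, $v_0\equiv0$ off $B_a(0)$, and $-e_0^\lambda=\vm\lambda^2+\mathcal{O}(\lambda)$): weighting \cref{gap} by $1-\varepsilon/\lambda$ and the pointwise bound by $\varepsilon/\lambda$, with $\varepsilon>0$ small, keeps the $B_a(0)$-coefficient bounded below by a positive constant while promoting the outside coefficient to order $\lambda$.

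Applying the weighted estimate to $g_0:=J_0\psi-\langle\vf_0,J_0\psi\rangle\vf_0$ (and its near-$d$ analogue), and using $J_0\equiv1$ on $B_a(0)$, gives $\langle J_0\psi,\HH^\lambda J_0\psi\rangle\ge\tfrac{c_1}2\norm{\psi}_{L^2(B_a(0))}^2+\tfrac{c_2}2\lambda\norm{J_0\psi}_{L^2(\RR^2\setminus B_a(0))}^2-Ce^{-c\lambda}\norm{\psi}^2$. Because $J_d\equiv0$ on $\{|x|\le r_2\}$, the pointwise identity $|\psi|^2=|J_0\psi|^2+|J_\infty\psi|^2$ holds on the near-$0$ part of $A$ (symmetrically near $d$), whence $\sum_j\norm{|\nabla J_j|\psi}^2\lesssim\norm{J_0\psi}_{L^2(\RR^2\setminus B_a(0))}^2+\norm{J_d\psi}_{L^2(\RR^2\setminus B_a(d))}^2+\norm{J_\infty\psi}^2$; for $\lambda$ large each of these is absorbed into the corresponding surviving term (the $J_0,J_d$ terms having coefficient of order $\lambda$ and the $J_\infty$ term of order $\lambda^2$). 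Finally, decomposing $\norm{\psi}^2$ over $B_a(0)$, $B_a(d)$, the collars $\{a\le|x|<r_1\}$ and $\{a\le|x-d|<r_1\}$ (where $J_0$, resp.\ $J_d$, is $\equiv1$), the set $A$, and the far region $\RR^2\setminus(B_{r_2}(0)\cup B_{r_2}(d))$ (where $J_\infty\equiv1$), and noting that each piece is dominated by one of the surviving terms with coefficient at least $\min(\tfrac{c_1}2,\,\mathrm{const}\cdot\lambda)$, one gets $\langle\psi,\HH^\lambda\psi\rangle\ge\tfrac{c_1}2\norm{\psi}^2-Ce^{-c\lambda}\norm{\psi}^2\ge c\,\norm{\psi}^2$ for $\lambda$ large, completing the proof.
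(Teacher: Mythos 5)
Your proposal is correct and follows essentially the same route the paper takes (and defers to its references for details): a magnetic IMS partition of unity, the single-well gap estimate (\cref{lem:at-en-est}) near each well, and the Landau-plus-constant estimate $(P-\lambda Ax)^2-e_0^\lambda\gtrsim \vm\lambda^2$ away from both wells, followed by a soft recovery of the kinetic term. The weighted single-well estimate you introduce via a $\lambda$-dependent convex combination, in order to absorb the $\mathcal{O}(\norm{\psi}^2)$ localization error into a coefficient of order $\lambda$ outside $B_a$, is precisely the device used in the cited detailed proofs, so you have correctly identified and resolved the one step the paper leaves implicit.
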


\cref{thm:en-est} implies that $\Pi^\perp\ {\HH}^{\lambda}\ \Pi^\perp$ 
 and hence $\Pi^\perp\ ({H}^{\lambda}-z\Id)\ \Pi^\perp$, for $z$ sufficiently small,  are invertible on $L^2(\RR^2)$ 
 and satisfy \cref{eq:res-bd}.

 The point of departure of the proof of \cref{thm:en-est} is the following energy estimate, a simple extension of  \cref{lem:single well gap}:
\begin{lem}\label{lem:at-en-est}
For all $\lambda$ sufficiently large, if $\psi\in H^2(\RR^2)$ such that 
 $\left\langle \vf^\lambda_0,\psi\right\rangle=0$, then 
\begin{equation}
 \left\langle \psi,(h^\lambda-e^\lambda_0)\psi\right\rangle \ge C_{\rm gap} \|\psi\|^2,
 \label{at-en-est}\end{equation}
 where $C_{\rm gap} \sim 1$. Analogously,  if $\psi\in H^2(\RR^2)$ is such that 
 $\left\langle \vf^\lambda_d,\psi\right\rangle=0$, then 
 \begin{equation}
 \left\langle \psi,(h_d^\lambda-e^\lambda_0)\psi\right\rangle \ge C_{\rm gap} \|\psi\|^2,
 \label{tat-en-est}\end{equation}
 \end{lem}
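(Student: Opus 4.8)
The plan is to read off \eqref{at-en-est} from the standing hypotheses and to obtain \eqref{tat-en-est} from it by conjugating with a magnetic translation.

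For \eqref{at-en-est}: this is precisely the content of \cref{lem:single well gap} (equivalently item (v3) of \cref{ass:radial monotone cpt supp for v}), so nothing new needs to be proved; the constant $C_{\rm gap}$ is the one furnished there, and under \cref{ass:radial monotone cpt supp for v} it is bounded away from zero for all $\lambda$ large. If $v_0$ has a non-degenerate minimum one in fact has $C_{\rm gap}\sim\lambda$ by \cref{eq:gs-asym-alt}, but the weaker $C_{\rm gap}\sim1$ is all that is used.

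For \eqref{tat-en-est}: first I would recall from \cref{sec:m-trans} the relevant properties of the magnetic translation $\hat{R}^d = e^{\ii x\cdot\lambda A d}\,e^{-\ii d\cdot P}$. It is unitary on $L^2(\RR^2)$; it restricts to a bounded isomorphism of $H^2(\RR^2)$ (the shift $e^{-\ii d\cdot P}$ is isometric on every $H^s$, and multiplication by the plane wave $e^{\ii x\cdot\lambda A d}$ is bounded on $H^s$ since that function and all its derivatives are bounded); it commutes with the magnetic kinetic energy $(P-\lambda Ax)^2$; and it intertwines multiplication by $v_0$ with multiplication by $v_d$. Consequently $h_d^\lambda=\hat{R}^d\,h^\lambda\,(\hat{R}^d)^{\ast}$ and $\vf_d^\lambda=\hat{R}^d\vf_0^\lambda$. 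Now let $\psi\in H^2(\RR^2)$ satisfy $\langle\vf_d^\lambda,\psi\rangle=0$ and set $\widetilde\psi:=(\hat{R}^d)^{\ast}\psi\in H^2(\RR^2)$. By unitarity $\langle\vf_0^\lambda,\widetilde\psi\rangle=\langle\hat{R}^d\vf_0^\lambda,\psi\rangle=\langle\vf_d^\lambda,\psi\rangle=0$, so $\widetilde\psi$ is admissible in \eqref{at-en-est}. Applying \eqref{at-en-est} to $\widetilde\psi$ and undoing the conjugation,
\[ \langle\psi,(h_d^\lambda-e_0^\lambda)\psi\rangle = \big\langle\hat{R}^d\widetilde\psi,\ \hat{R}^d(h^\lambda-e_0^\lambda)(\hat{R}^d)^{\ast}\,\hat{R}^d\widetilde\psi\big\rangle = \langle\widetilde\psi,(h^\lambda-e_0^\lambda)\widetilde\psi\rangle \ge C_{\rm gap}\|\widetilde\psi\|^2 = C_{\rm gap}\|\psi\|^2, \]
the last equality again by unitarity. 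This is \eqref{tat-en-est} with the same constant $C_{\rm gap}$.

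I do not expect a genuine obstacle here: the argument is a single unitary conjugation, and the only structural input beyond \cref{lem:single well gap} is the commutation $[\hat{R}^d,(P-\lambda Ax)^2]=0$, already established in \cref{sec:m-trans}. The one point worth a line of care is the function-space bookkeeping — that $\widetilde\psi$ lies in the same regularity class as $\psi$ (stated as $H^2(\RR^2)$, equivalently the form domain on which the quadratic form of $h^\lambda$ is defined) — which is immediate because $\hat{R}^d$ is a bounded isomorphism of $H^2(\RR^2)$ that commutes with the magnetic Laplacian.
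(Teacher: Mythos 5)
Your proposal is correct and matches what the paper intends: the first inequality is literally \cref{lem:single well gap} (item (v3) of \cref{ass:radial monotone cpt supp for v}), and the second follows by conjugating with the unitary magnetic translation $\hat{R}^d$, using $h_d^\lambda=\hat{R}^d h^\lambda(\hat{R}^d)^\ast$ and $\vf_d^\lambda=\hat{R}^d\vf_0^\lambda$ exactly as in \cref{sec:m-trans}. The paper offers no further detail (it calls the lemma ``a simple extension'' of the assumption), and your function-space remarks only make the argument more complete.
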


 The strategy is now to build up the energy estimate for the magnetic double well operator, $H^\lambda$, from
  the atomic energy estimates for $h^\lambda$ and $h^\lambda_d$ by a spatial-localization strategy. An arbitrary $\psi\in H^2(\RR^2)$ is, via a spatial partition of unity, decomposed
  as  a superposition of: 
  (i) a function concentrated mainly within the support of the atomic potential, $v_0$, (ii) a function concentrated mainly within the support of the translated atomic potential, $v_d$, and
(iii) a function which is concentrated outside the supports of $v_0$ and $v_d$. 

Evaluating $\left\langle \psi,\HH^\lambda\psi\right\rangle$ on $\psi$, decomposed in this manner, gives
 rise to terms which are controlled either by (i) an energy estimate of type \cref{at-en-est} for functions concentrated
  on the support of $v_0$, (ii) energy estimate  of type \cref{tat-en-est} for functions concentrated
  on the support of $v_d$
  or (iii) an energy estimate for the shifted Landau Hamiltonian $(P-\lambda Ax)^2 - e^\lambda\approx (P-\lambda Ax)^2 + |v_{min}|\lambda^2$ for functions concentrated
  on the complement of the supports of $v_0$ and $v_d$. 
  
  We refer to \cite[Section 9, Lemma 9.1 through Proposition 9.4]{FLW17_doi:10.1002/cpa.21735} for detailed proofs in the context of the strong binding
  regime for graphene-like periodic (non-magnetic) Schr\"odinger operators and to \cite[Section 5]{Shapiro_Weinstein_2020} for a crystal in a constant magnetic field.
  
   The necessary adaptations to handle
  the constant magnetic field are not difficult to implement. In particular, the proofs require a magnetic version of integration by parts (IMS localization formula; see e.g. \cite{AIHPA_1983__38_3_295_0}):
 \begin{lem}\label{lem:magnetic IMS}
 		We have for any $\psi,\eta$, $$ (P-\lambda A x)^2 \eta\psi = \eta (P-\lambda A x)^2\psi +\psi P^2 \eta + 2 (P\eta) \cdot (P-\lambda A x) \psi\,.   $$
 	\end{lem}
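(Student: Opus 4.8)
The identity is purely algebraic --- it is the Leibniz (product) rule applied twice, exploiting that the vector potential enters only through a multiplication operator. The plan is as follows. Write $D_j := P_j - \lambda(Ax)_j$ for the $j$-th component of the magnetic momentum, so that $(P-\lambda Ax)^2 = \sum_j D_j^2$, and recall $P_j = \frac{1}{i}\partial_j$.

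First I would record the basic commutator. Since $(Ax)_j$ and $\eta$ are both multiplication operators, $[(Ax)_j,\eta]=0$, so $[D_j,\eta] = [P_j,\eta]$, which by the product rule is multiplication by the function $P_j\eta = \frac{1}{i}\partial_j\eta$; equivalently
\begin{equation*} D_j(\eta\psi) = \eta\, D_j\psi + (P_j\eta)\,\psi \,. \end{equation*}
Applying $D_j$ once more and using the product rule on each of the two resulting terms,
\begin{equation*} D_j^2(\eta\psi) = \eta\, D_j^2\psi + (P_j\eta)(D_j\psi) + (P_j\eta)(D_j\psi) + (P_j^2\eta)\,\psi = \eta\, D_j^2\psi + 2(P_j\eta)(D_j\psi) + (P_j^2\eta)\,\psi \,. \end{equation*}
Summing over $j$, and noting $\sum_j P_j^2\eta = -\Delta\eta = P^2\eta$ as well as $\sum_j (P_j\eta)(D_j\psi) = (P\eta)\cdot(P-\lambda Ax)\psi$, yields exactly the claimed formula.

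There is no real obstacle here beyond bookkeeping; the only points worth stressing are that $P\eta$ and $P^2\eta$ denote the multiplication operators associated with the functions $\frac{1}{i}\nabla\eta$ and $-\Delta\eta$ (no magnetic contribution survives in these terms, precisely because $[(Ax)_j,\eta]=0$), and that all the manipulations above are legitimate in the only setting in which the lemma is invoked, namely $\eta$ a smooth cutoff / partition-of-unity function and $\psi\in H^2(\RR^2)$.
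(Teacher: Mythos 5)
Your proof is correct and is exactly the standard double application of the Leibniz rule that the paper has in mind; the paper itself states this lemma without proof (merely citing the IMS localization literature), and your observation that the vector potential, being a multiplication operator, commutes with $\eta$ so that only $P_j\eta$ and $P_j^2\eta$ survive is the whole content of the argument. Nothing is missing.
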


 \bigskip

\begingroup
\let\itshape\upshape
\printbibliography

@article{AIHPA_1983__38_3_295_0,
	author = {Simon, Barry},
	title = {Semiclassical analysis of low lying eigenvalues. I. Non-degenerate minima : asymptotic expansions},
	journal = {Annales de l'I.H.P. Physique th\'eorique},
	publisher = {Gauthier-Villars},
	volume = {38},
	number = {3},
	year = {1983},
	pages = {295-308},
	zbl = {0526.35027},
	mrnumber = {708966},
	language = {en},
	url = {http://www.numdam.org/item/AIHPA_1983__38_3_295_0}
}

@article{MATSUMOTO1995168,
	title = "Semiclassical Asymptotics of Eigenvalues for Schrödinger Operators with Magnetic Fields",
	journal = "Journal of Analysis",
	volume = "129",
	number = "1",
	pages = "168 - 190",
	year = "1995",
	issn = "0022-1236",
	doi = "https://doi.org/10.1006/jfan.1995.1047",
	url = "http://www.sciencedirect.com/science/article/pii/S0022123685710476",
	author = "H. Matsumoto",
	abstract = "We consider Schrödinger operators with magnetic fields on a two-dimensional compact manifold or on R2. The purpose is to study the semiclassical asymptotics of the eigenvalues by two different methods. We obtain some facts on the harmonic oscillators under uniform magnetic fields and we also discuss them."
}

@article{Simon_1984_10.2307/2007072,
	ISSN = {0003486X},
	URL = {http://www.jstor.org/stable/2007072},
	abstract = {We discuss the leading asymptotics of eigenvalue splittings of - 1/2 Δ + λ2V in the limit as λ → ∞, and where V is a non-negative potential with several zeros. For example, if E0(λ), E1(λ) are the two lowest eigenvalues in a situation where V has precisely two zeros, a and b, related by a symmetry, then $\lim_{\lambda\rightarrow\infty} - (\lambda)^{-1}\ln\lbrack E_1(\lambda) - E_0(\lambda) \rbrack$ is given as the distance from a to b in a certain Riemann metric.},
	author = {Barry Simon},
	journal = {Annals of Mathematics},
	number = {1},
	pages = {89--118},
	publisher = {Annals of Mathematics},
	title = {Semiclassical Analysis of Low Lying Eigenvalues, II. Tunneling},
	volume = {120},
	year = {1984}
}

@book{barrysimon2004,
	Author = {Barry Simon},
	title = {Functional Integration And Quantum Physics ({AMS} Chelsea Publishing)},
	description = {Functional Integration And Quantum Physics ({AMS} Chelsea Publishing) (Book, 2004)},
	publisher = {Chelsea Pub Co},
	interhash = {c591d7d517da199bbc03eb77bae05a8a},
	intrahash = {2956082bf8732b2203173501ac1f9e96},
	year = {2004},
	month = {dec},
	isbn = {0821835823},
	url = {https://www.xarg.org/ref/a/0821835823/}
}

@book{Landau_Lifshitz_vol_3,
	Author = {L D Landau and E.M. Lifshitz},
	Title = {Quantum Mechanics Non-Relativistic Theory, Third Edition: Volume 3 (Course of Theoretical Physics) (Vol. 3)},
	Publisher = {Butterworth-Heinemann},
	Year = {1981},
	ISBN = {0080291406},
	URL = {https://www.amazon.com/Quantum-Mechanics-Non-Relativistic-Theory-Third/dp/0080291406?SubscriptionId=AKIAIOBINVZYXZQZ2U3A&tag=chimbori05-20&linkCode=xm2&camp=2025&creative=165953&creativeASIN=0080291406}
}

@book{Coleman_aspects_of_symmetry,
	Author = {Sidney Coleman},
	Title = {Aspects of Symmetry: Selected Erice Lectures},
	Publisher = {Cambridge University Press},
	Year = {1988},
	ISBN = {0521318270},
	URL = {https://www.amazon.com/Aspects-Symmetry-Selected-Erice-Lectures/dp/0521318270?SubscriptionId=AKIAIOBINVZYXZQZ2U3A&tag=chimbori05-20&linkCode=xm2&camp=2025&creative=165953&creativeASIN=0521318270}
}

@article{Helffer_Sjostrand_1984,
	author = {   B.   Helffer  and    J.   Sjostrand },
	title = {Multiple wells in the semi-classical limit I},
	journal = {Communications in Partial Differential Equations},
	volume = {9},
	number = {4},
	pages = {337-408},
	year  = {1984},
	publisher = {Taylor & Francis},
	doi = {10.1080/03605308408820335},
	
	URL = { 
	https://doi.org/10.1080/03605308408820335
	
	},
	eprint = { 
	https://doi.org/10.1080/03605308408820335
	
	}
	
}

@article{Helffer_Sjostrand_1987_magnetic_ASNSP_1987_4_14_4_625_0,
	author = {Helffer, Bernard and Sj\"ostrand, Johannes},
	title = {Effet tunnel pour l'\'equation de Schr\"odinger avec champ magn\'etique},
	journal = {Annali della Scuola Normale Superiore di Pisa - Classe di Scienze},
	publisher = {Scuola normale superiore},
	volume = {4e s{\'e}rie, 14},
	number = {4},
	year = {1987},
	pages = {625-657},
	zbl = {0699.35205},
	mrnumber = {963493},
	language = {fr},
	url = {http://www.numdam.org/item/ASNSP_1987_4_14_4_625_0}
}

@Article{Bruning2013,
	author={Br{\"u}ning, J.
	and Dobrokhotov, S. Yu
	and Nekrasov, R. V.},
	title={Splitting of lower energy levels in a quantum double well in a magnetic field and tunneling of wave packets in Nanowires},
	journal={Theoretical and Mathematical Physics},
	year={2013},
	volume={175},
	number={2},
	pages={620-636},
	abstract={We consider the problem of the splitting of lower eigenvalues of the two-dimensional Schr{\"o}dinger operator with a double-well-type potential in the presence of a homogeneous magnetic field. The main result is the observation that the partial Fourier transformation takes the operator under study to a Schr{\"o}dingertype operator with a (new) double-well-type potential but already without any magnetic field. We use this observation to investigate the influence of the magnetic field on the tunneling effects. We discuss two methods for calculating the splitting of lower eigenvalues: based on the instanton and based on the so-called libration. We use the obtained result to study the tunneling of wave packets in parallel quantum nanowires in a constant magnetic field.},
	issn={1573-9333},
	doi={10.1007/s11232-013-0051-z},
	url={https://doi.org/10.1007/s11232-013-0051-z}
}

@article{Zak_1964_PhysRev.134.A1602,
	title = {Magnetic Translation Group},
	author = {Zak, J.},
	journal = {Phys. Rev.},
	volume = {134},
	issue = {6A},
	pages = {A1602--A1606},
	numpages = {0},
	year = {1964},
	month = {Jun},
	publisher = {American Physical Society},
	doi = {10.1103/PhysRev.134.A1602},
	url = {https://link.aps.org/doi/10.1103/PhysRev.134.A1602}
}

@article{Matsumoto_1994,
	author = { Hiroyuki   Matsumoto },
	title = {Semiclassical asymptotics of eigenvalue distributions for {S}chrödinger operators with magnetic fields},
	journal = {Communications in Partial Differential Equations},
	volume = {19},
	number = {5-6},
	pages = {719-759},
	year  = {1994},
	publisher = {Taylor & Francis},
	doi = {10.1080/03605309408821033},
	
	URL = { 
	https://doi.org/10.1080/03605309408821033
	
	},
	eprint = { 
	https://doi.org/10.1080/03605309408821033
	
	}
	
}

@book{Temme_2014_asymp_methods_For_integrals,
	Author = {Nico M Temme},
	Title = {Asymptotic Methods for Integrals (Series in Analysis Book 6)},
	Publisher = {WSPC},
	Year = {2014},
	ISBN = {},
	URL = {https://www.amazon.com/Asymptotic-Methods-Integrals-Analysis-Book-ebook/dp/B00YGEAV3C?SubscriptionId=AKIAIOBINVZYXZQZ2U3A&tag=chimbori05-20&linkCode=xm2&camp=2025&creative=165953&creativeASIN=B00YGEAV3C}
}

@article{nakamura1990,
	Author = {Nakamura, S. and Bellissard, J.},
	Fjournal = {Communications in Mathematical Physics},
	Journal = {Comm. Math. Phys.},
	Number = {2},
	Pages = {283--305},
	Publisher = {Springer},
	Title = {Low energy bands do not contribute to quantum Hall effect},
	Url = {https://projecteuclid.org:443/euclid.cmp/1104200837},
	Volume = {131},
	Year = {1990},
	Bdsk-Url-1 = {https://projecteuclid.org:443/euclid.cmp/1104200837}}

@Article{Nakamura1999,
	author={Nakamura, Shu},
	title={Tunneling Estimates for Magnetic Schr{\"o}dinger Operators},
	journal={Communications in Mathematical Physics},
	year={1999},
	volume={200},
	number={1},
	pages={25-34},
	abstract={We study the behavior of eigenfunctions in the semiclassical limit for Schr{\"o}dinger operators with a simple well potential and a (non-zero) constant magnetic field. We prove an exponential decay estimate on the low-lying eigenfunctions, where the exponent depends explicitly on the magnetic field strength.},
	issn={1432-0916},
	doi={10.1007/s002200050520},
	url={https://doi.org/10.1007/s002200050520}
}

@article{matsumoto1990,
	author = "Matsumoto, Hiroyuki",
	doi = "10.2969/jmsj/04240677",
	fjournal = "Journal of the Mathematical Society of Japan",
	journal = "J. Math. Soc. Japan",
	month = "10",
	number = "4",
	pages = "677--689",
	publisher = "Mathematical Society of Japan",
	title = "The short time asymptotics of the traces of the heat kernels for the magnetic Schrödinger operators",
	url = "https://doi.org/10.2969/jmsj/04240677",
	volume = "42",
	year = "1990"
}

@Article{Erdos1996,
	author="Erd{\H{o}}s, L.",
	title="Gaussian decay of the magnetic eigenfunctions",
	journal="Geometric {\&} Functional Analysis GAFA",
	year="1996",
	month="Mar",
	day="01",
	volume="6",
	number="2",
	pages="231--248",
	abstract="We investigate whether the eigenfunctions of the two-dimensional magnetic Schr{\"o}dinger operator have a Gaussian decay of type exp(−Cx2) at infinity (the magnetic field is rotationally symmetric). We establish this decay if the energy (E) of the eigenfunction is below the bottom of the essential spectrum (B), and if the angular Fourier components of the external potential decay exponentially (real analyticity in the angle variable). We also demonstrate that almost the same decay is necessary. The behavior ofC in the strong field limit and in the small (B−E) limit is also studied.",
	issn="1420-8970",
	doi="10.1007/BF02247886",
	url="https://doi.org/10.1007/BF02247886"
}

@article{Nakamura_doi:10.1080/03605309608821214,
	author = { Shu   Nakamura },
	title = {Gaussian decay estimates for the eigenfunctions of magnetic {S}chrödinger operators},
	journal = {Communications in Partial Differential Equations},
	volume = {21},
	number = {5-6},
	pages = {993-1006},
	year  = {1996},
	publisher = {Taylor & Francis},
	doi = {10.1080/03605309608821214},
	
	URL = { 
	https://doi.org/10.1080/03605309608821214
	
	},
	eprint = { 
	https://doi.org/10.1080/03605309608821214
	
	}
	
}

@article{FLW17_doi:10.1002/cpa.21735,
	author = {Fefferman, Charles L. and Lee-Thorp, James P. and Weinstein, Michael I.},
	title = {Honeycomb Schr\"odinger Operators in the Strong Binding Regime},
	journal = {Communications on Pure and Applied Mathematics},
	
	volume = {71},
	number = {6},
	pages = {1178-1270},
	doi = {10.1002/cpa.21735},
	url = {https://onlinelibrary.wiley.com/doi/abs/10.1002/cpa.21735},
	eprint = {https://onlinelibrary.wiley.com/doi/pdf/10.1002/cpa.21735},
	abstract = {In this article, we study the Schrödinger operator for a large class of periodic potentials with the symmetry of a hexagonal tiling of the plane. The potentials we consider are superpositions of localized potential wells, centered on the vertices of a regular honeycomb structure corresponding to the single electron model of graphene and its artificial analogues. We consider this Schrödinger operator in the regime of strong binding, where the depth of the potential wells is large. Our main result is that for sufficiently deep potentials, the lowest two Floquet-Bloch dispersion surfaces, when appropriately rescaled, converge uniformly to those of the two-band tight-binding model (Wallace, 1947 [56]). Furthermore, we establish as corollaries, in the regime of strong binding, results on (a) the existence of spectral gaps for honeycomb potentials that break 𝒫𝒯 symmetry and (b) the existence of topologically protected edge states—states that propagate parallel to and are localized transverse to a line defect or “edge”—for a large class of rational edges, and that are robust to a class of large transverse-localized perturbations of the edge. We believe that the ideas of this article may be applicable in other settings for which a tight-binding model emerges in an extreme parameter limit.© 2017 Wiley Periodicals, Inc.},
	year = {2018}
}

@article{Schilder_66_10.2307/1994588,
	ISSN = {00029947},
	URL = {http://www.jstor.org/stable/1994588},
	author = {M. Schilder},
	journal = {Transactions of the American Mathematical Society},
	number = {1},
	pages = {63--85},
	publisher = {American Mathematical Society},
	title = {Some Asymptotic Formulas for Wiener Integrals},
	volume = {125},
	year = {1966}
}

@article{Raikov_Warzel_2002,
	author = {Raikov, G. and Warzel, S.},
	title = {Quasi-classical versus non-classical spectral asymptotics for  magnetic {S}chrödinger operators with decreasing electric potentials},
	journal = {Reviews in Mathematical Physics},
	volume = {14},
	number = {10},
	pages = {1051-1072},
	year = {2002},
	doi = {10.1142/S0129055X02001491},
	
	URL = { 
	https://doi.org/10.1142/S0129055X02001491
	
	},
	eprint = { 
	https://doi.org/10.1142/S0129055X02001491
	
	}
	,
	abstract = { We consider the Schrödinger operator H(V) on L2 (ℝ2) or L2(ℝ3) with constant magnetic field, and electric potential V which typically decays at infinity exponentially fast or has a compact support. We investigate the asymptotic behaviour of the discrete spectrum of H(V) near the boundary points of its essential spectrum. If the decay of V is Gaussian or faster, this behaviour is non-classical in the sense that it is not described by the quasi-classical formulas known for the case where V admits a power-like decay. }
}

@misc{Shapiro_Weinstein_2020,
	Author = {Jacob Shapiro and Michael I. Weinstein},
	Title = {Tight-Binding Reduction and Topological Equivalence in Strong Magnetic Fields},
	Year = {2020},
	Eprint = {arXiv:2010.12097},
}

@Book{erdelyi1956asymptotic,
	author = {Erdélyi, Arthur},
	title = {Asymptotic expansions},
	publisher = {Dover Publications},
	year = {1956},
	address = {New York},
	isbn = {978-0486603186}
}

@misc{1912.04035,
	Author = {Virginie Bonnaillie-Noël and Frédéric Hérau and Nicolas Raymond},
	Title = {Purely magnetic tunneling effect in two dimensions},
	Year = {2019},
	Eprint = {arXiv:1912.04035},
}

@article{Bellissard:1987dy,
	author = "Bellissard, Jean",
	title = "{C* Algebras in solid state physics: 2-D electrons in a uniform magnetic field}",
	reportNumber = "CPT-87-P-2047",
	month = "9",
	year = "1987"
}

@article{PhysRevB.14.2239,
	Author = {Hofstadter, Douglas R.},
	Doi = {10.1103/PhysRevB.14.2239},
	Issue = {6},
	Journal = {Phys. Rev. B},
	Month = {Sep},
	Numpages = {0},
	Pages = {2239--2249},
	Publisher = {American Physical Society},
	Title = {Energy levels and wave functions of Bloch electrons in rational and irrational magnetic fields},
	Url = {https://link.aps.org/doi/10.1103/PhysRevB.14.2239},
	Volume = {14},
	Year = {1976},
	Bdsk-Url-1 = {https://link.aps.org/doi/10.1103/PhysRevB.14.2239},
	Bdsk-Url-2 = {https://doi.org/10.1103/PhysRevB.14.2239}}

@Book{abramowitz+stegun,
	author    = "Milton {Abramowitz} and Irene A. {Stegun}",
	title     = "Handbook of Mathematical Functions with
	Formulas, Graphs, and Mathematical Tables",
	publisher = "Dover",
	year      =  1964,
	address   = "New York City",
	edition   = "ninth Dover printing, tenth GPO printing"
}
\endgroup
\end{document}